\pdfoutput=1
\documentclass[11pt]{article}

\usepackage[margin=1in]{geometry}
\usepackage{amsmath,amssymb,mathtools}
\usepackage{cases}
\usepackage{amsthm}
\usepackage{aliascnt}
\usepackage{booktabs}
\usepackage{enumitem}
\usepackage[dvipsnames]{xcolor}
\usepackage{tikz}
\usepackage{microtype}
\usepackage[hidelinks]{hyperref}
\usepackage[nameinlink,noabbrev]{cleveref}

\hypersetup{
  pdftitle={Minimum Cardinalities of Multipartite Unextendible Product Bases},
  pdfsubject={A complete classification of minimum-cardinality unextendible product bases},
  pdfkeywords={unextendible product basis, bound entanglement, orthogonality graph, Fourier matrix, perfect matching}
}

\newtheorem{theorem}{Theorem}[section]
\newaliascnt{lemma}{theorem}
\newtheorem{lemma}[lemma]{Lemma}
\aliascntresetthe{lemma}
\newaliascnt{proposition}{theorem}
\newtheorem{proposition}[proposition]{Proposition}
\aliascntresetthe{proposition}
\newaliascnt{corollary}{theorem}

\aliascntresetthe{corollary}
\newtheorem{definition}{Definition}[section]

\newtheorem{claim}{Claim}

\crefname{theorem}{theorem}{theorems}
\Crefname{theorem}{Theorem}{Theorems}
\crefname{lemma}{lemma}{lemmas}
\Crefname{lemma}{Lemma}{Lemmas}
\crefname{proposition}{proposition}{propositions}
\Crefname{proposition}{Proposition}{Propositions}
\crefname{corollary}{corollary}{corollaries}
\Crefname{corollary}{Corollary}{Corollaries}
\crefname{definition}{definition}{definitions}
\Crefname{definition}{Definition}{Definitions}
\crefname{remark}{remark}{remarks}
\Crefname{remark}{Remark}{Remarks}

\newcommand{\C}{\mathbb C}
\newcommand{\R}{\mathbb R}
\newcommand{\Z}{\mathbb Z}

\newcommand{\Hcal}{\mathcal H}
\newcommand{\ket}[1]{\lvert #1\rangle}
\newcommand{\inner}[2]{\langle #1\mid #2\rangle}
\newcommand{\spann}{\operatorname{span}}
\newcommand{\one}{\mathbf 1}
\newcommand{\bic}{\operatorname{bc}}
\newcommand{\subjclass}[2][]{\gdef\papersubjclass{#2}}
\newcommand{\keywords}[1]{\gdef\paperkeywords{#1}}

\title{Minimum Cardinalities of Multipartite Unextendible Product Bases}
\author{Chenhao Wang\\
\small Beijing Normal University-Zhuhai ~\& ~BNBU}
\date{}
\subjclass[2020]{81P40, 05C50, 05C70, 11B75}
\keywords{unextendible product basis, bound entanglement, orthogonality graph}

\begin{document}

\maketitle

\begin{abstract}
In quantum information theory, the state space of a multipartite quantum
system is modeled by a tensor product.  In the tensor-product space
$\C^{d_1}\otimes\cdots\otimes\C^{d_p}$, a nonzero vector
is a \emph{product state} if it can be written as
$\ket{\varphi_1}\otimes\cdots\otimes\ket{\varphi_p}$ with
$\ket{\varphi_j}\in\C^{d_j}\setminus\{0\}$.  An \emph{unextendible product
basis} (UPB) is a finite family of pairwise orthogonal product states such
that no nonzero product state is orthogonal to all of them.
UPBs play a key role in investigating quantum
entanglement and nonlocal phenomena.  Finding a
smallest UPB is a natural extremal problem: it asks how few pairwise
orthogonal product states suffice to prevent any further product state from
being added. 
The general minimum-size problem for UPBs has been studied for over two decades since the seminal work of  Alon and Lov\'asz. 
%The study of small UPBs began with Bennett et al., and the general minimum-size problem was first treated systematically by Alon and Lov\'asz.
 For local dimensions $d_1,\ldots,d_p\ge2$, let $f_m(d_1,\ldots,d_p)$ be the minimum
cardinality of a UPB and let
\(f_{LB}(d_1,\ldots,d_p)=1+\sum_{j=1}^{p}(d_j-1)\) be the natural lower bound.  Alon and Lov\'asz determined exactly when
\(f_m\) attains the lower bound \(f_{LB}\), but the obstructed multipartite cases remained open in
general. 

We prove a
stabilization theorem: for every non-all-qubit system with
$p\ge3$, whenever parity prevents the natural lower bound $f_{LB}$ from being
attained, the true minimum is exactly $f_{LB}+1$.  Equivalently, if the number of even local
dimensions is positive and even, and at least one local dimension is greater
than two, then
\(f_m(d_1,\ldots,d_p)=f_{LB}(d_1,\ldots,d_p)+1\).
The proof is built on a unified graph-theoretic framework.
%Together with the known bipartite and all-qubit classifications,
Our result, together with earlier work, settles the minimum-cardinality
problem for UPBs in all finite quantum systems.
\end{abstract}

% \newpage
% \tableofcontents

\section{Introduction}\label{sec:intro}

Consider the tensor-product space
\(
 \Hcal=\C^{d_1}\otimes\cdots\otimes\C^{d_p},
\)
where $\C^{d_j}$ is the $j$th local space.  A local space is
called a \emph{qubit} when $d_j=2$. A nonzero vector
$\ket{\varphi}\in\Hcal$ is a \emph{product state} if it admits a factorization
\[
 \ket{\varphi}
 =\ket{\varphi_1}\otimes\cdots\otimes\ket{\varphi_p},
 \qquad
 \ket{\varphi_j}\in\C^{d_j}\setminus\{0\}.
\]
A vector admitting no such factorization is called \emph{entangled}.  An
\emph{unextendible product basis} (UPB) is a finite family of pairwise
orthogonal product states such that no nonzero product state is orthogonal to
every member of the family.  Equivalently, it is an orthogonal family of
rank-one tensors that cannot be enlarged by another rank-one tensor while
preserving orthogonality.  The word ``basis'' is historical: a UPB generally
spans only a proper subspace of $\Hcal$, rather than forming a linear basis of
the whole space.  Its orthogonal complement contains no nonzero product
state; equivalently, every nonzero vector in that complement is entangled.
Thus the definition combines the product-state requirement with orthogonality
and maximality conditions, naturally leading to questions in linear algebra
and combinatorics.

Originating in quantum information theory, UPBs have become
fundamental objects in the study of quantum entanglement and nonlocality.  Their orthogonal complements give standard examples of
bound entanglement, and the UPB states themselves illustrate ``nonlocality
without entanglement'': although each state is a product state and the states
are mutually orthogonal, spatially separated parties may be unable to
distinguish them perfectly using only local operations and classical
communication~\cite{BennettEtAl1999,DiVincenzoEtAl2003,DeRinaldis2004,Cohen2008}.
From a computational viewpoint, state discrimination asks how much
information a measurement procedure can recover from a quantum encoding.
UPBs give finite instances in which a global procedure succeeds, whereas a
distributed protocol restricted to local quantum operations and classical
communication can fail.  They therefore provide a useful setting for studying
the capabilities and limitations of distributed quantum protocols.  UPBs also
appear in the study of subspaces containing no product vectors, Bell
inequalities, and quantum networks~\cite{Parthasarathy2004,AugusiakEtAl2011,DuanXinYing2010}.
These applications motivate the problem, while the arguments below use only
graph theory and algebra.

The basic question is how few product states can form a UPB, since smaller
UPBs yield larger entangled complements and more economical constructions for
the applications above.  For a tensor-product system
$\C^{d_1}\otimes\cdots\otimes\C^{d_p}$ with
local dimensions $d_1,\ldots,d_p\ge2$, let
$f_m(d_1,\ldots,d_p)$ denote the minimum cardinality of a UPB.  A simple
dimension-partition argument gives the natural lower bound
\[
 f_m(d_1,\ldots,d_p)\ge
 f_{LB}(d_1,\ldots,d_p)
 :=1+\sum_{j=1}^{p}(d_j-1).
\]
In their seminal work~\cite{AlonLovasz2001}, Alon and Lov\'asz initiated the
systematic study of minimum-cardinality UPBs and determined exactly when $f_m$
attains the lower bound $f_{LB}$.  They proved that this bound fails to be
attained if and only if one of the following two conditions holds: (i) the system is bipartite ($p=2$)
and one local dimension $d_j$ equals two; or (ii) $f_{LB}$ is odd and at least one local
dimension $d_j$ is even.

Determining the exact minimum $f_m$ in these obstructed cases has proved substantially more difficult.
  Chen and Johnston \cite{ChenJohnston2015} completed the bipartite classification and
settled a large range of multipartite systems in which one local dimension is
dominant.  Feng \cite{Feng2006} used decompositions into perfect
matchings to resolve several exceptional families, and
Johnston \cite{Johnston2013} determined the exact minimum for every all-qubit system.
More recently, new structural characterizations of the unextendibility
condition have been developed~\cite{ShiEtAl2023}, and automated procedures have
been introduced for constructing and verifying UPBs of prescribed
sizes~\cite{HanEtAl2026}.  Nevertheless, the exact minimum cardinality remained unknown for general heterogeneous
 multipartite systems in which \(f_{LB}\) is unattainable.

For $p\ge3$, the parity of the natural lower bound $f_{LB}$ is determined by
\[
 f_{LB}(d_1,\ldots,d_p)
 \equiv 1+|\{j:d_j\text{ is even}\}|\pmod2.
\]
Consequently, the Alon--Lov\'asz parity obstruction occurs exactly when the
number of even local dimensions is positive and even.  All-qubit systems ($d_j=2$ for all $j$) show
that the resulting excess above $f_{LB}$ can be larger than one; for example, $f_m(2,\ldots,2)=11$
 while $f_{LB}(2,\ldots,2)=9$ for $p=8$ qubits, resulting in an excess of $2$.  Our main result
shows that, among systems with \(p\ge3\), this larger excess is a purely all-qubit phenomenon: as soon as one
local dimension is greater than two, the parity obstruction costs exactly one
additional product state.  

\begin{theorem}\label{thm:stabilization}
Let $p\ge3$ and $d_1,\ldots,d_p\ge2$.  Suppose that
$|\{j:d_j\text{ is even}\}|$ is positive and even, and that at least one local
dimension is greater than two.  Then 
\[
 \boxed{
 f_m(d_1,\ldots,d_p)
 =
 2+\sum_{j=1}^{p}(d_j-1)
 =
 f_{LB}(d_1,\ldots,d_p)+1.
 }
\]
\end{theorem}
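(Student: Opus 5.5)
The lower bound is immediate from the Alon--Lov\'asz classification. Under the hypotheses, $f_{LB}\equiv 1+|\{j:d_j\text{ even}\}|\equiv 1\pmod 2$ is odd and some $d_j$ is even, so $f_m>f_{LB}$, i.e.\ $f_m\ge f_{LB}+1$. The whole content is therefore the construction of a UPB of size $N:=f_{LB}+1=2+\sum_j(d_j-1)$, which is even.

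I would use the standard orthogonality-graph criterion of Alon--Lov\'asz. A family of $N$ product states $\ket{\psi^{(i)}}=\bigotimes_j\ket{\psi^{(i)}_j}$ is a UPB provided two conditions hold.
\begin{enumerate}[label=(\roman*)]
\item Each edge of the complete graph $K_N$ is coloured by some party $j$ such that the corresponding vectors are orthogonal in party $j$. Let $G_j$ denote the graph of edges coloured $j$, so that $G_1\cup\cdots\cup G_p=K_N$.
\item For every partition $[N]=S_1\sqcup\cdots\sqcup S_p$ with $|S_j|\le d_j-1$, some $S_j$ satisfies $\spann\{\ket{\psi^{(i)}_j}:i\in S_j\}=\C^{d_j}$.
\end{enumerate}
Since $\sum_j(d_j-1)=N-2$, every partition has total slack exactly $2$, so condition (ii) reduces to a genericity requirement. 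Concretely, for each $j$ the vectors $\{\ket{\psi^{(i)}_j}\}$ are taken in general position, with every $d_j$ of them spanning, except as forced by orthogonality along $G_j$. The plan is to choose $G_j$ so that the orthogonality constraints in party $j$ can be realized in $\C^{d_j}$ while keeping any $d_j$ vectors, or at least those in every relevant $S_j$, spanning.

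The natural choice is to make each $G_j$ a union of $d_j-1$ perfect matchings of $K_N$. This is possible because $N$ is even and $K_N$ has a $1$-factorization into $N-1=\sum_j(d_j-1)+1$ perfect matchings. Assigning $d_j-1$ matchings to party $j$ accounts for $N-2$ of them, leaving exactly one spare matching $M_0$ to be absorbed by some party. Every party $j$ is then $(d_j-1)$-regular, or $d_j$-regular for the party taking $M_0$. Realizability in party $j$ requires an orthogonal representation of the complement of $G_j$ in dimension $d_j$ with genericity. By Lov\'asz--Saks--Schrijver, a general-position orthogonal representation in dimension $d$ exists when the complement has vertex connectivity at least $N-d$, that is, roughly when $G_j$ has maximum degree below $d_j$. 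This works for the $(d_j-1)$-regular pieces but fails for the party receiving the extra matching $M_0$.

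That party is the crux, and here I would use the hypothesis that some $d_k\ge3$. Assign $M_0$ to such a party $k$ with $d_k\ge3$. Then $G_k$ is $d_k$-regular, and for party $k$ I would give an explicit Fourier-matrix construction instead of relying on general position. Pairing vertices via $M_0$, the two endpoints of each $M_0$ edge are given orthogonal vectors built from columns of a Fourier (DFT) matrix of suitable order, so that every $d_k$ vectors still span except for a controlled family of sets. I would then verify condition (ii) by hand for partitions where $S_k$ meets such a degenerate set, using the remaining slack of $2$ to push one extra vertex into another party. The parity hypothesis enters in two places: it guarantees that $N$ is even, and the even $d_j$ are what block the Alon--Lov\'asz size-$f_{LB}$ construction. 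For odd $d_j$ the generic LSS step applies directly. The main obstacle I expect is exactly this verification for party $k$ when $d_k=3$, with all other parties qubits. There $G_k$ is $3$-regular in $\C^3$, and I would anticipate needing a case analysis on small $N$ together with a careful choice of the $1$-factorization, for instance taking $G_k$ to be a union of circulant matchings so that the Fourier structure makes all spanning checks uniform.
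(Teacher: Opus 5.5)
Your lower bound is fine, and your degree bookkeeping matches the paper's framework (\Cref{lem:one-extra,lem:bic-connect}): one exceptional party whose graph is $d_*$-regular with every $d_*+1$ vectors spanning, and ordinary parties whose graphs are $(d_j-1)$-regular and realized in general position by Lov\'asz--Saks--Schrijver. The two steps that carry the content, however, are misstated or absent.

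First, $(N-d_j)$-connectivity of $\overline{G_j}$ is not a degree condition. It is equivalent to $\bic(G_j)\le d_j$, meaning $G_j$ contains no complete bipartite subgraph (not necessarily induced) on $d_j+1$ vertices. This condition is also necessary for every $d_j$ vectors to be independent, because the two sides of such a subgraph span orthogonal subspaces. A union of $d_j-1$ matchings from an arbitrary $1$-factorization need not satisfy it. Already for a qutrit, two matchings whose union contains a $4$-cycle force two of those four vectors to be parallel. Much of the paper's appendices is spent on exactly this point. It uses very specific matchings (consecutive round-robin matchings, consecutive cross-layer shifts, cyclic annuli in a dihedral Cayley graph). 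It bounds their biclique numbers with Kneser's and Kemperman's theorems, and it splits each party's degree as $s_i+t_i$ with $s_i\ne t_i$, with a separate lemma for the one tie that is allowed.

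Second, ``orthogonal vectors from Fourier columns with a controlled family of degenerate sets'' is not a construction, and the order of choices is backwards. Graphs realizable in $\C^{d_*}$ with every $d_*+1$ vectors spanning are highly constrained. The paper therefore first builds $G_*$, using a double cone, Fourier orbits on an irreducible algebraic curve, a monomial unitary, or weighted Fourier orbits tuned by the implicit function theorem, depending on $d_*$ and the parity of $q$. Only then does it decompose $\overline{G_*}$. These graphs need not come from any $1$-factorization. At $q=5$ the qutrit graph is the Petersen graph, which has chromatic index $4$. For odd $q$, the ordinary graph $F_{h-1}^U\cup F_h^V$ given to a qutrit is a union of odd cycles. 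Separately, your condition (ii) is vacuous as written; the criterion must quantify over all partitions of $[N]$.

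More decisively, your uniform scheme provably fails on $(2,2,d)$ with $d$ odd. Each qubit receives one perfect matching with pairwise non-parallel vectors, so a product vector is orthogonal to at most one state through each qubit. Unextendibility is then equivalent to every $d+1$ of the $N=d+3$ vectors $w_i\in\C^d$ spanning. Their Gram matrix therefore has nullity exactly $3$, and it is supported on the diagonal together with the $2$-regular union $C$ of the two qubit matchings.

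Group the indices into components of the graph of nonzero off-diagonal Gram entries; these groups span mutually orthogonal subspaces. Each group must have at least three members, and since deleting any two of its members must still leave a spanning family, its Gram block has nullity at least $2$. A Hermitian matrix supported on a path with nonzero off-diagonal entries has simple eigenvalues. One supported on a cycle has eigenvalue multiplicities at most $2$, by the three-term recurrence. Hence every group is a full cycle of nullity exactly $2$, so the total nullity is even, contradicting nullity $3$.

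No choice of factorization or of exceptional vectors rescues this. The paper instead uses the theta-graph construction of \Cref{prop:22d}, in which each qubit deliberately repeats lines. It also relies on Chen--Johnston for $d_p\ge q$ and handles $q=4,5$ by explicit decompositions.
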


The theorem resolves all remaining multipartite cases not covered by the natural-bound theorem or the all-qubit classification.
 Together with the known bipartite formula, it yields the following complete formula for the minimum cardinality of a UPB in every finite-dimensional system.

\begin{theorem}[Complete formula]\label{thm:classification}
For one party, $f_m(d)=d$.  For bipartite systems, assume
$2\le d_1\le d_2$.  Then
\begin{equation}\label{eq:bipartite-formula}
 f_m(d_1,d_2)=
 \begin{cases}
 d_1d_2,&d_1=2,\\
 d_1+d_2,&d_1,d_2\ge4\text{ are both even},\\
 d_1+d_2-1,&\text{otherwise}.
 \end{cases}
\end{equation}
For $p\ge3$, the all-qubit formula is
\begin{equation}\label{eq:all-qubit-formula}
 f_m(2,\ldots,2)=
 \begin{cases}
 p+1,&p\text{ is odd},\\
 p+2,&p=4\text{ or }p\equiv2\pmod4,\\
 11,&p=8,\\
 p+4,&p\ge12,\ p\equiv0\pmod4.
 \end{cases}
\end{equation}
For every multipartite system that is not all-qubit, put
$f_{LB}=1+\sum_j(d_j-1)$.  Then
\begin{numcases}{f_m(d_1,\ldots,d_p)=}
 f_{LB}, & if $|\{j:d_j\text{ is even}\}|=0$ or is odd,
 \label{eq:unobstructed-multipartite}\\
 f_{LB}+1, & if $|\{j:d_j\text{ is even}\}|$ is positive and even.
 \label{eq:obstructed-multipartite}
\end{numcases}
\end{theorem}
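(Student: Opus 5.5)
The complete formula is mostly an assembly of known results together with \Cref{thm:stabilization}, so I would organize the proof by case and cite the appropriate source in each case.

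\emph{One party.} A product state in $\C^d$ is simply a nonzero vector. An orthogonal family of $k<d$ vectors has a nonzero orthogonal complement, and any nonzero vector in it extends the family. An orthonormal basis of size $d$ is trivially unextendible. Hence $f_m(d)=d$.

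\emph{Bipartite systems.} Formula \eqref{eq:bipartite-formula} is the classification of Chen and Johnston~\cite{ChenJohnston2015}, combined with the Alon--Lov\'asz criterion~\cite{AlonLovasz2001}.
\begin{itemize}
\item When $d_1=2$, a UPB in $\C^2\otimes\C^{d_2}$ must already be a full orthogonal product basis. Indeed, the orthogonal complement of a proper orthogonal product family in $\C^2\otimes\C^{n}$ always contains a product vector, which is the standard fact behind case (i) of Alon--Lov\'asz. This gives $f_m=2d_2$.
\item For $d_1,d_2\ge3$ with at least one of them odd, $f_{LB}=d_1+d_2-1$ is attained by Alon--Lov\'asz.
\item When both dimensions are even and at least $4$, $f_{LB}$ is odd and hence unattainable. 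The Chen--Johnston construction then gives $d_1+d_2$.
\end{itemize}
Since this row is quoted from the earlier literature, I would state it with citations rather than reprove it.

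\emph{All-qubit systems with $p\ge3$.} Formula \eqref{eq:all-qubit-formula} is Johnston's classification~\cite{Johnston2013}, and I would cite it directly. One consistency check is worth recording. For odd $p$, the number of even dimensions is $p$, which is odd, so $f_{LB}=p+1$ is attained by Alon--Lov\'asz. This agrees with the first row.

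\emph{Non-all-qubit systems with $p\ge3$.} Here I would use the parity identity $f_{LB}\equiv 1+|\{j:d_j\text{ even}\}|\pmod 2$.
\begin{itemize}
\item If the number of even $d_j$ is zero or odd, then $f_{LB}$ is odd only when there are no even dimensions. So either $f_{LB}$ is even, or it is odd with every $d_j$ odd. In both situations neither Alon--Lov\'asz obstruction applies: (i) fails because $p\ge3$, and (ii) fails by the case analysis just given. Therefore $f_m=f_{LB}$, which is \eqref{eq:unobstructed-multipartite}.
\item If the number of even $d_j$ is positive and even, then the hypotheses of \Cref{thm:stabilization} hold, since at least one $d_j>2$ because the system is not all-qubit. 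This gives \eqref{eq:obstructed-multipartite}.
\end{itemize}

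The main obstacle is not in this assembly. It is hidden inside \Cref{thm:stabilization}, specifically in the upper-bound construction of a UPB of size $f_{LB}+1$ for arbitrary heterogeneous dimensions. For the classification itself, the only delicate point is checking that the case split is exhaustive and that the Alon--Lov\'asz conditions are translated correctly through the parity identity. I would therefore verify that identity explicitly: $\sum_j(d_j-1)$ has the same parity as the number of even $d_j$.
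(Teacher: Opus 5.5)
Your proposal is correct and follows the paper's own argument. The paper also treats the complete formula as an assembly: the one-party case is immediate, the bipartite row comes from Chen--Johnston, the all-qubit row from Johnston, the unobstructed multipartite row from Alon--Lov\'asz via the parity identity $f_{LB}\equiv 1+|\{j:d_j\text{ even}\}|\pmod 2$, and the obstructed non-all-qubit row from \Cref{thm:stabilization}.
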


The new work is \Cref{thm:stabilization}, which gives
\eqref{eq:obstructed-multipartite}.  The one-party statement is immediate,
the bipartite formula  is due to Chen and
Johnston~\cite{ChenJohnston2015}, the unobstructed multipartite formula
\eqref{eq:unobstructed-multipartite} is due to Alon and
Lov\'asz~\cite{AlonLovasz2001}, and the all-qubit formula
is due to Johnston~\cite{Johnston2013}.

The complete formula also reveals a simple pattern in the excess
$f_m-f_{LB}$.  For the bipartite family $(2,d)$, the excess is $d-1$ and is
therefore unbounded.  Outside this family, the excess is at most three, and
values greater than one occur only for all-qubit systems.  
These larger
excesses are thus confined to special families.  In the general setting of
non-all-qubit multipartite systems, the minimum is always either $f_{LB}$ or
$f_{LB}+1$, and our contribution settles the previously open cases
by proving that their minimum is $f_{LB}+1$.

\paragraph{Technical overview} We develop a general graph-theoretic framework
for constructing UPBs of size $f_{LB}+1$.  One local
system of dimension $d_*$ is designated as
\emph{exceptional}, and all remaining local systems are called \emph{ordinary}.  In an
ordinary system of dimension $d_j$, we require every $d_j$ local vectors to
be linearly independent; equivalently, any codimension-one subspace (a hyperplane)
contains at most $d_j-1$ of them.  In the exceptional system, a hyperplane
may contain $d_*$ local vectors, one more than in an ordinary system.  For
each local system, define its local orthogonality graph with one vertex for
each product state, joining two vertices when the corresponding local vectors
in that system are orthogonal.  Since two
product states are orthogonal if and only if they are orthogonal in at least
one local system, the product states are pairwise orthogonal exactly when the
union of the local orthogonality graphs covers the complete graph.  This
graph-cover approach goes back to Bennett et al.~\cite{BennettEtAl1999} and
was further developed by DiVincenzo et al.~\cite{DiVincenzoEtAl2003},
Feng~\cite{Feng2006}, and Chen and Johnston~\cite{ChenJohnston2015}.  We
introduce several new constructions for the local vectors in the exceptional
system (e.g., weighted Fourier orbits, projected incidence vectors), depending on the dimension range.  The remaining edges
of the complete graph (those not in the exceptional local orthogonality graph) are split among ordinary local systems.  Additive-combinatorial
estimates bound the biclique numbers of the graphs assigned to the ordinary
systems, and the Lov\'asz--Saks--Schrijver theorem then provides local vectors
that realize the required orthogonalities and for which
every $d_j$ vectors are linearly independent.

The main difficulty is to make this framework work for every remaining tuple
of local dimensions.  Our key new perspective is to organize these cases not
by the number of even local dimensions, but by the parity of $q$, where
$2q=f_{LB}+1$.  When $q$ is even, we introduce a weighted Fourier-orbit
construction for the exceptional local vectors (\Cref{prop:even-library}) and
a uniform edge-allocation scheme based on a round-robin decomposition into
perfect matchings (\Cref{prop:even-factor}).  When $q$ is odd, we develop new
aligned Fourier constructions for the exceptional local vectors 
(\Cref{prop:odd-library}), and introduce an allocation method combining
aligned edge allocations with perfect-matching decompositions of Cayley graphs
(\Cref{prop:odd-allocation}).  Together with a separate argument for the
special low-dimensional cases, these ideas cover all remaining cases.

This paper is organized as follows.  In \Cref{sec:model}, we introduce the
basic definitions and notation.  In \Cref{sec:framework}, we develop the
graph-theoretic framework for our main constructions.  \Cref{sec:theta}
treats the special family $(2,2,d)$ with odd $d$ by a separate direct
theta-graph construction.  \Cref{sec:even,sec:odd} then apply the framework
of \Cref{sec:framework} to the cases of even and odd $q$, respectively.
Finally, \Cref{sec:outlook} discusses further questions,
while the technical proofs are deferred to the appendices.

%%%%%%%%%%%%%%%%%%%%%%%%%%%%%%%%%%%%%%%%%%%%%%%%%
\section{Preliminaries}\label{sec:model}

A nonzero vector $\ket{\varphi}\in\bigotimes_{j=1}^{p}\C^{d_j}$ is a
\emph{product state} if it can be written as
\[
 \ket{\varphi}
 =
 \ket{\varphi_1}\otimes\cdots\otimes\ket{\varphi_p},
 \qquad
 \ket{\varphi_j}\in\C^{d_j}\setminus\{0\}.
\]
Otherwise, it is called an \emph{entangled state}.  Let
\(
 \ket{0}=(1,0)^{\mathsf T}
\)
and
\(
 \ket{1}=(0,1)^{\mathsf T}
\)
denote the standard basis vectors of $\C^2$.  For example,
$\ket{0}\otimes\ket{1}$ is a product state in $\C^2\otimes\C^2$, whereas
\(
 \ket{0}\otimes\ket{0}+\ket{1}\otimes\ket{1}
\)
is entangled.  Indeed, if it were equal to
$(a\ket{0}+b\ket{1})\otimes(c\ket{0}+d\ket{1})$, comparison of coefficients
would give $ac=bd=1$ and $ad=bc=0$, which is impossible. The system $\bigotimes_{j=1}^{p}\C^{d_j}$ 
is \emph{bipartite} if $p=2$ and \emph{multipartite} if $p\ge3$.
A local system $\C^{d_j}$ is called a \emph{qubit} if its dimension $d_j$ is $2$ and a
\emph{qutrit} if $d_j=3$. 
Since local systems of dimension $1$ are trivial and permuting the tensor factors does not affect the problem, throughout
this paper we assume without loss of generality that
\(
 2\le d_1\le\cdots\le d_p.
\)

\begin{definition}[Unextendible product basis, UPB]\label{def:upb}
A finite family of product states is an \emph{unextendible product basis} if its members are pairwise orthogonal and no nonzero product state is orthogonal to all of them.
\end{definition}

The minimum cardinality of a UPB in $\bigotimes_{j=1}^{p}\C^{d_j}$ is denoted by 
$f_m(d_1,\ldots,d_p)$.  Alon and Lov\'asz~\cite{AlonLovasz2001} gave a natural lower bound 
for this quantity:
\[
 f_m(d_1,\ldots,d_p)\ge
 f_{LB}(d_1,\ldots,d_p)
 :=
 1+\sum_{j=1}^{p}(d_j-1).
\] 
Indeed, if a family contains at most $\sum_j(d_j-1)$ states, partition its indices 
into $p$ parts of sizes at most $d_j-1$, and  choose a nonzero vector orthogonal 
to the local vectors indexed by the $j$th part for each $j$; then the tensor product of these choices would be orthogonal to the whole family.
We simply write this lower bound as $f_{LB}$ when the context is clear, and write $2q=f_{LB}+1=2+\sum_j(d_j-1)$ when $f_{LB}$ is odd.
Write $[n]=\{1,2,\ldots,n\}$ for every positive integer $n$.

Given $n$ product states
\[
 \ket{\psi_i}
 =
 \ket{v_i^{(1)}}\otimes\cdots\otimes\ket{v_i^{(p)}},
 \qquad i\in[n],
\]
the \emph{local orthogonality graph} $G_j$ in local dimension $j\in[p]$ has vertex set $[n]$ and edge $ih$ when
\(
 v_i^{(j)}\perp v_h^{(j)}.
\)
The states are pairwise orthogonal exactly when their union is the complete graph $K_n$:
\begin{equation}\label{eq:cover}
 G_1\cup\cdots\cup G_p=K_n.\nonumber
\end{equation}

We recall the known results in the literature.

\begin{theorem}[Alon--Lov\'asz~\cite{AlonLovasz2001}]\label{thm:AL}
The equality
\(
 f_m(d_1,\ldots,d_p)=f_{LB}(d_1,\ldots,d_p)
\)
holds unless one of the following two cases occurs:
\begin{enumerate}
\item $p=2$ and one local dimension is two;
\item $f_{LB}$ is odd and at least one local dimension $d_j$ is even.
\end{enumerate}
In either case, the minimum is strictly greater than $f_{LB}$.
\end{theorem}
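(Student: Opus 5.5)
The upper bound is already settled by the partition argument above, so the plan has three parts. First, show that in the two exceptional cases no UPB of size exactly $f_{LB}$ can exist. Second, construct a UPB of size $n=f_{LB}$ in every other case. Third, verify unextendibility of that construction.

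\emph{Key structural lemma.} Let $n=f_{LB}$ and suppose $\{\ket{\psi_i}\}$ is a UPB of size $n$. I would show that for each $j$, every $d_j$ of the local vectors $v_i^{(j)}$ span $\C^{d_j}$. Suppose instead that some $d_j$ of them lie in a hyperplane. Assign those $d_j$ indices to party $j$, and split the remaining $n-d_j=\sum_{k\ne j}(d_k-1)$ indices into parts of size $d_k-1$ for $k\neq j$. The partition argument then yields a product state orthogonal to all $n$ states, a contradiction. Consequently a vertex $i$ has degree at most $d_j-1$ in $G_j$: its $G_j$-neighbours lie in $(v_i^{(j)})^\perp$, and general position bounds their number by $d_j-1$. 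Since $G_1\cup\dots\cup G_p=K_n$ and
\[
\sum_j(d_j-1)=n-1,
\]
every vertex has degree exactly $d_j-1$ in $G_j$, and the $G_j$ are edge-disjoint, so they decompose $K_n$.

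\emph{Obstruction (ii).} If $n$ is odd and $d_j$ is even, then $G_j$ would be $(d_j-1)$-regular, that is, of odd degree, on an odd number of vertices. This contradicts the handshake lemma.

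\emph{Obstruction (i).} Take $p=2$ and $d_1=2$, so $n=d_2+1$. The graph $G_1$ must be $1$-regular, so it is a perfect matching. Two qubit vectors orthogonal to the same vector are proportional, so each matched pair has the form $\{a,a^\perp\}$. If states $h$ and $i$ with $h\ne i$ are both non-neighbours of $i'$ in $G_1$, they would need orthogonal second factors. I expect this to give too many orthogonalities in $G_2$: a vertex $i$ would be orthogonal in $\C^{d_2}$ to the $n-2=d_2-1$ states not matched to it, and this should contradict general position on the second factor. Alternatively, I can pick a qubit vector $x$ not proportional to any $v_i^{(1)}$ or their perps. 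Then $x$ is orthogonal to nothing, so $x\otimes w$ needs $w$ orthogonal to all $n=d_2+1$ second factors, which is impossible. Instead I would take $x=(v_1^{(1)})^\perp$, which kills states $1$ and its class; the remaining at most $d_2-1$ second factors admit a common orthogonal $w$. Working out which of these two routes is cleaner is routine.

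\emph{Existence and unextendibility.} In every other case I would build a UPB of size $n$ as follows.
\begin{enumerate}
\item Choose an edge-decomposition $K_n=G_1\cup\dots\cup G_p$ with $\Delta(G_j)\le d_j-1$. This is available because either $n$ is even, so $K_n$ has a $1$-factorization, or $n$ is odd and all $d_j$ are odd, so $K_n$ decomposes into Hamiltonian cycles (Walecki). Group these factors into circulant graphs on $\Z_n$ by distance classes.
\item Invoke the Lov\'asz--Saks--Schrijver theorem. It gives a general-position orthogonal representation of $G_j$ in $\C^{d_j}$, with orthogonality required along the edges of $G_j$, provided the complement $\overline{G_j}$ is $(n-d_j)$-connected.
\item Unextendibility then follows by counting. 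Any product state $\otimes_j\ket{\varphi_j}$ has each $\ket{\varphi_j}$ orthogonal to at most $d_j-1$ of the $v_i^{(j)}$, by general position. It is therefore orthogonal to at most $n-1$ members of the family.
\end{enumerate}

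\emph{Main obstacle.} I expect the hardest step to be the connectivity condition in step~2, together with the bipartite case $d_1,d_2\ge3$ where one party must absorb many edges. I would first try circulant graphs on $\Z_n$ with consecutive distance sets. Their complements are circulants of the required degree, and I expect such circulants to have connectivity equal to their degree. The boundary cases, especially small $d_j$ and the case $p=2$, would be checked by hand.
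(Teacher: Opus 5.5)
Your necessity argument for case (ii) is correct, and it is the same count as the paper's \Cref{prop:lower}. The paper sums the degree bounds instead of first deducing exact regularity. Note also that the partition argument gives the \emph{lower} bound $f_m\ge f_{LB}$, not an upper bound. The paper proves only this parity obstruction (for $p\ge3$) and otherwise cites the theorem. So your remaining two parts have to stand on their own, and both have gaps.

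\textbf{Case (i).} Neither of your routes works as written.
\begin{itemize}
\item A single vertex $i$ whose second factor is orthogonal to $d_2-1$ others is compatible with general position, since a hyperplane may contain $d_2-1$ of the vectors.
\item The choice $x=(v_1^{(1)})^\perp$ kills only state $1$, because no other first factor is proportional to $v_1^{(1)}$. That leaves $d_2$ second factors, not $d_2-1$, and these span.
\end{itemize}
What works is to use both ends of a matching edge $\{i,i'\}$ of $G_1$. Since $G_2=K_n-G_1$, both $v_i^{(2)}$ and $v_{i'}^{(2)}$ are orthogonal to the $d_2-1$ independent vectors $v_k^{(2)}$, $k\ne i,i'$. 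Hence they are proportional, and any $d_2$ second factors containing both are dependent. Equivalently, $G_2\supseteq K_{2,n-2}$, so $\bic(G_2)=n>d_2$.

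\textbf{Existence.} This is the real content of the Alon--Lov\'asz theorem. Its main step, the LSS condition (equivalently $\bic(G_j)\le d_j$, \Cref{lem:bic-connect}), is left as an expectation. For even $n$, the construction you name does not work.
\begin{itemize}
\item $\Z_n$ has only one odd-degree distance class, $\{n/2\}$. So when three or more $d_j$ are even, e.g.\ $(2,2,2)$ or $(2,2,2,3)$, there is no decomposition into circulants of the required degrees at all.
\item Even with one even $d_j$, consecutive distance sets fail. For $(3,3,4)$ on $\Z_8$, the block containing $4$ is $\{\pm3,4\}$, so one qutrit receives $\{\pm2\}$, which is two $4$-cycles. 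Deleting the four odd vertices disconnects its complement. That complement therefore has connectivity at most $4$, below its degree $5$.
\end{itemize}

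For odd $n$ (all $d_j$ odd), consecutive annuli on $\Z_n$ do work. But proving $\bic\le d_j$ needs a genuine additive-combinatorial argument, such as Kneser's theorem applied to $Y-X$ (\Cref{lem:odd-annulus}).

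For even $n$, replace circulants by the round-robin $1$-factorization on $\{\infty\}\cup\Z_{n-1}$, and give party $j$ a cyclic interval of $d_j-1$ factors. The argument of \Cref{lem:round-robin-block} gives $\bic\le d_j$ as long as at least two factors lie outside the interval. This fails only for $p=2$ with some $d_j=2$, which is exactly case (i). With these bounds, your LSS and counting steps complete the proof.
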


\begin{theorem}[Chen--Johnston~\cite{ChenJohnston2015}]\label{thm:CJ}
The bipartite formula for $p=2$ is given in \eqref{eq:bipartite-formula}.
For multipartite systems in which the Alon--Lov\'asz lower bound is not
attainable, if $d_p\ge q$ and
\(
 \sum_{j<p}(d_j-1)\ge3,
\)
then
\(
 f_m(d_1,\ldots,d_p)=f_{LB}(d_1,\ldots,d_p)+1.
\)
\end{theorem}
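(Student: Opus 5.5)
The lower bound comes for free. In every case covered by \Cref{thm:CJ}, \Cref{thm:AL} gives $f_m>f_{LB}$: for $p=2$ this happens when one $d_j$ equals two, and for $p\ge3$ the parity obstruction applies by hypothesis. So $f_m\ge f_{LB}+1$ in the multipartite case, and the bipartite entries of \eqref{eq:bipartite-formula} that exceed $f_{LB}=d_1+d_2-1$ need a separate argument. I will not rebuild the bipartite formula in detail. For $d_1=2$ I plan a direct argument: a product state $\ket{a}\otimes\ket{b}$ with $\ket{a}\in\C^2$ can be chosen orthogonal to all states whose first factor is not parallel to a fixed vector. This forces the first-party vectors to exhaust enough directions that the family must span all of $\C^2\otimes\C^{d_2}$. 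For $d_1,d_2\ge4$ both even, I would use the Alon--Lov\'asz parity argument together with a short extra counting step showing that $d_1+d_2-1$ is impossible. The main work is the matching upper bound: constructing a UPB of size $n=f_{LB}+1=2q$.

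For the construction I would follow the framework sketched in the technical overview. Party $p$ is designated exceptional, exploiting $d_p\ge q$, and parties $1,\dots,p-1$ are ordinary. Unextendibility then reduces to a counting condition. Suppose a product state is orthogonal to all $2q$ states, and let $S_j$ be the set of indices $i$ for which its $j$th factor is orthogonal to $v^{(j)}_i$. Then every $i\in[2q]$ lies in some $S_j$. If every $d_j$ ordinary vectors are linearly independent and every hyperplane of $\C^{d_p}$ contains at most $d_p$ of the exceptional vectors, then
\[
 \textstyle\sum_j|S_j|\le\sum_{j<p}(d_j-1)+d_p=2q-1<2q,
\]
a contradiction. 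So it suffices to achieve two things: (a) the exceptional vectors in $\C^{d_p}$ with at most $d_p$ in any hyperplane, and (b) $G_1\cup\dots\cup G_p=K_{2q}$.

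For the exceptional party I would split $[2q]$ into $q$ pairs. Using Fourier-type vectors in $\C^{d_p}$ (possible because $d_p\ge q$), I would realize $G_p$ as the complete multipartite graph $K_{2,\dots,2}$: vectors in distinct pairs are orthogonal, while the two vectors of a pair are nonparallel. Each pair then spans a $2$-dimensional block, and these blocks are mutually orthogonal. This needs $2q\le 2d_p$, so I would either use non-orthogonal coordinates or allow some pairs to share blocks, depending on the size of $d_p$. Once $G_p$ is fixed, the uncovered edges form a perfect matching of $q$ edges. These edges must be distributed among the ordinary parties, and the hypothesis $\sum_{j<p}(d_j-1)\ge3$ is what makes that possible. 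Each ordinary party receives a set of matching edges, and the Lov\'asz--Saks--Schrijver theorem then supplies orthogonal representations in general position in $\C^{d_j}$ whenever the complement of the assigned graph is $(2q-d_j)$-connected. A graph consisting only of a few disjoint edges has a highly connected complement, so this condition should be easy to meet.

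The step I expect to be the main obstacle is the hyperplane condition (a). If each pair occupies its own block, a hyperplane can contain whole blocks and so too many vectors, which breaks the count. I would first try to fix this by perturbing each pair within a slightly larger ambient structure, for example by taking the pair vectors from a single Fourier orbit so that only the prescribed orthogonalities hold. I would then verify the "$\le d_p$ per hyperplane" bound by a rank computation on Vandermonde-type minors. If that fails for small $d_p$, I would move a few inter-pair edges from $G_p$ to the ordinary parties, which is exactly where the slack $\sum_{j<p}(d_j-1)\ge3$ should be spent.
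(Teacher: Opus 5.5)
\textbf{There is a genuine gap in the upper bound.} Your lower bound for $p\ge3$ is fine; it is exactly \Cref{thm:AL}. The upper-bound construction, which is the real content, cannot work. The obstruction comes from your own counting condition, not from a perturbation issue.

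Under (a) and the ordinary general-position requirement (the hypotheses of \Cref{lem:one-extra}), the $G_p$-neighbours of a vertex $i$ lie in the hyperplane $(v_i^{(p)})^\perp$. So $\deg_{G_p}(i)\le d_p$, and likewise $\deg_{G_j}(i)\le d_j-1$ for $j<p$. Covering $K_{2q}$ needs degree $2q-1=d_p+\sum_{j<p}(d_j-1)$ at every vertex, so every bound is tight. Hence $G_p$ must be exactly $d_p$-regular, each $G_j$ exactly $(d_j-1)$-regular, and the graphs edge-disjoint. But $K_{2,\dots,2}$ on $2q$ vertices is $(2q-2)$-regular, and $2q-2=d_p-1+\sum_{j<p}(d_j-1)\ge d_p+2$. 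So no choice of vectors with $G_p=K_{2,\dots,2}$ can satisfy (a).

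The dimension count confirms this. The inter-pair orthogonalities force the spans of the pairs to be mutually orthogonal, so nonparallel pairs need $2q\le d_p$, not $2q\le 2d_p$. This never holds, since $2q=d_p+1+\sum_{j<p}(d_j-1)$. Neither ``non-orthogonal coordinates'' nor ``sharing blocks'' is compatible with the orthogonalities that $G_p$ prescribes. The mirror image of the problem is that your ordinary parties receive only the $1$-regular leftover matching, while they must absorb degree $\sum_{j<p}(d_j-1)\ge3$. That is why the hypothesis $\sum_{j<p}(d_j-1)\ge3$ plays no genuine role in your plan.

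\textbf{The fallback is the whole problem, not a repair.} Moving ``a few'' inter-pair edges to the ordinary parties would require removing a $(\sum_{j<p}(d_j-1)-1)$-regular spanning subgraph from $G_p$. Concretely, you must find a $d_p$-regular graph on $2q$ vertices that is the \emph{exact} orthogonality graph of $2q$ vectors in $\C^{d_p}$ with every $d_p+1$ spanning. You must also split its complement into $(d_j-1)$-regular graphs whose complements are $(2q-d_j)$-connected, for instance via $\bic(G_j)\le d_j$ and \Cref{lem:bic-connect}. The proposal gives no construction for either ingredient, and that is where all the work lies. The paper itself does not reprove this statement; it imports it from Chen--Johnston.

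\textbf{The bipartite part is also incomplete.} For $d_1,d_2\ge4$ both even, the lower bound $d_1+d_2$ is already \Cref{thm:AL}, so no extra counting step is needed. However, you never construct a UPB of size $d_1+d_2$. The $d_1=2$ sketch does not yet prove that $\C^2\otimes\C^{d_2}$ admits no UPB smaller than a full basis.
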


The only multipartite case in which $d_p\ge q$ but
\(
\sum_{j<p}(d_j-1)<3
\)
is the tripartite family $(2,2,d_p)$, because
$d_j-1\ge1$ for every local dimension and therefore
\(
\sum_{j<p}(d_j-1)<3
\)
forces exactly two remaining parties, both of dimension two.

For $(2,2,d)$ with even $d$, we have $f_{LB}=d+2$, which is even,
so \Cref{thm:AL} already gives
\(
 f_m(2,2,d)=f_{LB}=d+2.
\)
Therefore, by \Cref{thm:AL,thm:CJ}, among the cases with
$d_p\ge q$, only $(2,2,d)$ with odd $d$ remain to be treated. We handle this
family separately in \Cref{sec:theta}, and after that we consider $d_p<q$.

We will show that the minimum UPB size is exactly $f_{LB}+1$ in all these remaining
cases.

%%%%%%%%%%%%%%%%%%%%%%%%%%%%%%%%%%%%%%%%%%%%%%
\section{A Graph-Theoretic Framework}\label{sec:framework}
In this section, 
as a warm-up we begin with the lower bound of $f_{LB}+1$ for remaining cases, 
and then provide a general framework for the upper bound proof.

The following is a direct edge-counting reformulation of the
Alon--Lov\'asz parity obstruction, and we write the proof here merely for completeness.

\begin{proposition}[Lower bound \cite{AlonLovasz2001}]\label{prop:lower}
Let $p\ge3$.  If the number of even local dimensions is positive and even, then
\[
 f_m(d_1,\ldots,d_p)
 \ge
 2+\sum_{j=1}^{p}(d_j-1)
 =
 f_{LB}+1.
\]
\end{proposition}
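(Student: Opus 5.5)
The plan is to show directly that no UPB of size exactly $n:=f_{LB}$ exists; together with the dimension-partition bound $f_m\ge f_{LB}$ recalled in \Cref{sec:model}, this gives $f_m\ge f_{LB}+1$. The strategy is to use the partition argument once more, in a refined form, to force strong rigidity on the local vectors of a hypothetical UPB of size $f_{LB}$. That rigidity will make every local orthogonality graph regular, and a degree-parity count then gives the contradiction.

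\textbf{Step 1 (rigidity).} Fix a UPB $\{\ket{\psi_i}\}_{i\in[n]}$ with $n=1+\sum_j(d_j-1)$ and a party $j$. I claim that any $d_j$ of the local vectors $v_i^{(j)}$ are linearly independent. Suppose instead that some set $S$ of $d_j$ indices has $\{v_i^{(j)}\}_{i\in S}$ spanning at most $d_j-1$ dimensions. Then there is a nonzero $\ket{\varphi_j}$ orthogonal to all of these vectors. The remaining $n-d_j=\sum_{k\ne j}(d_k-1)$ indices can be split into parts of sizes at most $d_k-1$, one part for each $k\ne j$. For each such $k$, choose a nonzero $\ket{\varphi_k}$ orthogonal to the local vectors of its part. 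The product state $\bigotimes_k\ket{\varphi_k}$ is then orthogonal to every $\ket{\psi_i}$, contradicting unextendibility.

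\textbf{Step 2 (regularity).} Fix a vertex $i$. Its neighbours in $G_j$ have local vectors lying in the hyperplane $(v_i^{(j)})^\perp$. By Step 1 there are at most $d_j-1$ of them, so $\deg_{G_j}(i)\le d_j-1$. On the other hand, pairwise orthogonality means $G_1\cup\cdots\cup G_p=K_n$, so
\[
 n-1\le\sum_j\deg_{G_j}(i)\le\sum_j(d_j-1)=n-1.
\]
Equality must hold throughout, so every $G_j$ is $(d_j-1)$-regular on $n$ vertices.

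\textbf{Step 3 (parity).} For $p\ge3$ we have $f_{LB}\equiv1+|\{j:d_j\text{ even}\}|\pmod 2$. Since the number of even local dimensions is even, $n=f_{LB}$ is odd. A $(d_j-1)$-regular graph on an odd number of vertices has degree sum $n(d_j-1)$, which must be even, so $d_j-1$ is even, i.e.\ $d_j$ is odd. This holds for every $j$, which contradicts the hypothesis that at least one $d_j$ is even.

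Step 1 is the only step with real content, and it is the step to check carefully. The point to verify is that removing $d_j$ indices leaves exactly enough room to distribute the rest among the other parties. The count $n-d_j=\sum_{k\ne j}(d_k-1)$ shows that it does.
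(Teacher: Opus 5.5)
Your proof is correct and takes essentially the same route as the paper. The same partition argument makes any $d_j$ local vectors independent, which caps each degree in $G_j$ at $d_j-1$, and handshake parity then gives the contradiction. The only cosmetic difference is the order: you first use the cover of $K_{f_{LB}}$ vertex by vertex to force each $G_j$ to be $(d_j-1)$-regular and then apply parity, whereas the paper applies parity to each even $d_j$ and sums the edge bounds globally to get at most $\binom{f_{LB}}{2}-e/2$ edges.
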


\begin{proof}
The number $f_{LB}$ is odd.  Suppose, for a contradiction, that there is a size-$f_{LB}$ UPB
\[
 \psi_i=v_i^{(1)}\otimes\cdots\otimes v_i^{(p)},
 \qquad i\in[f_{LB}].
\]
We first show that, in every subsystem $j$, any $d_j$ of the local vectors $v_i^{(j)}$ are linearly independent.  
Otherwise, there would be a set $I_j\subseteq[f_{LB}]$ of size $d_j$ such that the vectors $v_i^{(j)}$, $i\in I_j$, span a proper subspace of $\C^{d_j}$.  
We could then choose a nonzero vector $w_j\in\C^{d_j}$ orthogonal to all of them.
The number of remaining indices is
\[
 f_{LB}-d_j
 =\sum_{h\ne j}(d_h-1).
\]
Partition them into sets $I_h$, $h\ne j$, with $|I_h|=d_h-1$.  For every $h\ne j$, the vectors $v_i^{(h)}$, $i\in I_h$, span a proper subspace of $\C^{d_h}$, so choose a nonzero $w_h\in\C^{d_h}$ orthogonal to all of them.  The product vector
\(
 w_1\otimes\cdots\otimes w_p
\)
is orthogonal to $\psi_i$ through subsystem $j$ when $i\in I_j$, and through subsystem $h$ when $i\in I_h$.  It is therefore orthogonal to the entire family, contradicting unextendibility.

Now consider the local orthogonality graph $G_j$.  If a vertex $i$ had at least $d_j$ neighbors, the corresponding $d_j$ local vectors would all lie in the hyperplane $(v_i^{(j)})^\perp$ and hence would be linearly dependent.  Thus
the maximum degree of $G_j$ is at most $d_j-1$.
If $d_j$ is even, both $f_{LB}$ and $d_j-1$ are odd.  Since the sum of the vertex degrees is even, the degree bound gives
\[
 |E(G_j)|\le\frac{f_{LB}(d_j-1)-1}{2}.
\]
If $d_j$ is odd, it gives
\[
 |E(G_j)|\le\frac{f_{LB}(d_j-1)}2.
\]
Let $e>0$ be the number of even local dimensions.  Summing these bounds over all subsystems yields
\[
 \sum_j|E(G_j)|
 \le
 \frac{f_{LB}\sum_j(d_j-1)-e}{2}
 =
 \binom{f_{LB}}{2}-\frac e2
 <
 \binom{f_{LB}}{2}.
\]
Hence the local graphs contain, even when counted with multiplicity, fewer edges than $K_{f_{LB}}$.  They cannot cover $K_{f_{LB}}$, contradicting the pairwise orthogonality of the UPB.
\end{proof}

The remaining results establish the common framework for the upper bound.  The next key lemma gives a sufficient condition for a family of
product states to form a UPB, and serves as the organizing
principle for every new construction: it explains why one local subsystem can accommodate one more local vector in a hyperplane than the other subsystems.
A related counting idea can also be found in the work of Chen and Johnston~\cite{ChenJohnston2015}.
Recall that $2q=f_{LB}+1=2+\sum_j(d_j-1)$.

\begin{lemma}\label{lem:one-extra}
Assume that nonzero local vectors $v_i^{(j)}\in\C^{d_j}$ have local orthogonality graphs covering $K_{2q}$.  Suppose that, for one subsystem $t$, every $d_t+1$ vectors among $v_1^{(t)},\ldots,v_{2q}^{(t)}$ span $\C^{d_t}$, while for every $j\ne t$, every $d_j$ vectors among $v_1^{(j)},\ldots,v_{2q}^{(j)}$ are linearly independent.  Then the associated $2q$ product states form a UPB.
\end{lemma}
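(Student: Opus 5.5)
Pairwise orthogonality is given by the covering hypothesis, since the union of the local orthogonality graphs is $K_{2q}$. So the plan is to prove unextendibility by contradiction. Suppose a nonzero product vector $w=w_1\otimes\cdots\otimes w_p$ is orthogonal to all $2q$ states. For each $i\in[2q]$, orthogonality $\inner{w}{\psi_i}=\prod_j\inner{w_j}{v_i^{(j)}}=0$ means some factor vanishes. We can therefore choose an index $j(i)$ with $w_{j(i)}\perp v_i^{(j(i))}$. This assigns the $2q$ indices to the $p$ subsystems, with classes $I_j=\{i: j(i)=j\}$, so that $\sum_j|I_j|=2q$.

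Next we bound each class using the hypotheses. Fix $j$; all vectors $v_i^{(j)}$ with $i\in I_j$ lie in the hyperplane $w_j^\perp$ of $\C^{d_j}$, which is a proper subspace because $w_j\neq0$.
\begin{itemize}
\item For $j\ne t$: any $d_j$ of the local vectors are linearly independent and hence span $\C^{d_j}$. A hyperplane therefore contains at most $d_j-1$ of them, so $|I_j|\le d_j-1$.
\item For $j=t$: any $d_t+1$ of the vectors span $\C^{d_t}$, so the hyperplane contains at most $d_t$ of them, giving $|I_t|\le d_t$.
\end{itemize}
There is one subtlety. If several indices share the same local vector, or the assignment is not injective, nothing changes, because we count indices and the hypotheses are stated for any choice of $d_j$ (resp.\ $d_t+1$) indices among the $2q$. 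If $|I_j|\ge d_j$, we take $d_j$ of those indices, whose vectors must span $\C^{d_j}$, contradicting their containment in $w_j^\perp$.

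Summing the bounds gives
\[
2q=\sum_j|I_j|\le \sum_{j\ne t}(d_j-1)+d_t=1+\sum_j(d_j-1)=2q-1,
\]
a contradiction. Hence no such $w$ exists, and the family is a UPB.

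There is no serious obstacle here. The only point needing care is making the assignment $i\mapsto j(i)$ a genuine partition, by choosing one subsystem per index, so that the counts add up exactly to $2q$. After that, the argument is the same dimension-partition argument as in the lower bound, with the exceptional subsystem allowed one extra vector in a hyperplane. The arithmetic $2q=2+\sum_j(d_j-1)$ is exactly what makes that single extra slot insufficient.
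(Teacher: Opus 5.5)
Your proof is correct and follows the paper's argument: the paper likewise notes that a hyperplane contains at most $d_j-1$ ordinary local vectors and at most $d_t$ exceptional ones, so a nonzero product state is orthogonal to at most $d_t+\sum_{j\ne t}(d_j-1)=2q-1$ of the $2q$ states. Your explicit assignment $i\mapsto j(i)$ simply spells out the counting the paper leaves implicit.
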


\begin{proof}
The graph cover gives pairwise orthogonality.  A hyperplane in an ordinary subsystem $j\neq t$ contains at most $d_j-1$ local vectors, 
while a hyperplane in the exceptional subsystem $t$ contains at most $d_t$ local vectors.  A nonzero product state can therefore be orthogonal to at most
\[
 d_t+\sum_{j\ne t}(d_j-1)=2q-1
\]
of the $2q$ states.
\end{proof}

According to the lemma above, to construct a UPB of size $2q$, it suffices
to find nonzero local vectors $v_i^{(j)}$ such that (1) their local orthogonality graphs
cover $K_{2q}$, (2) every $d_t+1$ vectors in one subsystem $t$ span
$\C^{d_t}$, and (3) every $d_j$ vectors in each remaining subsystem $j\ne t$
are linearly independent.  We call $t$ the \emph{exceptional subsystem} and
the remaining subsystems the \emph{ordinary subsystems}.

The following lemma provides a way to construct local vectors for the
ordinary subsystems.
An \emph{orthogonal representation} of a graph \(G\) in \(\mathbb R^d\) is an assignment of nonzero vectors
\(
z_v\in\mathbb R^d
\) to every vertex \(v\in V(G)\), such that
\[
uv\in E(G)\quad\Longrightarrow\quad z_u\perp z_v.\]
Define
 $\bic(G)$, the \emph{biclique number}, to be the largest number of vertices in a complete bipartite subgraph of $G$.  The subgraph need not be induced.  We set $\bic(G)=0$ if $G$ has no edges.

\begin{lemma}\label{lem:bic-connect}
Let $1\le d<N$.  If $G$ has $N$ vertices and $\bic(G)\le d$, then
$G$ has an orthogonal representation in $\R^d$ in which every $d$ vectors
are linearly independent.
\end{lemma}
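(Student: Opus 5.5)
This is a version of the Lovász–Saks–Schrijver theorem on general-position orthogonal representations. I would prove it by the standard sequential random (generic) construction, with a simplified form of their inductive argument.

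Key observation: the complement $\bar G$ has minimum degree $\ge N-d$. Indeed, if some vertex $v$ had $\deg_G(v)\ge d$, then $v$ together with $d$ of its neighbours would span a star $K_{1,d}$ with $d+1$ vertices, contradicting $\bic(G)\le d$. So every vertex has at most $d-1$ neighbours in $G$.

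Order the vertices $1,\dots,N$ and choose vectors one at a time. For vertex $i$, let $L_i=\{z_u: u<i,\ ui\in E(G)\}^{\perp}$. Since there are at most $d-1$ such constraints, $\dim L_i\ge 1$. Choose $z_i$ as a \emph{generic} point of $L_i$, meaning outside a finite union of proper subspaces of $L_i$ determined below. Orthogonality along edges holds by construction, and nonzeroness is one of the avoided subspaces.

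The substance is general position. We must ensure that every $d$-subset $S$ of vectors is independent. I would carry this out following LSS: view the whole construction as a random process, with $z_i$ Gaussian in $L_i$, and prove by induction on $i$ the stronger statement that, almost surely, for every set $S\subseteq[i]$ with $|S|\le d$, the vectors $\{z_u:u\in S\}$ are independent. When $z_i$ is added to $S'=S\setminus\{i\}$, independence fails only if $z_i\in\spann\{z_u:u\in S'\}$. Since $z_i$ is generic in $L_i$, this can happen with positive probability only if $L_i\subseteq\spann(z_{S'})$.

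So we must rule out $L_i\subseteq \spann(z_{S'})$. The hard case is when $L_i$ is "forced" small by its earlier neighbours $A=N(i)\cap[i-1]$. Because $\dim L_i=d-\operatorname{rank}(z_A)$ and, by induction, $z_A$ is independent ($|A|\le d-1$), we get $\dim L_i=d-|A|$. Containment $L_i\subseteq \spann(z_{S'})$ then forces $\spann(z_{S'})\supseteq L_i$. Taking orthogonal complements, $\spann(z_{S'})^\perp\subseteq \spann(z_A)$, a subspace of dimension $d-|S'|$.

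\textbf{Main obstacle.} The naive induction above is not enough. The LSS proof handles this via a more delicate inductive hypothesis: generic positions must be preserved under swapping the order of two nonadjacent consecutive vertices, which shows the distribution is order-independent. I would instead reduce directly to the biclique condition, as follows. Here $S'$ and $A$ are disjoint with $|S'|+|A|\ge\ldots$, and I expect a counting argument to produce a complete bipartite subgraph between $A\cup\{i\}$ and an appropriate part of $S'$ with more than $d$ vertices. This would contradict $\bic(G)\le d$.

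Concretely, the plan is to:
1. reproduce the LSS swap lemma, which shows that the joint law of the generic representation does not depend on the ordering;
2. use it to place $S$ last in the order;
3. use the genericity of the last vectors to show that a dependency among $z_S$ would require $G[S]$ to contain some vertex whose neighbours force it into the span of the others;
4. translate that dependency into a biclique with $\ge d+1$ vertices, using the fact that a minimal dependent set in a generic orthogonal representation has a bipartite "forcing" structure.

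Step 4, identifying the biclique, is where I expect the real difficulty, and I would follow LSS's original connectivity-based argument closely there.
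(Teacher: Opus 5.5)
Your proposal has a genuine gap: the whole general-position argument (your Steps 3 and 4) is only an expectation. The only concrete consequence of $\bic(G)\le d$ that you actually extract is the degree bound $\deg_G(v)\le d-1$, and that bound cannot carry the argument. Take $G=K_{2,2}$ on $\{a,b,c,e\}$ plus one isolated vertex, with $d=3$ and $N=5$. Every degree is at most $2=d-1$. Yet in any orthogonal representation in $\R^3$ with $z_a,z_b$ independent, both $z_c$ and $z_e$ lie in the line $\spann(z_a,z_b)^\perp$, so they are parallel and no three-set containing both is independent. The obstruction to general position is global, not local, which is exactly why your one-step induction stalls. The counting you sketch is also not set up correctly: $S'$ and $A$ need not be disjoint. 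Deferring Step 4 to ``LSS's connectivity-based argument'' does not close the gap, because you never verify the connectivity hypothesis that argument needs.

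The missing idea is that $\bic(G)\le d$ is precisely that hypothesis, and this is how the paper proves the lemma. Suppose deleting a set $T$ with $|T|\le N-d-1$ disconnected $\overline G$. Then $V\setminus T$ would split into two nonempty parts with every cross pair an edge of $G$. That is a complete bipartite subgraph on $N-|T|\ge d+1$ vertices, a contradiction. Hence $\overline G$ is $(N-d)$-connected, and it has $N>N-d$ vertices. The Lov\'asz--Saks--Schrijver theorem states that a graph $F$ on $N$ vertices is $k$-connected if and only if it has real vectors in $\R^{N-k}$ with nonadjacent vertices orthogonal and every $N-k$ of them independent. Applying it to $F=\overline G$ with $k=N-d$ gives the lemma at once. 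With this one-line reduction you need neither the swap lemma nor a ``forcing biclique'' inside the random process. Without it, your plan amounts to reproving the delicate core of LSS with its decisive step left unproved.
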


\begin{proof}
A graph is \emph{$k$-connected} if it has more than $k$ vertices and remains
connected after the deletion of any set of fewer than $k$ vertices.
If deleting at most $N-d-1$ vertices disconnected $\overline G$, two nonempty unions of the remaining components would have every cross pair in $G$.  They would form a complete bipartite subgraph on at least $d+1$ vertices, contrary to $\bic(G)\le d$.  Thus $\overline G$ is $(N-d)$-connected.

The theorem of Lov\'asz, Saks, and Schrijver~\cite{LovaszSaksSchrijver1989,LovaszSaksSchrijver2000} states that a graph $F$ on $N$ vertices is $k$-connected if and only if there exist real vectors $z_v\in\R^{N-k}$, $v\in V(F)$, such that nonadjacent vertices receive orthogonal vectors and every $N-k$ vectors are linearly independent.  Apply this theorem to $F=\overline G$ with $k=N-d$.  It gives vectors in $\R^d$ such that every $d$ are linearly independent.  Two vertices are nonadjacent in $\overline G$ precisely when they are adjacent in $G$, so these vectors form the required orthogonal representation of $G$.
\end{proof}

The next three sections establish the upper bound in all remaining cases. While
\Cref{sec:theta} treats the specific family $(2,2,d)$ with odd $d$ by a
direct theta-graph construction,  \Cref{sec:even,sec:odd} treat the remaining
systems according as $q$ is even or odd.  Both use the framework developed
in this section: they construct exceptional local vectors satisfying
\Cref{lem:one-extra}, partition the remaining edges into graphs for the
ordinary subsystems, and apply \Cref{lem:bic-connect} to obtain the ordinary
local vectors.

%%%%%%%%%%%%%%%%%%%%%%%%%%%%%%%%%%%%%%%%%%%%%%%%%%%%%%%%%%
\section{\texorpdfstring{$(2,2,d)$}{(2,2,d)}-Case with Odd
\texorpdfstring{$d$}{d}}\label{sec:theta}

This section gives a direct construction for \((2,2,d)\) with odd
\(d\), the only remaining case in which some local dimension is at
least \(q\).  We will construct a theta graph with $d+2$ vertices and $f_{LB}+1=d+3$ edges and assign one product state to each edge.
  The resulting family of product states is a UPB, and it
attains the lower bound $f_{LB}+1$.

\begin{theorem}\label{prop:22d}
For every odd integer $d\ge3$,
\[
 f_m(2,2,d)=f_{LB}+1=d+3.
\]
\end{theorem}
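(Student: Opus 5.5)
The lower bound $f_m(2,2,d)\ge d+3$ is immediate from \Cref{prop:lower}: for odd $d$ there are exactly two even local dimensions, so $f_{LB}=d+2$ is odd and the minimum is at least $f_{LB}+1=d+3$. The work is the matching upper bound, an explicit UPB of $n=2q=d+3$ product states in $\C^2\otimes\C^2\otimes\C^d$. I would build it with \Cref{lem:one-extra}, taking one of the two qubits as the exceptional subsystem $t$. This plan does not follow the theta-graph construction announced at the start of the section, and I have not checked that it works in every case.

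\emph{Degree budget.} In the exceptional qubit, the hypothesis of \Cref{lem:one-extra} says every $3$ vectors span $\C^2$, so each direction occurs at most twice. A vector is then orthogonal to at most the two copies of the orthogonal direction, so that local graph has maximum degree $2$. In the ordinary qubit, every $2$ vectors are independent, so its local graph is a matching, with degree at most $1$. In $\C^d$, every $d$ vectors are independent, so the degree is at most $d-1$. The sum is $2+1+(d-1)=d+2=n-1$. Hence all three local graphs must be regular of these degrees and must partition $E(K_n)$.

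\emph{Construction.}
\begin{enumerate}
\item When $4\mid n$, let the exceptional qubit use directions $a_1,a_1^\perp,a_2,a_2^\perp,\dots$, each taken twice and in general position across different pairs. Its orthogonality graph is then a disjoint union of $n/4$ copies of $K_{2,2}$, say $G_t$.
\item Choose a perfect matching $M$ of $K_n$ edge-disjoint from $G_t$. Realize it in the ordinary qubit by pairs $\{b_k,b_k^\perp\}$, with all directions distinct and chosen generically so that no accidental orthogonality occurs.
\item Let $G=K_n\setminus(G_t\cup M)$. This graph is $(d-1)$-regular on $n=d+3$ vertices, and its complement $\overline G=G_t\cup M$ is $3$-regular.
\item To apply \Cref{lem:bic-connect} with dimension $d$, I need $\bic(G)\le d$. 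By the argument in that proof, it suffices that $\overline G$ is $3$-connected, that is, $(n-d)$-connected. I would choose $M$ to join the $C_4$ blocks cyclically, so that $\overline G$ resembles a $3$-connected cubic ``ring of $K_{2,2}$'s''. Then \Cref{lem:bic-connect} gives vectors in $\R^d\subset\C^d$ with every $d$ linearly independent.
\end{enumerate}
\Cref{lem:one-extra} then shows that the $n$ product states form a UPB.

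\emph{Main obstacle: the case $n\equiv 2\pmod 4$.} There $G_t$ cannot be a union of $K_{2,2}$'s, and any deficiency in one graph must be absorbed by another, which the tight budget barely allows. I would first try letting one direction pair occur once each, giving a single $K_{1,1}$ component in $G_t$. The two vertices that lose degree would get their slack from a non-regular $\C^d$ graph, still with $\bic\le d$. If this fails, I would fall back on a direct construction in which the states are indexed by the $d+3$ edges of a theta graph on $d+2$ vertices, matching the section's announced approach. The qudit vectors would be incidence-type vectors of the edges, placed in the $d$-dimensional cycle-reduced quotient of $\R^{V}$. The two qubit vectors would encode which of the three paths an edge lies on and the parity of its position along that path. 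The spanning condition of \Cref{lem:one-extra} would then be checked by a rank computation on edge subsets of the theta graph. I expect verifying this rank condition, together with the $3$-connectivity of $\overline G$ in the main case, to be the hardest step.
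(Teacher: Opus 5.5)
Your construction works when $4\mid n$, i.e.\ $d\equiv1\pmod4$. The union of $K_{2,2}$'s is a legitimate exceptional qubit; it is the graph of \Cref{prop:even-library}(1). Moreover, $\bic(G)\le d$ is equivalent to $3$-connectivity of the cubic graph $\overline G=G_t\cup M$, and a cyclic ring of $4$-cycles is $3$-connected (for two blocks, take the cube $Q_3$). So \Cref{lem:bic-connect,lem:one-extra} finish that half. The gap is the other half, $d\equiv3\pmod4$ (including $d=3$), and your own degree budget shows it is not a technicality. With a qubit exceptional, $G_t$ must be $2$-regular. So every direction occurs exactly twice, together with exactly two copies of its orthogonal direction. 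Hence $G_t$ is a disjoint union of $K_{2,2}$'s and $4\mid n$. The proposed $K_{1,1}$ patch cannot be absorbed by a ``non-regular'' $\C^d$ graph. In an ordinary subsystem every vertex has degree at most $d_j-1$, because its neighbours lie in a hyperplane. So a vertex of $G_t$-degree one is covered by at most $1+1+(d-1)=d+1<n-1$ edges. Making the qudit exceptional does not help either. For $d=3$, both qubit graphs would be perfect matchings whose union is $C_6$, and the qutrit graph would be exactly the prism $\overline{C_6}$. Taking the orthogonal triple $v_1,v_3,v_5$ as a basis then forces one of $v_2,v_4,v_6$ to be a coordinate vector, which puts four of the six vectors in a coordinate plane. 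So for $(2,2,3)$ no choice of exceptional subsystem in \Cref{lem:one-extra} can work.

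Your fallback is therefore the actual proof, and as sketched it lacks its key ingredients. The paper takes the qudit vectors to be the orthogonal projections $Pg_e$ of the oriented incidence vectors $g_e=\mathbf e_{e^-}-\mathbf e_{e^+}$ onto $W=\one^\perp\cap\rho^\perp$, where $\rho=\mathbf e_u-\frac1N\one$. Projecting out this single direction lowers the dimension to $d$ but preserves ``orthogonal iff disjoint,'' since two disjoint edges cannot both contain $u$. An edge set then spans $W$ iff its spanning subgraph has at most two components.

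The two qubits must cover every pair of incident edges. The paper does this with $\alpha_t=(1,t)^{\mathsf T}$ and $\beta_t=(t,-1)^{\mathsf T}$, using \emph{fresh} parameters along the long path so that the long-path proportionality classes are singletons. A path-and-parity encoding would create large classes whose deletion cuts the long path into many components. Each qubit still repeats two directions (classes $\{p_0,q_1\},\{q_0,p_1\}$ and $\{p_0,q_0\},\{p_1,q_1\}$). So neither qubit is ordinary, the per-subsystem hyperplane counts add up to at least $n$, and \Cref{lem:one-extra} is not the right tool.

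Unextendibility is proved directly instead. For $a\otimes b\otimes c$, the states killed by $a$ lie in one class of the first qubit, and those killed by $b$ lie in one class of the second. Deleting any one class from each list leaves a subgraph of $\Theta$ with at most two components. So the remaining $Pg_e$ span $\C^d$ and force $c=0$. This joint ``delete one class per qubit'' spanning property, not the hypothesis of \Cref{lem:one-extra}, is what your rank computation would have to establish.
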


\begin{proof}
The lower bound is \Cref{prop:lower}.  We first establish the projected-incidence construction used below.  Let $G$ be a simple graph on $N\ge3$ vertices and let $u$ be a distinguished vertex.  Orient all edges arbitrarily and write
\[
 g_e=\mathbf e_{e^-}-\mathbf e_{e^+}\in\R^{V(G)}
\]
for every edge $e$ with tail $e^-$ and head $e^+$.  Here $\mathbf e_v$ denotes the standard basis vector of $\R^{V(G)}$ corresponding to the vertex $v$: its $v$-coordinate is $1$ and all other coordinates are $0$.  The vectors $g_e\in\one^\perp$ are the incidence vectors of the oriented edges.  They are orthogonal exactly for disjoint edges.
Define
\[
 \rho=\mathbf e_u-\frac1N\one\in\one^\perp,
 \qquad
 W=\one^\perp\cap\rho^\perp.
\]
Let $P:\one^\perp\to W$ be the orthogonal projection onto $W$, that is,
\[
 P x=x-\frac{x\cdot\rho}{\rho\cdot\rho}\,\rho
 \qquad(x\in\one^\perp).
\]
Thus $P$ removes from $x$ its component parallel to $\rho$.  We make the following claim.
\begin{claim}\label{claim:1}
Every $Pg_e$ is nonzero and, for distinct edges,
\[
 Pg_e\perp Pg_f
 \quad\Longleftrightarrow\quad
 e\text{ and }f\text{ are disjoint in }G.
\]
Moreover, for $S\subseteq E(G)$,
\(
 \spann\{Pg_e:e\in S\}=W
\)
if and only if the spanning subgraph $(V(G),S)$ has at most two connected components.
\end{claim}

Indeed, since $g_e\perp\one$ and $\rho\cdot\rho=(N-1)/N$,
\begin{equation}\label{eq:projected-inner}
 (Pg_e)\cdot(Pg_f)
 =g_e\cdot g_f-\frac{N}{N-1}
  (g_e\cdot\mathbf e_u)(g_f\cdot\mathbf e_u).
\end{equation}
Disjoint edges give zero.  If $e$ and $f$ meet at $v\ne u$, then
$(Pg_e)\cdot(Pg_f)=g_e\cdot g_f=\pm1$.  If they meet at $u$, then we have
$(Pg_e)\cdot(Pg_f)=\pm (1-\frac{N}{N-1})$.
Moreover, $\|Pg_e\|^2$ is equal to $2$ if $u\notin e$ and to $(N-2)/(N-1)$ if $u\in e$, so every $Pg_e$ is nonzero.  Hence, any non-disjoint pair of $e,f$ gives $(Pg_e)\cdot(Pg_f)\ne0$, so the orthogonality condition is established.

If the spanning subgraph defined by \(S\) has \(h\) connected components,
then its incidence vectors \(g_e\) span the subspace consisting of all
vectors whose coordinates sum to zero on each component.
The incidence space, denoted by $U=\spann\{g_e:e\in S\}$, has dimension $N-h$.  For $h=1$, this is $\one^\perp$, whose projection is $W$. 
 For $h=2$, $U$ has dimension $N-2=\dim W$ and does not contain $\rho$: 
 the coordinate sum of $\rho$ on every nonempty proper vertex subset is nonzero.  
Since the kernel of \(P\) on \(\one^\perp\) is
\(\spann\{\rho\}\), the restriction of \(P\) to $U$ is injective, implying that $P(U)=W$.
  If $h\ge3$, $U$ has dimension at most $N-3$ and cannot project onto $W$.  This proves the claim.

Construct the graph \(\Theta\) from two distinct vertices \(u\) and \(v\) by
joining them with three internally vertex-disjoint paths of lengths
\(2,2,d-1\), where \(u\) is the distinguished vertex. 
Label the paths as in Figure~\ref{fig:theta-graph}.
\begin{figure}[ht]
\centering
\begin{tikzpicture}[
  vertex/.style={circle,draw,fill=white,minimum size=7mm,inner sep=1pt},
  edge label/.style={fill=white,inner sep=1.5pt,font=\small}
]
  \node[vertex] (u) at (0,0) {$u$};
  \node[vertex] (v) at (10,0) {$v$};
  \node[vertex] (x) at (5,2) {$x$};
  \node[vertex] (y) at (5,-2) {$y$};
  \node[vertex] (z1) at (2,0) {$z_1$};
  \node[vertex] (z2) at (3.8,0) {$z_2$};
  \node[vertex] (zlast) at (8,0) {$z_{d-2}$};

  \draw (u) -- node[edge label,above left] {$p_0$} (x)
            -- node[edge label,above right] {$p_1$} (v);
  \draw (u) -- node[edge label,below left] {$q_0$} (y)
            -- node[edge label,below right] {$q_1$} (v);
  \draw (u) -- node[edge label,above] {$r_1$} (z1)
            -- node[edge label,above] {$r_2$} (z2);
  \draw (z2) -- node[edge label,above] {$\cdots$} (zlast)
              -- node[edge label,above] {$r_{d-1}$} (v);
\end{tikzpicture}
\caption{The graph $\Theta$: two $u$--$v$ paths of length $2$ and one of length $d-1$, with $u=z_0$ and $v=z_{d-1}$.  The diagram depicts $d\ge5$; for $d=3$, the long path is $u$--$z_1$--$v$.}
\label{fig:theta-graph}
\end{figure}
The graph has $d+2$ vertices and $f_{LB}+1=d+3$ edges.  
Since \(\dim W=d\), identify \(W\) isometrically with \(\R^d\).
Applying the projected-incidence construction to \(G=\Theta\) then
gives vectors \(Pg_e\in\R^d\) that are orthogonal exactly for disjoint edges.

Now we define $d+3$ product states in $\C^2\otimes\C^2\otimes\C^d$, one for each graph edge.  For $t\ge0$, put
\[
 \alpha_t=\begin{pmatrix}1\\t\end{pmatrix},
 \qquad
 \beta_t=\begin{pmatrix}t\\-1\end{pmatrix}.
\]
Then $\alpha_s\perp\beta_t$ exactly when $s=t$, and vectors of the same type are never orthogonal.  
Assign the first qubit by
\[
\begin{aligned}
 a_{p_0}=a_{q_1}&=\alpha_0,
 &a_{q_0}=a_{p_1}&=\beta_0,\\
 a_{r_{2j-1}}&=\alpha_j,
 &a_{r_{2j}}&=\beta_j
 &&\left(1\le j\le\frac{d-1}{2}\right),
\end{aligned}
\]
and the second qubit by
\[
\begin{aligned}
 b_{p_0}=b_{q_0}&=\alpha_0,
 &b_{r_1}&=\beta_0,\\
 b_{p_1}=b_{q_1}&=\alpha_{(d-1)/2},
 &b_{r_{d-1}}&=\beta_{(d-1)/2},\\
 b_{r_{2j}}&=\alpha_j,
 &b_{r_{2j+1}}&=\beta_j
 &&\left(1\le j\le\frac{d-3}{2}\right).
\end{aligned}
\]
For every graph edge $e$, define $\psi_e=a_e\otimes b_e\otimes(Pg_e)$.  
Disjoint graph edges are orthogonal in the third coordinate.  
At each graph vertex, the displayed qubit assignments cover every pair of incident edges.  
Hence the $d+3$ product states are pairwise orthogonal.

The proportionality classes in the first qubit are
\[
 \{p_0,q_1\},\quad\{q_0,p_1\},\quad
 \{r_1\},\ldots,\{r_{d-1}\},
\]
and those in the second qubit are
\[
 \{p_0,q_0\},\quad\{p_1,q_1\},\quad
 \{r_1\},\ldots,\{r_{d-1}\}.
\]
Deleting one class from each list leaves a spanning subgraph of $\Theta$ with at most two connected components.  Two deleted long edges isolate at most one middle segment; one short-path class and one long edge leave at most a $u$-side and a $v$-side; two short-path classes leave the long path intact and may isolate one short-path internal vertex.  By the spanning property proved above, the remaining $Pg_e$ span $\R^d$.

Suppose that a nonzero product vector \(a\otimes b\otimes c\) is
orthogonal to every \(\psi_e\). The edges satisfying
\(a\perp a_e\) are contained in one proportionality class of the first
list, and those satisfying \(b\perp b_e\) are contained in one class of
the second list. 
After deleting these two classes, the remaining vectors \(Pg_e\) span
\(\R^d\).  Orthogonality to all remaining states would therefore force
\(c=0\), a contradiction.
Thus the family of $\psi_e$ is a UPB of size $d+3$.
\end{proof}

This completes the case in which $d_p\ge q$.
In the remaining cases, every local dimension is at most $q-1$.  
Sections~\ref{sec:even} and \ref{sec:odd} will treat the even-$q$ and odd-$q$ cases, respectively.

%%%%%%%%%%%%%%%%%%%%%%%%%%%%%%%%%%%%%%%%%%%%%%%%%%%%%%%%%%
\section{Even-\texorpdfstring{$q$}{q} Case}\label{sec:even}

This section completes the proof of the upper bound for even $q$, where $2q=2+\sum_j(d_j-1)=f_{LB}+1$.
The cases with \(d_p\ge q\) have already been handled, so the new
construction below assumes \(d_p<q\). Since $q=2$ reduces to the bipartite case $(2,2)$, we assume $q\ge 4$. 

We choose an index $t$ for which $d_t$ is even, designate
subsystem $t$ as the exceptional subsystem in \Cref{lem:one-extra}, and write
$d_*=d_t$.  Thus its local vectors are required to have the exceptional
spanning property that every $d_*+1$ of them span $\C^{d_*}$.  Index its
$2q$ local vectors by two sets
\[
 U=\{u_i:i\in\Z_q\},
 \qquad
 V=\{v_i:i\in\Z_q\},
\]
which we call the two \emph{layers}; all subscripts are taken modulo $q$.
Let $G_*$ denote the orthogonality graph of these exceptional local
vectors on $U\cup V$: two vertices are adjacent in $G_*$ exactly when
their corresponding vectors are orthogonal.  We call $G_*$ the
\emph{exceptional orthogonality graph}.
For $s\in\Z_q$, the \emph{cross-layer matching of shift $s$} is
\[
 M_s=\{u_i v_{i+s}:i\in\Z_q\}.
\]
We say that $G_*$ has \emph{cross-layer degree} $r$ if every vertex has
exactly $r$ neighbors in the opposite layer.  The
matchings $M_s$, $s\in\Z_q$, partition all
edges between the two layers.  We call $M_s$ \emph{unused} if none of its
edges belongs to $G_*$; it then lies in $\overline{G_*}$ and is available
for an ordinary subsystem.  A \emph{cyclic interval}
of $L$ unused cross-layer matchings means
$M_a,M_{a+1},\ldots,M_{a+L-1}$ for some $a\in\Z_q$, with shifts taken modulo
$q$.
When $q$ is even, the \emph{antipodal matchings} in the two layers are
\[
 \{u_i u_{i+q/2}:0\le i<q/2\}
 \quad\text{and}\quad
 \{v_i v_{i+q/2}:0\le i<q/2\},
\]
respectively.

The proof proceeds as follows.  \Cref{prop:even-library} constructs the
$2q$ exceptional local vectors, every $d_*+1$ of which span $\C^{d_*}$,
and determines their orthogonality graph $G_*$.  Then,
\Cref{prop:even-factor} partitions $\overline{G_*}$ into $p-1$ spanning
graphs $G_i$, one for each ordinary subsystem, such that
$\bic(G_i)\le d_i$.  By \Cref{lem:bic-connect}, each $G_i$ has an
orthogonal representation in $\R^{d_i}$ in which every $d_i$ vectors are
linearly independent.  Since $G_*$ together with the ordinary graphs $G_i$
partitions $K_{2q}$, every two of the resulting product states are orthogonal.  Finally, \Cref{lem:one-extra} shows that these $2q$
product states form a UPB.

\begin{proposition}\label{prop:even-library}
Let $q\ge4$ be even.
\begin{enumerate}[label=(\arabic*)]
\item There are $2q$ vectors in $\C^2$ such that (i) every three span, and
(ii) the graph induced by each layer is exactly its antipodal matching,
while the cross-layer edges form exactly the shift-zero matching $M_0$.

\item For every $4\le d\le q-1$, there are $2q$ vectors in $\C^d$ such
that (i) every $d+1$ span, and (ii) the graph induced by each layer is
exactly its antipodal matching, while the full graph has cross-layer degree
$d-1$.  The unused cross-layer matchings form a cyclic interval of length
$q-d+1$.
\end{enumerate}
\end{proposition}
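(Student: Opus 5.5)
For part (1), I will build the vectors directly from orthogonal pairs of lines in $\C^2$. Fix distinct reals $0<t_0<t_1<\cdots<t_{q/2-1}$. Put $a_k=(1,t_k)^{\mathsf T}$ and $a_k^\perp=(t_k,-1)^{\mathsf T}$. Then set
\[
 u_k=a_k,\quad u_{k+q/2}=a_k^\perp,\quad v_k=a_k^\perp,\quad v_{k+q/2}=a_k
 \qquad(0\le k<q/2).
\]
A real vector of $\C^2$ is orthogonal only to the lines $(1,t)^{\mathsf T}\perp(t,-1)^{\mathsf T}$. So the only orthogonal pairs are $a_k\perp a_k^\perp$, and these pairs are exactly the two antipodal matchings together with the edges $u_iv_i$, which form $M_0$. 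The lines $\spann\{a_k\}$ and $\spann\{a_k^\perp\}$ are pairwise distinct, because the slopes are distinct and have opposite signs. Hence every line contains exactly two of the $2q$ vectors, and any three of them span $\C^2$. This settles (i) and (ii).

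For part (2), I will use a weighted Fourier orbit. Let $\omega=e^{2\pi i/q}$, choose positive weights $w_0,\dots,w_{d-1}$ and unimodular phases $c_0,\dots,c_{d-1}$, and put
\[
 u_i=\bigl(\sqrt{w_\ell}\,\omega^{i\ell}\bigr)_{\ell=0}^{d-1},\qquad
 v_j=\bigl(\sqrt{w_\ell}\,c_\ell\,\omega^{j\ell}\bigr)_{\ell=0}^{d-1}.
\]
All inner products then depend only on the shift. Within a layer, $\langle u_i,u_{i+s}\rangle=\overline{W(\omega^{s})}$, and the same holds for the $V$ layer, where $W(z)=\sum_\ell w_\ell z^\ell$. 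Across layers, $\langle u_i,v_{i+s}\rangle=\overline{C(\omega^{s})}$, where $C(z)=\sum_\ell w_\ell c_\ell z^\ell$.

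The requirements therefore become conditions on two polynomials of degree $d-1$. First, among the $q$th roots of unity, $W$ must vanish exactly at $-1$. Since $d$ is even, I can impose this with the alternating condition $\sum(-1)^\ell w_\ell=0$, taking the remaining $w_\ell$ generic so that no other root of unity is a root. Second, $C$ must vanish exactly at $\omega^s$ for $s$ in a cyclic interval $I$ of length $d-1$. Because $\deg C=d-1$, this forces $C=\lambda\prod_{s\in I}(z-\omega^s)$, and then no other root of unity is a root, so the unused matchings automatically form the complementary interval of length $q-d+1$.

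The compatibility condition is that $|\text{coeff}_\ell(C)|=w_\ell$. Hence the weights are forced to be $w_\ell=|\lambda e_\ell|$, where $e_\ell$ are the coefficients of $\prod_{s\in I}(z-\omega^s)$, and the phases $c_\ell$ are read off from $C$. If the alternating condition fails for these forced weights, I will regain freedom in one of two ways. I can rotate each layer separately, replacing $v_j$ by $v_{j+\sigma}$ or reversing the frequencies. Alternatively, I can use frequency sets other than $\{0,\dots,d-1\}$, which changes the interval of zeros and the moduli $|e_\ell|$. I would first try $I$ symmetric about $q/2$, so that the product is self-reciprocal up to a phase and its coefficient moduli are palindromic. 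This step, making $W(-1)=0$ compatible with the forced weights while keeping $W$ free of other roots of unity, is the first delicate point.

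The spanning condition (i) is the main obstacle. Take $d+1$ of the vectors. If at least $d$ of them lie in one layer, they already span, because after the diagonal scaling they form a Vandermonde matrix in the distinct nodes $\omega^i$. In the mixed case, write the chosen $V$-vectors as $D u'_j$ with $D=\operatorname{diag}(c_\ell)$. The question is then whether a generalized Vandermonde matrix mixing columns $(\omega^{i\ell})$ and $(c_\ell\omega^{j\ell})$ has full rank. I would argue that a vector orthogonal to all $d+1$ gives a polynomial $F(z)=\sum f_\ell z^\ell$ with $k$ prescribed roots among the nodes $\omega^{i}$, while $\tilde F(z)=\sum f_\ell\bar c_\ell z^\ell$ has $d+1-k$ prescribed roots. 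Since $C$ has the explicit product form, I would try to use it to show that these constraints are incompatible. If a direct argument fails, I would perturb the construction slightly: a small generic perturbation of the phases in the unused directions, or a small change of the weights that preserves the prescribed zeros, should keep the orthogonality pattern while removing the finitely many degenerate determinants.
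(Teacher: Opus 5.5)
Your Part~(1) is correct and coincides with the paper's construction. Part~(2), however, has a genuine gap: the construction you set up cannot work. Since $|c_\ell|=1$, the two layers carry identical moduli, so the layer kernel $W$ and the cross kernel $C$ have coefficients of equal modulus. Once $C=\lambda\prod_{s\in I}(z-\omega^s)$ is forced, the weights $w_\ell=|\lambda||e_\ell|$ are determined up to scale by $d$ and $q$ alone. Shifting or reflecting $I$, or relabelling a layer, only multiplies the $e_\ell$ by unimodular factors. By the Gaussian-binomial expansion, $|e_\ell|=\prod_{j=1}^{\ell}\sin(\pi(d-1-\ell+j)/q)/\sin(\pi j/q)$. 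The symmetric form of the $q$-binomial theorem then gives
\[
 W(z)=|\lambda|\prod_{t=0}^{d-2}\bigl(1+z\,e^{i\pi(2t-d+2)/q}\bigr).
\]
For even $d$ its zeros are exactly $\omega^{q/2+k}$ with $|k|\le(d-2)/2$, and all of them are $q$th roots of unity. For instance, for $d=4$ one gets $W\propto(1+z)\bigl(1+2\cos(2\pi/q)z+z^2\bigr)$, which also vanishes at $\omega^{q/2\pm1}$. So $W(-1)=0$ is automatic (palindromic moduli, odd degree). The real obstruction is the $d-2$ additional zeros. Each layer graph is $(d-1)$-regular instead of being the antipodal matching. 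Every vector is then orthogonal to $2d-2\ge d+1$ others, which lie in one hyperplane, so (i) fails as well. For odd $d$, which the statement also covers, every zero satisfies $z^q=-1$, so the antipodal edges are lost entirely.

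Your fallback of rotating or reversing a layer, or centring $I$ at $q/2$, leaves the moduli $|e_\ell|$ unchanged and so does not help. Your remaining escape route, a different frequency set, is not carried out. It would have to control, all at once, the extra zeros of $W$, the exact cross-layer pattern, and the nonvanishing of $d\times d$ minors of a non-consecutive Fourier submatrix, which can vanish for composite $q$. Your mixed-tuple spanning argument is likewise only a plan.

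The missing idea is to give the two layers different moduli. The paper, with $\zeta=e^{2\pi i/q}$, takes $c_{u_i}=(\sqrt{\lambda_j}\,\zeta^{ji})_{j<d}$ and $c_{v_i}=(p_j\zeta^{ji}/\sqrt{\lambda_j})_{j<d}$, where $p_j=P(\zeta^j)$ and $P(z)=\prod_{h=d}^{q-1}(z-\zeta^h)$. The cross inner products equal $\sum_{j=0}^{q-1}P(\zeta^j)\zeta^{j(\ell-i)}$. By Fourier inversion each of these picks out a single coefficient of $P$, and all those coefficients are nonzero, so the cross inner products are nonzero exactly on a cyclic interval of $q-d+1$ shifts. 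Meanwhile the $U$-kernel has coefficients $\lambda_j$ and the $V$-kernel has coefficients $|p_j|^2/\lambda_j$. Only their product is forced, so the positive $\lambda_j$ are free.

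The paper takes $\lambda_0=1+x$, $\lambda_1=1$, $\lambda_j=\varepsilon^{j-1}$ for $2\le j\le d-3$, $\lambda_{d-2}=\varepsilon^{d-3}(|p_{d-2}|^2+y)$ and $\lambda_{d-1}=\varepsilon^{d-3}|p_{d-1}|^2$. It then solves the two real balance equations at $z=-1$ for $x,y$ by the implicit function theorem. After rescaling, the kernels tend to $1+z$ and $z^{d-2}(1+z)$, whose only $q$th-root zero is $-1$, and the balance equations keep that zero exact.

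The same separation of $\varepsilon$-orders proves spanning. Consider the Laplace expansion of a mixed $d\times d$ determinant with $2\le s\le d-2$ columns from $U$. The row set $\{0,\ldots,s-1\}$ gives the unique term of lowest order, and it is a product of two nonzero consecutive-row Fourier minors. Pure tuples are Vandermonde, and every $(d+1)$-set contains a $d$-tuple of one of these two kinds.
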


For the chosen exceptional subsystem, construct its $2q$ local vectors
and its orthogonality graph $G_*$, using Part (1) of
\Cref{prop:even-library} if $d_*=2$, and Part (2) otherwise.
Then the complement $\overline{G_*}$  has
two parts:
\begin{enumerate}
\item across the layers, \(q-d_*+1\) consecutive unused cross-layer
perfect matchings;
\item inside each layer, the graph $K_q$ with one perfect matching removed. 
\end{enumerate}
For each copy of $K_q$, choose a round-robin decomposition into \(q-1\) perfect
matchings, one of which is the antipodal matching, and remove that
matching because it is contained in $G_*$.  Pair the corresponding remaining
matchings in the two layers.  This gives $q-2$ perfect matchings on the full
$2q$-vertex set.
%The cross-layer part of $\overline{G_*}$ already consists of $q-d_*+1$ perfect matchings.

The next proposition partitions \(\overline{G_*}\)  by grouping these perfect matchings, 
where each group corresponds to an ordinary subsystem.  The proof is in Appendix~\ref{app:round-robin}.

\begin{proposition}\label{prop:even-factor}
Let $q\ge6$ be even, let $d_*\le q-2$ be even, and let $d_1,\ldots,d_m$ be the remaining ordinary local dimensions.  Assume
that \(
 2\le d_i\le q-1
\) for all $1\le i\le m$,
 that the number of even \(d_i\)'s is odd, and that
\(
 \sum_{i=1}^{m}(d_i-1)=(q-2)+(q-d_*+1).
\)
Then the $q-2$ paired perfect matchings and the $q-d_*+1$ consecutive cross-layer matchings of $\overline{G_*}$ can be 
partitioned into spanning subgraphs $G_1,\ldots,G_m$ such that each \(G_i\) satisfies
\(
 \bic(G_i)\le d_i.
\)
\end{proposition}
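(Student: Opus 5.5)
The plan is to allocate the available perfect matchings to the ordinary subsystems as whole units. Subsystem $i$ receives exactly $d_i-1$ of them, which is consistent with the hypothesis $\sum_i(d_i-1)=(q-2)+(q-d_*+1)$. Each $G_i$ is then $(d_i-1)$-regular, so in any complete bipartite subgraph $K_{a,b}\subseteq G_i$ we have $a,b\le d_i-1$. The target $\bic(G_i)\le d_i$ is equivalent to the sumset-type inequality $a+b\le (d_i-1)+1$, and I would prove it in this form. This points to choosing the matchings in each group as "intervals" of a cyclic structure and invoking Cauchy--Davenport or Kneser.

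\emph{Cross-layer groups.} If a group consists of the matchings $M_s$ with $s$ in a cyclic interval $S\subseteq\Z_q$, $|S|=k<q$, then a biclique $A\subseteq U$, $B\subseteq V$ satisfies $B-A\subseteq S$. Kneser's theorem gives $|A|+|B|-|H|\le|B-A|$, where $H$ is the stabilizer of $B-A$. A nontrivial $H$ would make $B-A$ a union of $H$-cosets inside a proper interval, and this is impossible because of the difference structure. Hence $|A|+|B|\le k+1$. Since the unused cross matchings already form an interval of length $q-d_*+1$, I will cut it into consecutive subintervals, one for each group.

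\emph{Intra-layer groups.} Realize the round-robin factorization on $\Z_{q-1}\cup\{\infty\}$ as $F_c=\{\{x,y\}:x+y\equiv c\}\cup\{\{c/2,\infty\}\}$, with the antipodal matching being one fixed $F_{c_0}$. A biclique inside a layer, with $\infty$ removed, satisfies $A+B\subseteq C$, where $C$ is the set of chosen colors. Choosing $C$ as an arithmetic progression in $\Z_{q-1}$ avoiding $c_0$ and applying Kneser again bounds $|A|+|B|$. The vertex $\infty$ costs at most one extra, and I would absorb this using the fact that $\infty$'s neighbors $c/2$ for $c\in C$ form a progression. A biclique in the paired graph lies inside one layer or splits across the two layers with no cross edges, so only within-layer bicliques matter. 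Here I would check that a disconnected union of two bicliques (one per layer) is not itself a complete bipartite subgraph; it is not, since cross pairs are missing.

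\emph{Bookkeeping and the main obstacle.} The main obstacle is that the required sizes $d_i-1$ need not split cleanly between the $q-2$ paired matchings and the $q-d_*+1$ cross matchings. Consequently at most one group must be \emph{mixed}, containing both cross matchings and paired matchings. For that group I would choose its cross shifts at one end of the interval and its colors as a progression, and bound mixed bicliques directly. A complete bipartite graph using both cross and intra edges forces at least one side to meet both layers. Then the degree count combined with the two sumset estimates should still give $a+b\le d_i$, but this is the step most likely to need a careful case analysis. The parity hypothesis (an odd number of even $d_i$) together with $q$ even and $d_*$ even is what I expect to guarantee that a good split exists: the cross part $q-d_*+1$ is odd, so some sub-sum of the $d_i-1$ must be odd, and this matches an odd number of even $d_i$. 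The assumption $q\ge6$ leaves room to place progressions avoiding $c_0$. I would verify the small boundary cases ($d_i=2$, a single matching with $\bic=2$, and $d_i=q-1$) separately.
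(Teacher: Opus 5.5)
Your toolkit matches the paper's: Kneser bounds for consecutive cross-layer matchings and for consecutive round-robin matchings, glued together by a two-layer count. The real gap is the mixed groups, and more case analysis will not close it. Suppose a biclique in a group with $s$ paired and $t$ cross-layer matchings meets all four sets $A\cap U$, $A\cap V$, $B\cap U$, $B\cap V$. Then the two layer bounds and the two cross bounds give only $|A|+|B|\le s+t+2$. The extra vertex is excluded exactly when $s\ne t$, because equality would force $|A|+|B|=2s+2=2t+2$; this is \Cref{lem:two-layer}. For $s=t$ the extra vertex can actually occur. With $s=t=1$ you get the $4$-cycle $u_xu_yv_{y+b}v_{x+b}$ whenever the paired matching contains $u_xu_y$ and $v_{x+b}v_{y+b}$. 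Your construction does not control how the round-robin labels sit relative to the $\Z_q$-indices of the $M_b$, so this cannot be ruled out. In addition, no group may receive all $q-2$ paired matchings: in each layer they form $K_q$ minus a perfect matching, which contains $K_{2,q-2}$.

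With at most one mixed group, these two constraints cannot always both be met. Take $q=8$, $d_*=6$ and ordinary dimensions $(7,4)$, which satisfy every hypothesis. Then $r=(6,3)$, with six paired and three cross matchings. Every allocation with at most one mixed group either gives the $r=6$ group all six paired matchings, or makes it the mixed group with $s=t=3$. The paper instead lets every group be mixed. It picks $s_i\in B_{r_i}$, which is $\{0,\dots,r_i\}$ with the midpoint $r_i/2$ removed when $r_i$ is even, so $s_i\ne t_i$ holds automatically. It reaches $\sum_i s_i=q-2$ by \Cref{lem:even-counts}, since a sum of such sets is a full interval once some $r_i$ is odd. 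It then trades values to keep $s_i\le q-3$. In the example, $(s_1,t_1)=(5,1)$ and $(s_2,t_2)=(1,2)$ work. This is the actual role of parity. Note that the parity hypothesis is already forced by the sum identity, since $\sum_i(d_i-1)=2q-d_*-1$ is odd.

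The second gap is the vertex $\infty$. For $A=\{\infty\}\cup X$ and $B=Y$, Kneser gives only $|X|+|Y|\le a+1$, so $|A|+|B|\le a+2$, which is one too many for a pure paired group. The fact that the neighbours of $\infty$ form a progression does not remove this extra vertex. You must exclude the equality case $|P|+|Q|=a+1$ with $P+Q\subseteq S$, $2Q\subseteq S$ and $P\cap Q=\varnothing$, where $P=2^{-1}X$ and $Q=2^{-1}Y$, and this must be done in $\Z_{q-1}$, which need not be prime. The paper proceeds in three steps (\Cref{lem:round-robin-block}):
\begin{itemize}
\item it rules out a nontrivial stabilizer by a coset count, using $|H|\ge3$ since $q-1$ is odd;
\item it applies Kemperman's structure theorem in Lev's form (\Cref{lem:critical-interval}) to make $P$ and $Q$ intervals with $P+Q=S$;
\item it then derives a common element of $P$ and $Q$ from $2Q\subseteq S$, a contradiction.
\end{itemize}

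A smaller slip: in your cross-layer step, a nontrivial stabilizer is not impossible. For $q=6$, the interval $\{0,1,2,3\}$ contains the coset $\{0,3\}$. The bound $k+1$ still holds, but it comes from the coset count of \Cref{lem:interval}, not from the stabilizer being trivial.
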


Now we are ready to prove the main theorem for even $q$.

\begin{theorem}\label{thm:even-stabilization}
Let $p\ge3$ and $d_1,\ldots,d_p\ge2$. Suppose that the number of even local dimensions is positive and even, at least one local dimension is greater than two, and $q=(f_{LB}+1)/2$ is even.  Then
\[
 f_m(d_1,\ldots,d_p)=f_{LB}+1.
\]
\end{theorem}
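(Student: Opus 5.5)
The lower bound is \Cref{prop:lower}, so only the upper bound needs proof: a UPB of size $2q=f_{LB}+1$. First I will dispose of the cases already covered. If $d_p\ge q$, then \Cref{thm:CJ} applies unless $\sum_{j<p}(d_j-1)<3$, which forces $(2,2,d)$. With $d$ even, \Cref{thm:AL} makes that case unobstructed; with $d$ odd, $q=(d+3)/2$ and \Cref{prop:22d} applies. Hence we may assume $d_p\le q-1$. For $q=4$ we have $\sum_j(d_j-1)=6$ with every $d_j\le3$, at least two even and at least one equal to $3$. The tuples are then $(2,2,3,3)$ and $(2,2,2,2,3)$, and these need a direct construction. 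I would first try \Cref{prop:even-library}(1) with a qubit as exceptional subsystem, with the ordinary graphs chosen by hand so that each has biclique number at most its dimension. So assume $q\ge6$.

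Next I choose the exceptional subsystem $t$ with $d_*=d_t$ even. Since the number of even dimensions is even and positive, it is at least $2$, so removing $t$ leaves an odd number of even ordinary dimensions, as \Cref{prop:even-factor} requires. To have $d_*\le q-2$, note that $d_*$ is even and $d_*\le q-1$ with $q$ even, so this holds automatically. I then build the exceptional vectors in $\C^{d_*}$ by \Cref{prop:even-library}: Part (1) if $d_*=2$, and Part (2) if $4\le d_*\le q-1$. In Part (1) the unused cross-layer matchings are all $M_s$ with $s\ne0$, which number $q-1=q-d_*+1$ and form a cyclic interval. So in both cases $\overline{G_*}$ consists of $q-d_*+1$ consecutive cross-layer matchings together with the two layer copies of $K_q$ minus the antipodal matching. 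Pairing the round-robin factors gives the $q-2$ perfect matchings. The count check is
\[
\sum_{i\ne t}(d_i-1)=2q-2-(d_*-1)=(q-2)+(q-d_*+1),
\]
which is exactly the hypothesis of \Cref{prop:even-factor}. The remaining hypotheses also hold: every $d_i\le q-1$, and since $m\ge2$ we have $d_i\ge2$.

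Applying \Cref{prop:even-factor} yields spanning graphs $G_i$ partitioning $\overline{G_*}$ with $\bic(G_i)\le d_i$. Since $2q>d_i$, \Cref{lem:bic-connect} then gives orthogonal representations in $\R^{d_i}\subset\C^{d_i}$ in which every $d_i$ vectors are linearly independent. The graph $G_*$ together with the $G_i$ covers $K_{2q}$, and the exceptional vectors satisfy the $(d_*+1)$-spanning property. \Cref{lem:one-extra} therefore gives a UPB of size $2q$.

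The main obstacle is the small case $q=4$, which lies outside \Cref{prop:even-factor}. For it I would either verify an explicit allocation of the $q-2=2$ paired matchings and $q-d_*+1=3$ cross matchings among the ordinary subsystems, checking the biclique bounds by hand, or cite the known constructions of Feng or Chen--Johnston for these two specific tuples.
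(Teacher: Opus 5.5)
\textbf{Verdict.} For $q\ge6$ your argument is the paper's own proof, and all your hypothesis checks there are correct. The reductions go through \Cref{thm:CJ}, \Cref{thm:AL} and \Cref{prop:22d}; the even exceptional subsystem is built by \Cref{prop:even-library}; the ordinary graphs come from \Cref{prop:even-factor}; and the proof closes with \Cref{lem:bic-connect} and \Cref{lem:one-extra}.

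\textbf{The gap at $q=4$.} You correctly reduce this case to $(2^4,3)$ and $(2^2,3^2)$, but you never settle it. The citation fallback does not apply, because \Cref{thm:CJ} requires $d_p\ge q$ and here $d_p=3<4$. The hand allocation is also not a routine instance of the scheme behind \Cref{prop:even-factor}. A qutrit has $r=2$, so its number of paired layer matchings must lie in $B_2=\{0,2\}$. For $(2^2,3^2)$ the totals $\sum s_i=q-2=2$ and $\sum t_i=q-d_*+1=3$ then force one qutrit to receive both paired layer matchings. Their union contains the layer $4$-cycles $u_0u_1u_2u_3$ and $v_0v_1v_2v_3$, so its biclique number is $4>3$.

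\textbf{How to close it.} The missing step is a direct check of the equality split $s_i=t_i=1$, which \Cref{lem:two-layer} excludes in general. With Part~(1) at $q=4$, $G_*$ is two disjoint $4$-cycles. Its complement $\overline{G_*}$ is the union of five perfect matchings: $M_1,M_2,M_3$, $P_a=\{u_0u_1,u_2u_3,v_0v_1,v_2v_3\}$ and $P_b=\{u_1u_2,u_3u_0,v_1v_2,v_3v_0\}$. A union of two of them is $2$-regular, so its biclique number is at most $3$ exactly when it contains no $4$-cycle. One checks that $P_a\cup M_1$ and $P_b\cup M_3$ are $8$-cycles.
\begin{itemize}
\item For $(2^2,3^2)$, give $P_a\cup M_1$ and $P_b\cup M_3$ to the two qutrits and $M_2$ to the qubit.
\item For $(2^4,3)$, give $P_a\cup M_1$ to the qutrit and $M_2$, $M_3$, $P_b$ to the three qubits.
\end{itemize}

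\textbf{The pairing matters.} Suppose you instead pair $\{u_0u_1,u_2u_3\}$ with $\{v_1v_2,v_3v_0\}$. Then every union of two matchings that avoids $M_2$ contains a $4$-cycle. Since the two qutrits need disjoint pairs, at most one pair can contain $M_2$, so $(2^2,3^2)$ cannot be allocated. Up to relabeling, the construction above is the paper's explicit $q=4$ construction: its $M_0$ is your $M_2$, and its $8$-cycles $M_1\cup M_2$ and $M_3\cup M_4$ play the role of your two $8$-cycles.
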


\begin{proof}
The lower bound is \Cref{prop:lower}.  Apply \Cref{thm:CJ} when possible,
and use the theta construction for the special case $2\times2\times d$ in
\Cref{prop:22d}.  We may otherwise assume $d_j\le q-1$ for all $j$.

For $q\ge6$, choose an even exceptional dimension $d_*$.  
After removing it, an odd number of even dimensions remain, so \Cref{prop:even-factor} applies. 
 By \Cref{lem:bic-connect}, every ordinary graph $G_i$ has an orthogonal representation in $\mathbb R^{d_i}$, 
 in which every $d_i$ vectors are linearly independent.  
 By \Cref{prop:even-library}, $G_*$ is the orthogonality graph of $2q$
 exceptional local vectors in $\C^{d_*}$, every $d_*+1$ of which span
 $\C^{d_*}$.
  Since the exceptional and ordinary graphs  \(G_*,G_1,\ldots,G_m\) partition $K_{2q}$,  
  \Cref{lem:one-extra} shows that the resulting \(2q\) product states
form a UPB.

The only smaller value is $q=4$, for which the only remaining cases are
$(2^4,3)$ and $(2^2,3^2)$.  In either case, choose one qubit as the
exceptional subsystem.  Choose nonzero vectors $x,y,x',y'\in\C^2$ such that
$x\perp y$, $x'\perp y'$, and no two of them are
proportional.
For the eight exceptional local vectors, set
\[
 c_0=c_2=x,\qquad c_1=c_3=y,\qquad
 c_4=c_6=x',\qquad c_5=c_7=y'.
\]
Their orthogonality graph $G_*$ is exactly the disjoint union of the two
four-cycles $(0,1,2,3,0)$ and $(4,5,6,7,4)$.  Moreover, each of
$x,y,x',y'$ appears exactly twice among the $c_i$, so any three of the $c_i$
contain two nonproportional vectors and hence span $\C^2$.

The complement $\overline{G_*}$ is the disjoint union of the following five
perfect matchings, where $ij$ denotes the edge $\{i,j\}$:
\[
\begin{aligned}
 M_0&=\{02,13,46,57\},&
 M_1&=\{04,15,26,37\},\\
 M_2&=\{07,14,25,36\},&
 M_3&=\{05,16,27,34\},\\
 M_4&=\{06,17,24,35\}.
\end{aligned}
\]
Here $M_1\cup M_2$ and $M_3\cup M_4$ are the eight-cycles
\[
 (0,4,1,5,2,6,3,7,0)
 \quad\text{and}\quad
 (0,5,3,4,2,7,1,6,0),
\]
respectively.  Thus each $M_j$ is $1$-regular with biclique number $\bic(M_j)=2$,
while each displayed eight-cycle is $2$-regular with biclique number $3$.

For $(2^4,3)$, assign $M_1\cup M_2$ to the qutrit and assign
$M_0,M_3,M_4$ to the three ordinary qubits.  For $(2^2,3^2)$, assign
$M_1\cup M_2$ and $M_3\cup M_4$ to the two qutrits and assign $M_0$ to
the ordinary qubit.  The required ordinary local representations now follow
from \Cref{lem:bic-connect}. Combining with the $8$ exceptional local vectors $c_i$, \Cref{lem:one-extra} gives a UPB of size $2q=8$.
\end{proof}

%%%%%%%%%%%%%%%%%%%%%%%%%%%%%%%%%%%%%%%%%%%%
\section{Odd-\texorpdfstring{$q$}{q} Case}\label{sec:odd}

This section proves the upper bound when $q$ is odd.  We retain the two layers
$U=\{u_i:i\in\Z_q\}$ and $V=\{v_i:i\in\Z_q\}$ and the cross-layer
matchings $M_s$ defined in Section~\ref{sec:even}.  Put $h=(q-1)/2$.  For
$1\le a\le h$, define two $2$-regular layer graphs
\[
 F_a^U=\{u_i u_{i+a}:i\in\Z_q\},
 \qquad
 F_a^V=\{v_i v_{i+a}:i\in\Z_q\}.
\]
Thus $F_a^U$ joins each $u_i$ to $u_{i-a}$ and $u_{i+a}$, and $F_a^V$
does the same in the $V$ layer.  We call $a$ the corresponding
\emph{layer difference}.  

As before, one
subsystem is chosen as exceptional, $G_*$ denotes the orthogonality graph of
its local vectors, and all remaining subsystems are ordinary.  A layer graph
is \emph{available} if all its edges lie in $\overline{G_*}$, and a pair of
available layer graphs $F_a^U,F_a^V$ is \emph{aligned}.  
Let $A\subseteq\{1,\ldots,h\}$ consist of layer differences for which the
pairs $F_a^U,F_a^V$ are aligned, and let $B\subseteq\Z_q$ consist
of shifts for which the cross-layer matchings $M_b$ are unused by $G_*$.
Let $H(A,B)$ be the graph on $U\cup V$ formed by these aligned layer
pairs and unused cross-layer matchings.

The \emph{dihedral group of order $2q$} is
\[
 D_{2q}=\langle r,s\mid r^q=s^2=1,\ srs=r^{-1}\rangle.
\]
Its elements are the $q$ rotations $r^i$ and the $q$ reflections $r^is$,
$i\in\Z_q$.  
Identify $u_i$ with $r^i$ and $v_i$ with $r^is$.  Then $H(A,B)$ is the Cayley graph
of $D_{2q}$ with generating set
$\{r^{\pm a}:a\in A\}\cup\{r^bs:b\in B\}$.  That is, its vertex set is
$D_{2q}$, and each $x\in D_{2q}$ is joined to $xr^a$ and $xr^{-a}$ for
every $a\in A$, and to $xr^bs$ for every $b\in B$.  Under the
identification above, the first type of edges forms the aligned pair
$F_a^U,F_a^V$, while the second forms $M_b$.
The graph is undirected: for each $a\in A$, multiplication by $r^a$ can be
reversed by multiplication by $r^{-a}$, while $(r^bs)^2=1$ for each
$b\in B$, so multiplication by $r^bs$ reverses itself.  By Stong's theorem~\cite{Stong1985}, every connected Cayley graph on a
dihedral group admits a $1$-factorization; equivalently, its edges can be
partitioned into perfect matchings.

The proof has two stages.  First, \Cref{prop:odd-library} constructs the
exceptional local vectors and determines their orthogonality graph $G_*$ for
every possible non-qubit dimension.  Second, \Cref{prop:odd-allocation} partitions $\overline{G_*}$ among the
ordinary subsystems, using perfect-matching decompositions of the relevant
graphs $H(A,B)$. By
\Cref{lem:bic-connect}, the ordinary graphs have the required orthogonal
representations, and \Cref{lem:one-extra} then gives a UPB.

\begin{proposition}\label{prop:odd-library}
Let $q\ge5$ be odd and put $h=(q-1)/2$.
\begin{enumerate}[label=(\arabic*)]
\item If $q\ge7$, there are $2q$ vectors in $\R^3\subseteq \C^3$ such that
every four span $\C^3$.  Their orthogonality graph consists exactly of
$F_h^U$, $F_{h-1}^V$, and $M_0$.

\item There are $2q$ vectors in $\C^4$ such that every five span $\C^4$.
Their orthogonality graph consists exactly of
$F_h^U$, $F_{h-1}^V$, $M_0$, and $M_1$.

\item If $q\ge7$, there are $2q$ vectors in $\C^5$ such that every six
span $\C^5$.  In each layer, the orthogonality graph is
$F_h^U\cup F_{h-1}^U$ or $F_h^V\cup F_{h-1}^V$, and its cross-layer
edges form exactly $M_0$.  

\item For every $6\le d\le q-1$, there are $2q$ vectors in $\C^d$ such
that every $d+1$ span $\C^d$.  The graph induced by each layer is its
cycle $F_1^U$ or $F_1^V$, and the full orthogonality graph has cross-layer
degree $d-2$.  The unused cross-layer matchings form a cyclic interval of
length $q-d+2$.
\end{enumerate}
\end{proposition}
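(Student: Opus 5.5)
The plan is to build all four families as \emph{weighted Fourier orbits}. Fix $\zeta=e^{2\pi i/q}$, a frequency set $S\subseteq\Z_q$ with $|S|=d$, and nonnegative weights $w_k,w'_k$ for $k\in S$. Put
\[
 u_i=(w_k\zeta^{ki})_{k\in S},\qquad v_i=(w'_k\zeta^{ki})_{k\in S},
\]
possibly after a fixed unitary change of coordinates in the $V$ layer. Then every inner product depends only on the index difference:
\[
 \langle u_i,u_j\rangle=\sum_k w_k^2\zeta^{k(j-i)},\qquad
 \langle v_i,v_j\rangle=\sum_k w_k'^2\zeta^{k(j-i)},\qquad
 \langle u_i,v_j\rangle=\sum_k w_kw'_k\zeta^{k(j-i)}.
\]
So each of the three orthogonality patterns (inside $U$, inside $V$, across) is a union of difference classes. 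These are exactly the $F_a^U$, the $F_a^V$ and the $M_s$, which is why the statements are phrased in these terms. Each part therefore reduces to choosing weights so that three trigonometric polynomials on $\Z_q$ have prescribed zero sets.

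For the real three-dimensional part (1), take $S=\{0,1,-1\}$ with symmetric weights, so that in cosine/sine coordinates the vectors are real. Then $\langle u_i,u_j\rangle=x^2+2y^2\cos(2\pi(j-i)/q)$. Choosing $x^2=-2y^2\cos(2\pi h/q)$ makes the $U$-graph exactly $F_h^U$, since the values $\cos(2\pi a/q)$, $1\le a\le h$, are distinct. Likewise $x'^2=-2y'^2\cos(2\pi(h-1)/q)$ for $V$; this needs $\cos(2\pi(h-1)/q)<0$, which is exactly where $q\ge7$ enters.

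The cross condition $M_0$ alone cannot be met with the same frequency layout: forcing $xx'+2yy'=0$ contradicts the two cone equations, because it would require $\cos(2\pi h/q)\cos(2\pi(h-1)/q)=1$. So I would realize $V$ as the orbit with frequency $k$ replaced by a multiple $\lambda k$, or equivalently tilt the $V$-cone by a rotation $R$ of $\R^3$. I would then solve for $R$ so that $\langle u_0,Rv_0\rangle=0$ while $\langle u_0,Rv_s\rangle\neq0$ for $s\neq0$. This uses $q$ inequalities in a one- or two-parameter family, so a generic solution should exist.

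Part (2) adds one coordinate. Take $S=\{0,1,-1,c\}$ with an extra weight, using it to make the cross polynomial vanish at both $0$ and $1$ while keeping the layer polynomials vanishing only at $\pm h$ and $\pm(h-1)$. Part (3) uses five frequencies $\{0,\pm1,\pm2\}$. There the layer polynomial $x^2+2y^2\cos\theta+2z^2\cos2\theta$ is a quadratic in $\cos\theta$, so its two roots can be placed at $\cos(2\pi h/q)$ and $\cos(2\pi(h-1)/q)$, while a cross polynomial vanishing only at $0$ is arranged as in part (1).

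For part (4), with $6\le d\le q-1$, take $d$ frequencies. The layer polynomials should vanish only at $\pm1$, and the cross polynomial should vanish exactly on a cyclic interval of $d-2$ shifts. I would prescribe the cross function $g(s)=\sum_k w_kw'_k\zeta^{ks}$ directly as a polynomial with those $d-2$ roots in $\zeta^s$, placed symmetrically so that $g$ is supported on $d$ frequencies. I would then split its coefficients into $w_kw'_k$ with $w_k,w'_k>0$, keeping enough freedom to tune $\sum_k w_k^2\zeta^{ks}$ so that it vanishes only at $s=\pm1$. The unused cross matchings are then the complementary interval of length $q-d+2$.

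In every part, the spanning property ``every $d+1$ vectors span $\C^d$'' is equivalent to saying that no hyperplane contains $d+1$ of the vectors. I would prove it by genericity in the free parameters. Each bad event is the vanishing of all $d\times d$ minors of some $(d+1)$-subset, a polynomial condition in the parameters. Choosing the weights generically inside the solution variety avoids all of them once each is shown not to vanish identically. For that I would specialize parameters, for example let one weight tend to $0$ or $\infty$, so the minors degenerate to generalized Vandermonde determinants in distinct roots of unity, which are nonzero since the indices in each layer are distinct. The cases of two coincident orthogonal pairs need separate care.

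I expect the main obstacle to be the simultaneous realization in parts (1) and (2). There the $U$, $V$ and cross zero sets are rigidly constrained, the naive symmetric ansatz is inconsistent, and one must show that the added tilt parameter admits a solution that avoids all unwanted orthogonalities and coplanarities.
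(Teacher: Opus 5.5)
Your weighted-Fourier-orbit frame is the paper's. Your part (1) idea of two cones related by a rotation becomes its double-cone construction once the rotation is taken about the common axis, and your part (3) layer kernel is, up to normalization, its $K(z)$. There is nonetheless a genuine gap, in two places: realizing the three kernels simultaneously in parts (2) and (4), and proving the spanning property.

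\textbf{The kernels.} The cross kernel has coefficients $\overline{w_k}w'_k$, so its moduli are pinned by the layer kernels, $|C_k|^2=|w_k|^2|w'_k|^2$, and only the phases are free. With your nonnegative weights, $\langle u_i,v_i\rangle=\sum_kw_kw'_k>0$, so shift $0$ is never orthogonal. In particular, your part (4) step of splitting the coefficients of $g$ as $w_kw'_k$ with $w_k,w'_k>0$ fails for odd $d$. Real coefficients make the zero set of $g$ closed under $s\mapsto-s$, and a negation-closed cyclic interval of odd length $d-2$ contains $0$, since an involution of an odd set has a fixed point.

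More seriously, you tune only $\sum_kw_k^2\zeta^{ks}$. The $V$ kernel $\sum_k(|C_k|^2/w_k^2)\zeta^{ks}$ must vanish exactly at $s=\pm1$ at the same time, and this reciprocal coupling is the real difficulty. The paper handles it with two ideas absent from your plan:
\begin{itemize}
\item The cross coefficients are the values $p_j=p_\eta(\zeta^j)$ of a degree-$(q-d+1)$ polynomial with nonzero coefficients vanishing at $\zeta^d,\dots,\zeta^{q-1}$. Fourier inversion then makes the cross inner product nonzero exactly on an interval of $q-d+2$ shifts.
\item The $U$ weights are placed on separated scales in $\varepsilon$, so that $\sum_j\lambda_jz^j\to1+\rho z+z^2$ while $\varepsilon^M\sum_jA_j\lambda_j^{-1}z^j\to z^{d-3}(1+\rho z+z^2)$. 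The exact zeros at $\zeta^{\pm1}$ are restored by the implicit function theorem.
\end{itemize}
Part (2) likewise needs an explicit solution of $|p_k|^2=R^2\lambda_k\mu_k$ with positive weights. The paper gives a one-parameter family with $p(T)=(T-1)(T-\omega)(T-\eta)$, $|\eta|\ne1$, which is nonempty because $|M(1)|<2$. ``An extra weight'' on frequencies $\{0,\pm1,c\}$ does not supply this.

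\textbf{The spanning property.} ``Choose parameters generically'' does not deliver it.
\begin{itemize}
\item In part (1) there is nothing to be generic in. A general tilt of the $V$-cone destroys the dependence on $j-i$. Once the tilt is restricted to maps commuting with the rotation about the common axis, the cone equations and $\langle u_0,v_0\rangle=0$ fix the configuration up to symmetry, with $\cos\delta=\sqrt{\cos(\pi/q)\cos(3\pi/q)}$. Both the absence of further cross orthogonalities and the $2U+2V$ spanning must be checked for this specific $\delta$. Each reduces to $q\delta/\pi\notin\Z$, which holds because $2\pi/q<\delta<3\pi/q$, but this must be shown.
\item Your nonvanishing certificate is false as stated. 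Minors of $(\zeta^{ji})$ at distinct roots of unity can vanish when the rows are not consecutive: rows and columns $\{0,3\}$ give $\zeta^9-1=0$ for $q=9$. This is why the paper always isolates minors with consecutive rows.
\item In the hardest selections no single minor survives the degeneration. For $d=4$ and $2U+3V$, all three rescaled $V$-columns tend to multiples of $e_3$ as $a\to0$. One must go to next order, which yields $(\rho\sigma+1)s_1s_2s_3=\xi^2r_1r_2(s_1+s_2+s_3)$. This is excluded by a bound on $|s_1+s_2+s_3|$ for distinct $q$th roots, with $q=5$ and $q=7$ checked by hand. Irreducibility of $z^2-M(a)z+R^2(a)=0$ is then needed to carry the nonvanishing to the admissible parameters near $a=1$.
\item For $d\ge6$, the counts $s=1$ and $s=d-1$ each have three competing minimal-order terms. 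They are rescued only by the auxiliary root $\eta$ and a pole argument, and the counts $s=2,d-2$ have to be sidestepped.
\item In part (3), the paper obtains its certificate from the free unit parameter $t$ of a monomial unitary, degenerating at $t=0$ and $t\to\infty$. Your ``as in part (1)'' arrangement gives no comparable certificate.
\end{itemize}
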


The proof is in
Appendix~\ref{app:odd-exceptional}.  Part~(1) uses a real double-cone
construction, and Part~(2) uses two Fourier orbits in a one-parameter
algebraic family.  Part~(3) relates two five-dimensional Fourier orbits by
a monomial unitary matrix, and Part~(4) uses weighted Fourier orbits whose
weights are obtained by the implicit function theorem.  Part~(1) is an
explicit construction, whereas (2)--(4) provide parameterized
families and establish the existence of suitable parameter choices.

\begin{proposition}\label{prop:odd-allocation}
Let $q\ge7$ be odd, and suppose $3\le d_p\le q-1$, 
\(
 \sum_{i=1}^p(d_i-1)+2=2q,
\)
and the number of even $d_i$ ($1\le i\le p$) is positive and even.  Choose
a subsystem of smallest non-qubit dimension as exceptional.  Write $t$ for its index and
$d_*=d_t$, and let $G_*$ be the orthogonality graph supplied by \Cref{prop:odd-library}.  Then $\overline{G_*}$ can be
partitioned into spanning subgraphs $G_i$  such that
\(
 \bic(G_i)\le d_i
\) for all $i\ne t$.
\end{proposition}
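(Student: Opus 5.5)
The plan is to build each ordinary graph $G_i$ as a $(d_i-1)$-regular spanning subgraph of $\overline{G_*}$, assembled from whole aligned layer pairs $F_a^U\cup F_a^V$ and unused cross-layer matchings $M_b$, or from perfect matchings of such unions.

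\emph{Degree count.} By \Cref{prop:odd-library}, every vertex has degree exactly $d_*$ in $G_*$ in each of the four cases: in part~(4), layer degree $2$ plus cross degree $d_*-2$, and the small cases are checked directly. Hence $\overline{G_*}$ is $(2q-1-d_*)$-regular, and
\[
 2q-1-d_*=\sum_{i\ne t}(d_i-1).
\]
So no slack is available: each $G_i$ must be exactly $(d_i-1)$-regular. I will therefore list $\overline{G_*}$ as a family of ``blocks'': the aligned layer pairs $F_a^U\cup F_a^V$ (each $2$-regular) with $a$ in the complementary set $A$ of available layer differences, and the unused $M_b$ (each $1$-regular) with $b$ in the cyclic interval $B$. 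In part~(4), $A=\{2,\dots,h\}$ and $|B|=q-d_*+2$. In the small cases the layer graphs are not paired in the same way, so there I would handle the mismatched differences $h$ and $h-1$ by hand.

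\emph{Allocation by parity.} An ordinary subsystem with odd $d_i$ needs even degree, so it receives $(d_i-1)/2$ aligned layer pairs. A subsystem with even $d_i$ needs odd degree, so it must take at least one cross matching. By the parity hypothesis and the choice of $d_*$ as the smallest non-qubit dimension, the number of even $d_i$ with $i\ne t$ is odd or even in a controlled way, and I would match this against $|B|$. Any surplus cross matchings $M_b,M_{b+1}$ would either be grouped in pairs or combined with layer pairs into a connected Cayley graph $H(A',B')$ of $D_{2q}$. By Stong's theorem, such a graph splits into perfect matchings, and these can be redistributed to top up odd-degree subsystems. Qubit subsystems ($d_i=2$) simply receive single perfect matchings, which have $\bic=2$.

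\emph{Biclique estimates (main obstacle).} The hard step is proving $\bic(G_i)\le d_i$ for a $(d_i-1)$-regular Cayley-type graph. A biclique $K_{a,b}$ with parts $X,Y$ and $a+b=d_i+1\le 2(d_i-1)$ is not excluded by degree alone. I would use the Cayley structure. If $X\subseteq U$ and $Y$ sits inside the common neighbourhood, then within one layer $Y$ is contained in $\bigcap_{x\in X}(x\pm S)$, where $S$ is the connection set of differences. I will choose $S$ to be a set of consecutive differences, or a dilate of one, so that $X+S$ is controlled by Cauchy--Davenport-type or interval sumset estimates in $\Z_q$. For composite $q$, I would instead use direct interval arguments with a dilation by a unit. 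These give $|\bigcap_{x\in X}(x+S)|\le |S|-|X|+1$, so $|X|+|Y|\le |S|+1=d_i$.

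Cross-layer edges add reflections. I would check separately the mixed bicliques that use both layers, using the fact that $M_b$ contributes only one neighbour per vertex. This is why cross matchings will be kept few per subsystem and taken from a consecutive interval. The delicate configurations I expect to need separate verification are $d_i\in\{3,4\}$, where $4$-cycles and $K_{2,3}$ must be excluded, and the small-$d_*$ library cases $d_*\in\{3,4,5\}$, whose irregular layer graphs $F_h,F_{h-1}$ break the symmetric pairing. For these I would first try explicit differences, such as $F_1\cup F_2$ with a single $M_b$, and verify directly that they contain no $C_4$ or $K_{2,3}$.
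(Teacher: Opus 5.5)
\textbf{There is a genuine gap in the allocation step.} Your outline has the right ingredients: exactly regular pieces built from aligned layer pairs and consecutive cross-layer matchings, Stong's theorem for the qubits, and sumset bounds for bicliques. But the allocation rule is infeasible as stated. Giving every odd $d_i$ exactly $(d_i-1)/2$ layer pairs can demand more pairs than exist. Take $(2,2,7,7,7)$: here $q=11$ and $d_*=7$. Part~(4) then leaves only $L=(q-3)/2=4$ layer pairs (differences $2,\dots,5$) and $C=6$ cross matchings, yet the two ordinary $7$'s would need six pairs. So odd-dimensional subsystems must in general also take cross matchings.

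Your fallback, topping up subsystems with perfect matchings taken from a Stong $1$-factorization, abandons the Cayley/interval structure on which every biclique estimate depends. An arbitrary perfect matching added to a graph can raise its biclique number by far more than one. What is missing is an explicit count, in three parts:
\begin{enumerate}
\item Give each ordinary non-qubit subsystem $x_i$ consecutive layer pairs and $t_i=d_i-1-2x_i$ consecutive cross matchings.
\item When qubits are present, reserve $C_0\ge1$ cross matchings, so that the qubit residue is a \emph{connected} dihedral Cayley graph.
\item Show that these numbers can actually be met.
\end{enumerate}
The paper does this with $X=(S-C+C_0)/2$ and the inequality $\varepsilon\le C-1$. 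That inequality is exactly where the choice of $d_*$ as the smallest non-qubit dimension enters; for example, in Part~(4) it follows from $(C-1)(d_*-1)-(2q-d_*-1)=(q-d_*)(d_*-3)\ge0$.

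\textbf{Your mixed-biclique step misses the real obstruction.} Once odd-dimensional subsystems take cross matchings, the case $s_i=t_i$ becomes possible. A layer annulus with $s$ differences combined with a cross interval of $t$ shifts gives the bound $s+t+1$ only when $s\ne t$. When $s=t$, all four partial bounds can be tight at once; the fact that each $M_b$ contributes one neighbour per vertex does not prevent this.

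Here is a concrete instance with $q=11$. Since $2h\equiv-1$, the annulus $\{\pm4,\pm5\}=\{4,5,6,7\}$ is a cyclic interval. Together with $M_0,\dots,M_3$, the sets $\{u_0,v_7\}$ and $\{u_4,\dots,u_7,v_0,\dots,v_3\}$ form a $K_{2,8}$. So if this $8$-regular graph is assigned to a $9$-dimensional subsystem, it has $\bic(G_i)\ge10$.

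The paper rules out $s_i=t_i$ by drawing $x_i$ from the sets $\mathcal A(r)$, which delete $r/4$ when $4\mid r\ge8$. It proves these sets still sum to full intervals. The one retained equality, $d_i=5$ with $s_i=t_i=2$, is handled separately by choosing the layer difference so that $2a\not\equiv\pm1\pmod q$.

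\textbf{The cases $d_*\in\{3,4\}$ need more than a case check.} Handling them ``by hand'' is not enough; you need a device that works uniformly in $q$ for the mismatched pair $F_{h-1}^U,F_h^V$. The paper absorbs this pair into the smallest ordinary non-qubit subsystem, together with $d-3$ cross matchings, or with $F_{h-2}^U\cup F_{h-2}^V$ when that dimension is $5$. Also, when all ordinary subsystems are qubits, the complement is not of the aligned form $H(A,B)$. Stong's theorem therefore does not apply directly, and a separate $1$-factorization argument is required.
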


With the exceptional local vectors and the ordinary graphs now constructed,
the main theorem follows.

\begin{theorem}\label{thm:odd-stabilization}
Let $p\ge3$ and $d_1,\ldots,d_p\ge2$. Suppose that the number of even local dimensions is positive and even, at least one local dimension is greater than two, and $q=(f_{LB}+1)/2$ is odd.  Then
\[
 f_m(d_1,\ldots,d_p)=f_{LB}+1.
\]
\end{theorem}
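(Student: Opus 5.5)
The plan mirrors the proof of \Cref{thm:even-stabilization}. The lower bound $f_m\ge f_{LB}+1$ is \Cref{prop:lower}, so only the construction of a UPB of size $2q$ is needed. If $d_p\ge q$, I apply \Cref{thm:CJ} when $\sum_{j<p}(d_j-1)\ge3$. The only remaining configuration with $d_p\ge q$ is $(2,2,d)$ with odd $d$, and it is covered by \Cref{prop:22d}. Hence I may assume $d_j\le q-1$ for all $j$, and since some $d_j\ge3$, also $3\le d_p\le q-1$.

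Next I dispose of small $q$. Since $d_p\le q-1$ and $d_p\ge3$, we need $q\ge4$, so the first odd value is $q=5$. There $\sum(d_j-1)=8$ with each $d_j\le4$ and the number of even $d_j$ positive and even. This leaves finitely many tuples: $(2^2,3^3)$, $(2^2,4,3)$ with one more $3$ ($\sum=1+1+3+2=7$ does not fit), $(2^4,3,3)$, $(2^6,3)$, $(2^2,4,4)$ and the like; I would enumerate them exactly. For these I plan an ad hoc construction in the style of the $q=4$ case: take a qubit or the smallest non-qubit system as exceptional, using \Cref{prop:odd-library}(2) when a $4$ is present, and split $\overline{G_*}$ on $10$ vertices by hand into perfect matchings and $2$-factors (cycles, whose biclique number is $3$). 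These pieces go to the qubits and qutrits.

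For $q\ge7$ I invoke \Cref{prop:odd-allocation} directly. Its hypotheses are satisfied: $q$ is odd, $3\le d_p\le q-1$, $\sum(d_i-1)+2=2q$, and the number of even dimensions is positive and even. The exceptional subsystem $t$ is one of smallest non-qubit dimension $d_*\ge3$, and \Cref{prop:odd-library} supplies $2q$ vectors in $\C^{d_*}$, every $d_*+1$ of which span, with orthogonality graph $G_*$. Part (1), (2), (3) or (4) is used according as $d_*=3,4,5$ or $d_*\ge6$, and $d_*\le d_p\le q-1$ keeps us within the range of Part (4). \Cref{prop:odd-allocation} partitions $\overline{G_*}$ into $G_i$ with $\bic(G_i)\le d_i$. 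Since $2q>d_i$, \Cref{lem:bic-connect} yields orthogonal representations in $\R^{d_i}\subseteq\C^{d_i}$ in which every $d_i$ vectors are linearly independent. The graphs $G_*$ and $G_i$ together cover $K_{2q}$, so \Cref{lem:one-extra} shows that the $2q$ product states form a UPB.

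The main obstacle is the $q=5$ residue. \Cref{prop:odd-library}(1),(3) require $q\ge7$, and (4) is vacuous there, so a qutrit exceptional subsystem has no supplied library. My first attempt is to make a qubit exceptional, as in the $q=4$ case. I would use vectors $x,y,x',y'$, with orthogonal pairs repeated so that every three span $\C^2$, giving a $G_*$ that is a union of even cycles on $10$ vertices; for $10$ vertices one would use a $4$-cycle and a $6$-cycle, or an ordering with each vector repeated at most twice. I would then check by hand that the complement decomposes into matchings and $2$-factors matching the multiset of ordinary dimensions, with biclique bounds $2$ and $3$ respectively, and with a $4$-regular piece of biclique number at most $4$ if a ququart is ordinary.
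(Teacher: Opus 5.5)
\textbf{Gap at $q=5$.} Your reduction to $d_j\le q-1$ and your treatment of $q\ge7$ (via \Cref{prop:odd-library}, \Cref{prop:odd-allocation}, \Cref{lem:bic-connect} and \Cref{lem:one-extra}) match the paper. The problem is $q=5$: you correctly identify it as the obstacle, but your proposed fix cannot work. Your enumeration there is also incorrect; the eight tuples are $(2^6,3)$, $(2^4,3^2)$, $(2^2,3^3)$, $(2^3,3,4)$, $(2,3^2,4)$, $(3,4,4)$, $(2^5,4)$ and $(2^2,4^2)$.

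Making a qubit exceptional, as in the $q=4$ case, is impossible when $q$ is odd. In $\C^2$ each line has a unique orthogonal line, and ``every three span'' allows at most two vectors on any line. So $G_*$ is a disjoint union of complete bipartite graphs $K_{m,m'}$ with $m,m'\le2$. In particular a $6$-cycle cannot occur, since it would put three vectors on one line. Moreover, $G_*$ is $2$-regular only if it is a union of $4$-cycles, which requires $4\mid 2q$. On $10$ vertices $G_*$ therefore has at most $9$ edges. Each ordinary $G_j$ has maximum degree $d_j-1$, hence at most $5(d_j-1)$ edges. The total capacity is then at most $9+5\cdot 7=44<45=\binom{10}{2}$, so no allocation of the complement exists. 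For general odd $q$ the same count gives at most $2q^2-q-1<\binom{2q}{2}$. This is why the paper never uses a qubit as the exceptional subsystem in the odd case.

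\textbf{Missing idea.} For $(2^6,3)$, $(2^4,3^2)$ and $(2^2,3^3)$ you need a genuinely new exceptional qutrit configuration of $10$ vectors at $q=5$. The paper obtains it from the projected-incidence construction in the proof of \Cref{prop:22d}, applied to $G=K_5$:
\begin{itemize}
\item Index the ten states by the edges of $K_5$. This gives vectors in $W\cong\R^3$ whose orthogonality graph is the Petersen graph.
\item Every four of these vectors span, because any four edges of $K_5$ form a spanning subgraph with at most two components.
\item The $6$-regular complement is split explicitly into six perfect matchings $Q_1,\ldots,Q_6$. These are grouped into $10$-cycles and three-matching blocks of biclique number $4$ for the ordinary subsystems.
\end{itemize}
The same Petersen construction also covers $(2^3,3,4)$, $(2,3^2,4)$ and $(3,4,4)$. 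Your alternative of making the $4$ exceptional via Part~(2) is plausible for these, but you have not carried out the required decomposition of the complement. Note also that an ordinary ququart needs a piece of maximum degree $3$, not a $4$-regular piece.
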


\begin{proof}
The lower bound is \Cref{prop:lower}.  Apply \Cref{thm:CJ} when
possible, and use \Cref{prop:22d} for the remaining family
$2\times2\times d$.  We may therefore assume $d_i\le q-1$ for all
$1\le i\le p$.  Since some $d_i>2$ and $q$ is odd, this also forces
$q\ge5$.

First suppose $q\ge7$, and choose a subsystem of smallest non-qubit dimension as exceptional. For its dimension $d_*$,
\Cref{prop:odd-library} supplies $2q$ exceptional local vectors in
$\C^{d_*}$, every $d_*+1$ of which span, with orthogonality graph $G_*$.  By
\Cref{prop:odd-allocation}, the complement $\overline{G_*}$ can be
partitioned into ordinary graphs $G_i$ satisfying $\bic(G_i)\le d_i$.
Then \Cref{lem:bic-connect} gives ordinary local vectors in $\R^{d_i}$ for
which every $d_i$ are linearly independent.  Since $G_*$ and the graphs
$G_i$ partition $K_{2q}$, the resulting product states are pairwise
orthogonal, and \Cref{lem:one-extra} shows that they form a UPB of size
$2q=f_{LB}+1$.

It remains to treat $q=5$.  Then
$2+\sum_j(d_j-1)=2q=10$. The assumptions $d_i\le4$ leave the eight tuples
\[
\begin{gathered}
 (2^6,3),\quad
 (2^4,3^2),\quad
 (2^2,3^3),\quad
 (2^3,3,4),\\
 (2,3^2,4),\quad
 (3,4,4),\quad
 (2^5,4),\quad
 (2^2,4^2),
\end{gathered}
\]
where exponents denote multiplicities of local dimensions.
For the first six tuples, choose one qutrit ($3$-dim subsystem) as exceptional, and index the ten product
states by the $10$ edges
of $K_5$ with vertex set $\{0,1,2,3,4\}$.  The projected-incidence construction from the proof of
\Cref{prop:22d} gives exceptional local vectors in the space
\(
 W=\one^\perp\cap\left(\mathbf e_u-\frac15\one\right)^\perp
\)
for a fixed $u\in\{0,1,2,3,4\}$.  Here $\dim W=5-2=3$, so we identify
$W$ with $\R^3\subseteq\C^3$.  Let $G_*$ be the orthogonality graph of
these exceptional vectors.  By Claim~\ref{claim:1}, two vertices of $G_*$
are adjacent exactly when the corresponding $K_5$ edges are disjoint, incidating that $G_*$ is the Petersen graph. 
Moreover, any four vectors span $W$, because a four-edge subgraph of $K_5$ has at most two components; hence they span $\C^3$.
The edge set of $\overline{G_*}$ is partitioned into the following six
perfect matchings, where $ab\! -\! cd$ denotes the edge joining the vertices
$ab$ and $cd$:
\[
\begin{aligned}
 Q_1={}&\{01\! -\!02,03\! -\!04,12\! -\!13,14\! -\!24,23\! -\!34\},\\
 Q_2={}&\{01\! -\!14,02\! -\!03,04\! -\!24,12\! -\!23,13\! -\!34\},\\
 Q_3={}&\{01\! -\!03,02\! -\!24,04\! -\!34,12\! -\!14,13\! -\!23\},\\
 Q_4={}&\{01\! -\!12,02\! -\!04,03\! -\!13,14\! -\!34,23\! -\!24\},\\
 Q_5={}&\{01\! -\!04,02\! -\!12,03\! -\!23,13\! -\!14,24\! -\!34\},\\
 Q_6={}&\{01\! -\!13,02\! -\!23,03\! -\!34,04\! -\!14,12\! -\!24\}.
\end{aligned}
\]
Assign blocks to the ordinary non-qubit subsystems as follows:
\[
\begin{array}{c|c}
\text{dimension tuple}&\text{assigned blocks}\\
\hline
(2^6,3)&\text{none}\\
(2^4,3^2)&Q_1\cup Q_3\\
(2^2,3^3)&Q_1\cup Q_3,\quad Q_2\cup Q_4\\
(2^3,3,4)&Q_1\cup Q_2\cup Q_3\\
(2,3^2,4)&Q_4\cup Q_6,\quad Q_1\cup Q_2\cup Q_3\\
(3,4,4)&Q_1\cup Q_2\cup Q_3,\quad Q_4\cup Q_5\cup Q_6.
\end{array}
\]
For each tuple, assign each matching not used in its displayed blocks
separately to one ordinary qubit. Each matching has biclique number two, each
two-matching block is a $10$-cycle and has biclique number
three, while every displayed three-matching block has biclique number four.
Thus \Cref{lem:bic-connect} gives the required ordinary local vectors.
Combining them with the exceptional local vectors, \Cref{lem:one-extra}
gives the UPBs.

For the last two tuples $ (2^5,4)$ and $(2^2,4^2)$, choose one
four-dimensional subsystem as exceptional and use Part~(2)
of \Cref{prop:odd-library} with $q=5$.  Its graph has layer edges $u_i u_{i\pm2}$ and $v_i v_{i\pm1}$, 
and cross-layer edges $u_i v_i$ and $u_i v_{i+1}$.  Its complement is the
disjoint union of the five perfect matchings
\[
\begin{aligned}
 P_1={}&\{u_0v_2,u_1u_2,u_3u_4,v_0v_3,v_1v_4\},\\
 P_2={}&\{u_0u_1,u_2v_0,u_3v_1,u_4v_3,v_2v_4\},\\
 P_3={}&\{u_0u_4,u_1v_0,u_2v_4,u_3v_2,v_1v_3\},\\
 P_4={}&\{u_0v_3,u_1v_4,u_2u_3,u_4v_1,v_0v_2\},\\
 P_5={}&\{u_0v_4,u_1v_3,u_2v_1,u_3v_0,u_4v_2\}.
\end{aligned}
\]
For $(2^5,4)$, assign these five matchings separately to the five qubits.
For $(2^2,4^2)$, assign $P_1\cup P_2\cup P_3$ to the ordinary
four-dimensional subsystem and
$P_4,P_5$ to the two qubits. Each matching has biclique number two, and the three-matching block has biclique number
four. Hence, \Cref{lem:one-extra,lem:bic-connect} completes the proof.
\end{proof}

\noindent
\emph{Completing the proof of \Cref{thm:stabilization}.}
The lower bound is \Cref{prop:lower}.  If $q=(f_{LB}+1)/2$ is even, apply \Cref{thm:even-stabilization}.  If $q$ is odd, apply \Cref{thm:odd-stabilization}.

The theorem shows that the excess above the natural lower bound remains stable
 with respect to increasing dimensions: once a system has any local dimension above two, the parity obstruction costs exactly one state and never more.

% \begin{proof}[Proof of \Cref{thm:classification}]
% The one-party statement is immediate.  The bipartite formula is \Cref{thm:CJ}.  Let $p\ge3$.  If $e=0$ or $e$ is odd, the natural bound is attained by \Cref{thm:AL}.  If $e>0$ is even and the system is not all-qubit, apply \Cref{thm:stabilization}.  The remaining systems are all-qubit and are given by Johnston's classification.
% \end{proof}

\section{Consequences and further questions}\label{sec:outlook}

The complete formula for the minimum UPB size reveals a simple global picture.  For every non-all-qubit multipartite system with $p\ge3$, the excess over the natural lower bound satisfies $f_m-f_{LB}\in\{0,1\}$: it equals one exactly when the number of even local dimensions is positive and even.  Thus, gaps larger than one are confined to the all-qubit family.

The result also determines the largest completely entangled subspace that can arise as the orthogonal complement of a UPB.  Indeed, the complement of a minimum UPB has dimension $\prod_{j=1}^{p}d_j-f_m(d_1,\ldots,d_p)$.  When $f_{LB}$ is attainable, this dimension reaches the general upper bound for completely entangled subspaces; in every remaining non-all-qubit multipartite case, it falls short by exactly one.

Several natural questions remain.  First, can one classify minimum UPBs, rather than only determine their cardinality, for example through their possible local orthogonality graphs?  Second, which dimension tuples admit minimum UPBs over $\R$, or with coordinates in a prescribed number field, and can such UPBs be constructed efficiently with explicit exact coordinates?  Third, for a fixed tensor-product space, which cardinalities between the minimum and the full dimension are attained by UPBs?

\section*{AI Declaration}

 Within the author-defined framework, OpenAI's GPT models, including
ChatGPT and Codex, generated most of the detailed derivations and local-vector
constructions, and also assisted with language editing, structural
organization, and consistency checking.  The author reviewed, selected, and
revised the AI-generated material and takes full responsibility for the
claims and proofs.

\newpage 

\appendix
\section*{Appendix}
The appendices provide the technical details for the constructions in \Cref{sec:even,sec:odd}.
  Appendix~\ref{app:cyclic} establishes the biclique bounds used in the 
  proofs of \Cref{prop:even-factor} and \Cref{prop:odd-allocation}.
    Appendices~\ref{app:even-exceptional} and~\ref{app:round-robin} prove \Cref{prop:even-library,prop:even-factor}, respectively, which together give the even-$q$ construction in \Cref{sec:even}.  Appendices~\ref{app:odd-exceptional} and~\ref{app:odd-allocation} prove \Cref{prop:odd-library,prop:odd-allocation}, respectively, which together give the odd-$q$ construction in \Cref{sec:odd}.

\section{Biclique bounds}\label{app:cyclic}

This appendix establishes the biclique bounds used in the proofs of
\Cref{prop:even-factor,prop:odd-allocation}.  All addition and subtraction
of indices in this appendix take place in
$\Z_q=\{0,1,\ldots,q-1\}$, with addition and subtraction modulo $q$.
This is a cyclic group.  An \emph{abelian group} is a group whose operation is commutative;
all groups in this appendix are finite.  For subsets
$X,Y\subseteq\Z_q$, their \emph{sumset} and
\emph{difference set} are $X+Y=\{x+y:x\in X,\ y\in Y\}$ and
$Y-X=\{y-x:x\in X,\ y\in Y\}$, respectively.  For a set $E$ in a finite
abelian group, its \emph{stabilizer} is the subgroup
$\{h:E+h=E\}$.  If this subgroup is $H$, an \emph{$H$-coset} is a translate
$a+H$, and $E$ is a union of $H$-cosets.  A set is \emph{$H$-periodic} if
it is a union of $H$-cosets.  We call $E$ \emph{periodic} if its stabilizer
is nontrivial and \emph{aperiodic} otherwise.

We use Kneser's addition theorem~\cite{Kneser1953} in the following form:
if $X,Y$ are nonempty subsets of a finite abelian group and $H$ is the
stabilizer of $X+Y$, then
\begin{equation}\label{eq:kneser}
 |X+Y|\ge |X+H|+|Y+H|-|H|.
\end{equation}

\subsection{Intervals and annuli}

A \emph{cyclic interval} of length $\ell$ in $\Z_q$ is a set of the form
$\{a,a+1,\ldots,a+\ell-1\}$, with all entries reduced modulo $q$; thus the
interval may wrap around from $q-1$ to $0$.  The first lemma will be applied
to blocks of consecutive cross-layer matchings.

\begin{lemma}\label{lem:interval}
Let $X,Y\subseteq\Z_q$ be nonempty.  If $Y-X\subseteq I$, where $I$ is a
cyclic interval of size $\ell<q$, then $|X|+|Y|\le\ell+1$.
\end{lemma}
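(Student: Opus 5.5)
Apply Kneser's theorem \eqref{eq:kneser} to the pair $(-X,Y)$: we have $Y-X=Y+(-X)\subseteq I$ and $|-X|=|X|$. Let $E=Y-X$ and let $H$ be its stabilizer. Then
\[
 |Y-X|\ge |X+H|+|Y+H|-|H|\ge |X|+|Y|-|H|.
\]
The proof then splits according to whether $H$ is trivial.

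\emph{Case $H=\{0\}$.} Kneser gives $|X|+|Y|\le |Y-X|+1\le |I|+1=\ell+1$ directly, since $|Y-X|\le|I|=\ell$.

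\emph{Case $H\ne\{0\}$.} This case is impossible, and ruling it out is the main step. $E$ is nonempty because $X,Y$ are nonempty, and it is a union of $H$-cosets. Take $e\in E$. Then $e+H\subseteq E\subseteq I$. Now $H$ is a nontrivial subgroup of $\Z_q$, so $H=\langle q/k\rangle$ for some divisor $k>1$ of $q$. Its coset $e+H$ consists of $k\ge2$ points spaced $q/k$ apart around the cycle. A cyclic interval containing such a coset must have length at least $q-q/k+1$: the complement of $I$ is a cyclic interval missing every point of the coset, so it lies strictly inside one gap between consecutive coset points and therefore has length at most $q/k-1$.

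That bound alone does not contradict $\ell<q$. The stronger observation is that $E$ is a union of full $H$-cosets, so $E+H=E$. Take a point $x\in I$ that is an endpoint of a maximal run of $E$ inside $I$, say $x\in E$ with $x+1\notin E$ (such $x$ exists since $E\ne\Z_q$, because $E\subseteq I$ and $|I|<q$). Then the translates $x+jq/k$ all lie in $E$ with $x+jq/k+1\notin E$. So $E$ has at least $k\ge2$ distinct ``right endpoints'' of runs.

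A subset of a cyclic interval $I$ with $|I|<q$ can still have many runs, so this alone is again no contradiction. The clean argument is different: take $a$ to be the first element of $I$. The nonempty set $E\subseteq I$ has a least element $e_0$ in the linear order of $I$. By periodicity, $e_0-q/k\in E\subseteq I$. But moving backward from $e_0$ within $I$ by $q/k<q$ steps either stays in $I$ at an earlier position, contradicting minimality, or exits $I$. Working in the linear order $a,a+1,\ldots,a+\ell-1$ and using $\ell<q$, both $e_0-q/k$ and $e_0+q/k$ must be in $I$. Taking instead $e_0$ minimal and $e_1$ maximal in $E$, we get $e_1-e_0\le \ell-1<q-q/k$ if the coset must fit, which forces the elements of $E$ to span at least $(k-1)q/k$ positions. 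Meanwhile $e_0-q/k$ can only lie in $I$ at a position $\ge e_0+ (q-q/k)>e_1$ for large spacing, giving a contradiction with maximality of $e_1$ unless the positions coincide. I expect making this linear-order bookkeeping airtight to be the only delicate point. Once it shows $H=\{0\}$, the first case finishes the proof.
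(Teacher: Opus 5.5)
Your trivial-stabilizer case is correct. Your second case, however, rests on the claim that a nontrivial stabilizer is impossible, and that claim is false, so the proof has a genuine gap. For a counterexample take $q=6$, $X=Y=\{0,3\}$ and $I=\{0,1,2,3,4\}$, so $\ell=5<q$. Then $E=Y-X=\{0,3\}\subseteq I$, and its stabilizer is $H=\{0,3\}$.

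Moreover $|X|+|Y|=4>|E|+1$. So the bound $|X|+|Y|\le|E|+1$ really does fail here, and the periodic case must be estimated, not excluded. This is also why your linear-order bookkeeping cannot be made airtight. Stepping back from the least element $e_0=0$ by $q/k=3$ wraps around the short complement of $I$ and lands inside $I$ again, at $3=e_1$. That is exactly the coincidence you had to set aside. Your earlier observation that the complement of $I$ has length at most $q/k-1$ is correct, but it is the starting point of the right estimate, not a contradiction.

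The missing step is a coset count. Write $h=|H|$, $g=q/h$, and $R=q-\ell\ge1$ for the size of the complementary interval.

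Since $E$ is a nonempty union of $H$-cosets contained in $I$, the complement cannot meet every $H$-coset, which gives $R<g$ (your bound). The $R$ consecutive residues of the complement have pairwise differences in $\{1,\ldots,R-1\}$, none divisible by $g$. So they lie in $R$ distinct $H$-cosets, each of which is disjoint from $E$. Hence $|E|\le h(g-R)$, and Kneser's theorem gives
\[
 |X|+|Y|\le |E|+h\le q-h(R-1)\le q-R+1=\ell+1 .
\]
This is how the paper treats the case $h>1$. With this estimate in place of your impossibility claim, the argument is complete.
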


\begin{proof}
Let $E=Y-X$, and let $H$ be its stabilizer.  Write $|H|=h$ and $q=hg$.  Kneser's theorem gives
\begin{equation}\label{eq:kneser-basic}
 |X|+|Y|\le |E|+h.
\end{equation}
If $h=1$, the result follows from $|E|\le\ell$.

Assume $h>1$, and let $R=q-\ell$ be the size of the complementary cyclic
interval.  If $R\ge g$, that interval meets every $H$-coset, so no nonempty
$H$-periodic set can lie inside $I$.  Hence $R<g$.  Its $R$ consecutive
points lie in distinct $H$-cosets, so $E$ uses at most $g-R$ cosets.
Therefore $|E|+h\le h(g-R)+h=q-h(R-1)\le q-R+1=\ell+1$.
\end{proof}

The layer graphs use symmetric cyclic intervals rather than single
intervals.  We call
$D(a,b)=\{\pm a,\pm(a+1),\ldots,\pm b\}\subseteq\Z_q$, where
$1\le a\le b<q/2$, a \emph{cyclic annulus}; it is the union of two opposite
cyclic intervals.  Write $s=|D(a,b)|=2(b-a+1)$.

\begin{lemma}\label{lem:odd-annulus}
Let $q$ be odd.  If nonempty $X,Y\subseteq\Z_q$ satisfy
$Y-X\subseteq D(a,b)$, then $|X|+|Y|\le s+1$.
\end{lemma}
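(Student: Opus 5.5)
The plan is to mirror the proof of \Cref{lem:interval}: apply Kneser's theorem to the difference set, then count which cosets of its stabilizer can fit inside the annulus. Put $E=Y-X\subseteq D(a,b)$ and let $H$ be the stabilizer of $E$, with $|H|=h$ and $q=hg$. Since $Y-X=Y+(-X)$, Kneser's theorem \eqref{eq:kneser} gives
\[
|X|+|Y|\le|E|+h .
\]
If $h=1$, then $|E|\le|D(a,b)|=s$ and we are done. So assume $h>1$. Because $q$ is odd, $h\ge3$ and $g$ is odd. Also $E\ne\Z_q$, since $0\notin D(a,b)$, so $g\ge3$.

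Next we describe the complement of the annulus. $\Z_q\setminus D(a,b)$ is the union of two cyclic intervals:
\begin{itemize}
\item $J_1=\{-(a-1),\ldots,a-1\}$, centred at $0$, of length $R_1=2a-1\ge1$;
\item $J_2=\{b+1,\ldots,q-b-1\}$, of length $R_2=q-2b-1\ge0$ (nonnegative because $b\le(q-1)/2$).
\end{itemize}
Thus $s=q-R_1-R_2$. The set $E$ is a union of $H$-cosets, and an $H$-coset is a residue class modulo $g$. A coset lies in $D(a,b)$ only if its residue avoids the residues met by $J_1\cup J_2$. If some $R_i\ge g$, that interval meets every coset, so $E=\emptyset$, which is impossible since $X,Y$ are nonempty. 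Hence $R_1,R_2<g$. Let $c$ be the number of residues mod $g$ met by $J_1\cup J_2$. Then $|E|\le h(g-c)$, and the target inequality $|E|+h\le s+1$ reduces to
\[
h(c-1)\ge R_1+R_2-1 .
\]

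Now we verify this in cases.
\begin{itemize}
\item If the residue sets of $J_1$ and $J_2$ are disjoint, then $c=R_1+R_2$, and the inequality holds since $h\ge1$.
\item Otherwise, $c\ge m:=\max(R_1,R_2)$. Using $h\ge3$, it suffices that $3(m-1)\ge 2m-1$, i.e.\ $m\ge2$.
\item The remaining case is $m=1$, so $R_1=1$ and $R_2\in\{0,1\}$. If $R_2=0$, then $c\ge1$ suffices. If $R_2=1$, then $J_2=\{(q-1)/2\}$, and we need its residue to differ from that of $J_1=\{0\}$. Since $2\cdot\frac{q-1}{2}\equiv-1\pmod g$ and $g\ge3$ is odd, $(q-1)/2\not\equiv0\pmod g$, so the residue sets are disjoint after all.
\end{itemize}

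The main obstacle is this overlap bookkeeping, namely the case where the two complementary intervals hit common residue classes. The crux is the final small case, where the oddness of $q$ (hence of $g$ and $h$) is exactly what prevents a coset from sneaking through. I would double-check the boundary value $b=(q-1)/2$ (where $R_2=0$) and the requirement $a\ge1$ (so that $0\in J_1$) carefully.
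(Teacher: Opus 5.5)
Your proof is correct and follows essentially the paper's route: apply Kneser's theorem to $E=Y-X$, then count the stabilizer cosets met by the two complementary intervals of lengths $2a-1$ and $q-2b-1$. One small slip: for odd $q$ the length $R_2=q-2b-1$ is always even, so your subcase $R_2=1$ never occurs (and $J_2=\{(q-1)/2\}$ is not a possible complement), which is harmless; the paper uses exactly this parity ($R_1+R_2$ odd, hence $\max\{R_1,R_2\}\ge(R_1+R_2+1)/2$) to get $h(c-1)\ge 3(R-1)/2\ge R-1$ in one line, without your case split.
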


\begin{proof}
Let $E=Y-X$, and use the notation of \eqref{eq:kneser-basic}.  If the stabilizer is trivial, the claim follows from $|E|\le s$.  Otherwise $h\ge3$, because $q$ is odd.

The complement of $D(a,b)$ is the union of two cyclic intervals of sizes
$A=2a-1$ and $B=q-2b-1$, whose total size is $R:=A+B=q-s$.
If either interval has size at least $g$, it meets every $H$-coset and $E$
must be empty, which is impossible because $X$ and $Y$ are
nonempty.  Otherwise the two intervals meet at least
$c\ge\max\{A,B\}\ge(R+1)/2$ distinct cosets.  Thus $|E|\le q-ch$, and
\[
 |X|+|Y|\le q-h(c-1)
 \le q-\frac{3(R-1)}2
 \le q-R+1=s+1.
\]
\end{proof}

For even $q$, the \emph{antipodal translation} sends $x$ to $x+q/2$.
The only possible loss occurs when the annulus is invariant under this
translation, meaning that adding $q/2$ does not change it.

\begin{lemma}\label{lem:even-annulus}
Let $q$ be even and let $X,Y\subseteq\Z_q$ be nonempty with
$Y-X\subseteq D(a,b)$.  If $a+b\ne q/2$, then $|X|+|Y|\le s+1$.  If
$a+b=q/2$, then $|X|+|Y|\le s+2$; moreover, equality implies
$X+q/2=X$ and $Y+q/2=Y$.
\end{lemma}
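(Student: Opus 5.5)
The plan is to follow the proof of \Cref{lem:odd-annulus} closely, using Kneser's theorem \eqref{eq:kneser} for $E=Y-X$. The new work is to track the parity of the gap sizes and to identify exactly when one unit is lost. Let $H$ be the stabilizer of $E$, with $|H|=h$ and $q=hg$. As in \eqref{eq:kneser-basic}, Kneser gives
\[
 |X|+|Y|\le|X+H|+|Y+H|\le|E|+h .
\]
If $h=1$, then $|X|+|Y|\le|E|+1\le s+1$ and both claims hold. So assume $h\ge2$, so that $E$ is a nonempty union of $H$-cosets contained in $D(a,b)$.

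The complement of $D(a,b)$ consists of two cyclic intervals. One is centred at $0$ and has size $A=2a-1$. The other is centred at $q/2$ and has size $B=q-2b-1$. Both sizes are odd because $q$ is even, and $R:=A+B=q-s$. If either interval has size at least $g$, it meets every $H$-coset, which would force $E=\emptyset$; so $A,B<g$. Each interval then meets $A$, respectively $B$, distinct cosets. Let $c\ge\max\{A,B\}$ be the number of cosets met by the union of the two intervals. Then $|E|\le q-ch$, hence
\[
 |X|+|Y|\le q-h(c-1),
\]
and the target $s+1=q-R+1$ amounts to $h(c-1)\ge R-1$.

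\emph{Case $A\ne B$.} Since $A,B$ are odd and distinct, $\max\{A,B\}\ge R/2+1$. Hence $h(c-1)\ge 2\cdot R/2=R$, which gives the bound $s+1$. Now $A=B$ holds exactly when $2a-1=q-2b-1$, that is, when $a+b=q/2$. So this case covers every annulus with $a+b\ne q/2$, and the first claim follows.

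\emph{Case $A=B$, i.e.\ $a+b=q/2$.} Here $R=2A$ and $c\ge R/2$, so $h(c-1)\ge 2(R/2-1)=R-2$, which gives $|X|+|Y|\le s+2$. For the equality statement, I will show that $|X|+|Y|=s+2$ forces $h$ to be even. Suppose instead that $h$ is odd, so $h\ge3$, and note that $q/2\in H$ iff $g\mid q/2$ iff $h$ is even.
\begin{itemize}
\item If $R=2$, the two gaps are $\{0\}$ and $\{q/2\}$. These lie in distinct cosets because $q/2\notin H$, so $c=2$ and $h(c-1)=h\ge3>R-1$.
\item If $R\ge6$ (note $R\equiv2\pmod 4$), then $h(c-1)\ge 3(R/2-1)\ge R$.
\end{itemize}
Either way $|X|+|Y|\le s+1$. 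Thus equality forces $h$ even, hence $q/2\in H$. Equality also makes the first inequality of the displayed chain tight, so $|X|=|X+H|$ and $|Y|=|Y+H|$. Therefore $X$ and $Y$ are $H$-periodic, and in particular $X+q/2=X$ and $Y+q/2=Y$.

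The main obstacle is this equality case: one must separate the lost unit that occurs only for $h=2$ and balanced gaps from the generic situation, and check the small case $R=2$ carefully. The parity observation that $A$ and $B$ are both odd when $q$ is even is what makes the unbalanced case lossless.
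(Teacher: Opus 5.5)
Your proof is correct. For $a+b\ne q/2$ it is the paper's argument: Kneser's theorem with the stabilizer of $Y-X$, both complementary gaps odd, hence $\max\{A,B\}\ge R/2+1$.

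For the balanced case $a+b=q/2$ you take a different route. You stay in $\Z_q$ with the Kneser stabilizer $H$ of $Y-X$ and show that the bound $s+2$ can only be reached when $|H|$ is even. Your case split for odd $|H|$ is sound: for $R=2$, the points $0$ and $q/2$ lie in different $H$-cosets, so $c=2$; for $R\ge 6$, $3(R/2-1)\ge R$. Tightness of $|X|+|Y|\le|X+H|+|Y+H|$ then makes $X$ and $Y$ $H$-periodic, and $q/2\in H$ gives the antipodal invariance.

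The paper instead observes that $D(a,b)=\{a,\ldots,b\}+H_0$ with $H_0=\{0,q/2\}$, so the annulus is $H_0$-periodic. It replaces $X,Y$ by $X+H_0$ and $Y+H_0$, whose difference set still lies in $D(a,b)$, and applies the single-interval \Cref{lem:interval} in the quotient $\Z_{q/2}$, where the annulus becomes an interval of size $s/2$. The equality statement then follows at once: passing to $X+H_0$ and $Y+H_0$ can only increase $|X|+|Y|$, which is already at the bound.

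The comparison comes down to this. The paper's quotient reduction is shorter and avoids your parity split on $|H|$, including the special case $R=2$. Your argument does not need \Cref{lem:interval}, and in the extremal case it gives slightly more: $X$ and $Y$ are periodic under the full stabilizer of $Y-X$, not just under $H_0$.
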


\begin{proof}
For the case $a+b\ne q/2$, use the proof of \Cref{lem:odd-annulus}.  Now
$R=A+B$ is even, and $A,B$ are odd.  The condition $a+b\ne q/2$ is
equivalent to $A\ne B$, so $\max\{A,B\}\ge R/2+1$.  For every nontrivial
stabilizer, $h\ge2$, and hence $h(c-1)\ge2(R/2)=R$.
Together with \eqref{eq:kneser-basic}, this gives the stronger bound $|X|+|Y|\le s$ whenever the stabilizer is nontrivial; the trivial case gives $s+1$.

Now suppose $a+b=q/2$.  Let $H_0=\{0,q/2\}$ and let
$\pi:\Z_q\to\Z_{q/2}$ be the \emph{quotient map}, which replaces every
antipodal pair $\{x,x+q/2\}$ by a single element.  The annulus $D(a,b)$ is
$H_0$-periodic, and
$\pi(D(a,b))$ is a cyclic interval of size $s/2$.  Put
$X'=X+H_0$ and $Y'=Y+H_0$.
Then $Y'-X'\subseteq D(a,b)$.  Applying \Cref{lem:interval} in the quotient
gives $|X'|/2+|Y'|/2\le s/2+1$,
which proves the claimed bound.  If equality holds for $X,Y$, enlarging to
$X'$ and $Y'$ cannot increase either set; hence both $X$ and $Y$ are
invariant under the antipodal translation.
\end{proof}

\subsection{Combining the two layers}

The next lemma combines one layer bound and one cross-layer bound.  The two layer graphs need not be identical; only the same numerical bound is required.

\begin{lemma}\label{lem:two-layer}
Let $G$ be a graph on disjoint $q$-vertex layers $U$ and $V$.  Assume that every complete bipartite subgraph contained in one layer uses at most $s+1$ vertices, and every complete bipartite subgraph with one side in each layer uses at most $t+1$ vertices.  If $s\ne t$, then $\bic(G)\le s+t+1$.
\end{lemma}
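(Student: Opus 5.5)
The plan is a direct count that splits a complete bipartite subgraph according to the two layers. Fix a complete bipartite subgraph of $G$ with nonempty sides $P$ and $Q$, where every pair in $P\times Q$ is an edge. Write $P_U=P\cap U$, $P_V=P\cap V$, $Q_U=Q\cap U$, $Q_V=Q\cap V$, with sizes $a,b,c,d$. The goal is $a+b+c+d\le s+t+1$.

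The first step records four pairwise inequalities that come straight from the hypotheses. Any two of these parts, one taken from $P$ and one from $Q$, span a complete bipartite subgraph of $G$ whenever both are nonempty. Hence:
\begin{itemize}
\item $a+c\le s+1$ whenever $a,c\ge1$, since this subgraph lies inside $U$;
\item $b+d\le s+1$ whenever $b,d\ge1$, since it lies inside $V$;
\item $a+d\le t+1$ whenever $a,d\ge1$, and $b+c\le t+1$ whenever $b,c\ge1$, since these have one side in each layer.
\end{itemize}
Each of $s,t$ is at least $0$, because the hypotheses presumably concern $G$ having some edge; if necessary I will treat the degenerate edgeless case separately.

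In the main case all four parts are nonempty. Adding the two layer inequalities gives $a+b+c+d\le 2s+2$, and adding the two cross inequalities gives $a+b+c+d\le 2t+2$. So $a+b+c+d\le 2\min\{s,t\}+2$. This is exactly where $s\ne t$ is used: it gives $\min\{s,t\}\le\max\{s,t\}-1$, so $2\min\{s,t\}+2\le s+t+1$. Without this hypothesis the bound would only be $s+t+2$.

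In the remaining case some part is empty, and by the symmetry between $P$ and $Q$ and between $U$ and $V$ I may assume $a=0$. Then $b\ge1$, because $P$ is nonempty.
\begin{itemize}
\item If $c,d\ge1$, then $c\le t+1-b$ and $d\le s+1-b$, so $b+c+d\le s+t+2-b\le s+t+1$.
\item If $c=0$, then $Q=Q_V$ and the whole subgraph lies in $V$, giving $b+d\le s+1\le s+t+1$.
\item If $d=0$, the subgraph is entirely cross-layer, giving $b+c\le t+1\le s+t+1$.
\end{itemize}

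The only real obstacle is keeping this case analysis airtight, in particular remembering that each pairwise bound applies only when both parts are nonempty. The substantive point is the main case, where the $\pm1$ slack from Kneser-type bounds is absorbed precisely because $s\ne t$.
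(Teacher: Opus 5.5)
Your proof is correct and follows the paper's argument. You split the sides into the same four layer parts, and you handle the two- and three-part cases the same way, by one layer bound plus one cross bound with the shared part counted once. In the four-part case, your bound $2\min\{s,t\}+2\le s+t+1$ is a repackaging of the paper's step of summing all four inequalities to get $s+t+2$ and ruling out equality via $s\ne t$. The nonnegativity of $s,t$ that you flag is also used tacitly by the paper in the two-part case; it holds in every application, where $s$ and $t$ are degrees.
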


\begin{proof}
Let $A,B$ be the sides of a complete bipartite subgraph and write
\[
 A_U=A\cap U,\quad A_V=A\cap V,
 \qquad
 B_U=B\cap U,\quad B_V=B\cap V.
\]
If exactly two of these sets are nonempty, the claim follows from one of the
assumed bounds.  If exactly three are nonempty, add the relevant layer bound
and cross-layer bound.  Their common nonempty set is counted twice, so
$|A|+|B|\le(s+1)+(t+1)-1=s+t+1$.

Suppose all four sets are nonempty.  Then
\[
\begin{aligned}
 |A_U|+|B_U|&\le s+1,
 &|A_V|+|B_V|&\le s+1,\\
 |A_U|+|B_V|&\le t+1,
 &|A_V|+|B_U|&\le t+1.
\end{aligned}
\]
Adding gives $|A|+|B|\le s+t+2$.  Equality would force all four local inequalities to be equalities.  The two layer equalities would give $|A|+|B|=2s+2$, while the cross equalities would give $|A|+|B|=2t+2$, contradicting $s\ne t$.
\end{proof}

When an annulus for even $q$ is invariant under the antipodal translation,
the layer bound may be $s+2$.  A positive cross-layer degree removes the
extra vertex.

\begin{lemma}\label{lem:central-two-layer}
Let $q$ be even.  In each layer, let the graph be the annulus $D(a,b)$ with
$a+b=q/2$ and degree $s$.  Across the layers, use a cyclic interval of
$t\ge1$ cross-layer matchings.  If $s+t\le q-2$ and $s\ne t$, then
$\bic(G)\le s+t+1$.
\end{lemma}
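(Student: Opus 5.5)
We follow the proof of \Cref{lem:two-layer}, but with one change: the layer bound may now be $s+2$ rather than $s+1$, so we need a way to remove the extra vertex. The two inputs are these. By \Cref{lem:even-annulus}, every complete bipartite subgraph inside one layer has at most $s+2$ vertices, and reaching $s+2$ forces both sides to be invariant under $x\mapsto x+q/2$. By \Cref{lem:interval}, applied with $\ell=t<q$, every complete bipartite subgraph across the layers has at most $t+1$ vertices.

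Let $A,B$ be the sides of a complete bipartite subgraph, split as $A_U,A_V,B_U,B_V$. Identify both layers with $\Z_q$. Then:
\begin{itemize}
\item an edge inside a layer means the difference of indices lies in $D(a,b)$;
\item a cross edge $u_iv_j$ means $j-i$ lies in the interval $I$ of shifts, where $|I|=t$.
\end{itemize}

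First we handle the case where at most three of the four sets are nonempty. If the subgraph lies in one layer only, with $|A|+|B|=s+2$, we need $s+2\le s+t+1$, which holds because $t\ge1$. If it lies across the layers only, the bound is $t+1\le s+t+1$. If exactly three sets are nonempty, say $A_U,B_U,B_V$, then adding a layer bound and a cross bound that share the set $A_U$ gives $|A|+|B|\le (s+2)+(t+1)-|A_U|\le s+t+2$.

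To get down to $s+t+1$ in that case, we argue as follows. If the layer pair attains $s+2$, then $A_U$ is antipodally invariant. Then $B_V-A_U\subseteq I$ is invariant under adding $q/2$, so the cross pair $(A_U,B_V)$ reduces to the quotient $\Z_{q/2}$. Since an interval of length $t<q$ cannot contain an antipodally invariant nonempty set unless $t\ge q/2+1$, we need $t\le q/2$. We get that from $s+t\le q-2$ combined with $s\ge2$. This contradiction drops the layer pair to $s+1$, and the total is at most $s+t+1$.

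Now suppose all four sets are nonempty. The four inequalities give $|A|+|B|\le s+t+3$, counting $s+2$ for each layer pair. We improve this in two ways:
\begin{itemize}
\item If either layer pair is at most $s+1$, the sum is at most $s+t+2$. Equality then forces the remaining inequalities to be tight, and the parity argument from \Cref{lem:two-layer}, which uses $s\ne t$, excludes it.
\item If both layer pairs attain $s+2$, all four sets are antipodally invariant. The cross differences $B_V-A_U$ then form a nonempty $H_0$-periodic set inside $I$, which needs $t\ge q/2+1$. This is impossible, by the same use of $s+t\le q-2$ as above.
\end{itemize}
In the first situation there is one more subcase to check: if exactly one layer pair is $s+2$, adding the two layer pairs gives $2s+3$ and adding the two cross pairs gives $\le 2t+2$. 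This mismatch, together with antipodal invariance of two of the sets, again pushes the total below $s+t+2$.

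The main obstacle is the bookkeeping for the antipodal equality case. We have to show that invariance of one side under $q/2$ is incompatible with a cross-layer pair sitting inside a short interval. We also have to get the sharp value $s+t+1$ when one layer pair is tight and the other is not. That last step will probably need to combine Kneser's bound \eqref{eq:kneser-basic} with the hypothesis $s+t\le q-2$. The rest is the additive bookkeeping already used in \Cref{lem:two-layer}.
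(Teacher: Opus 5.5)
Your overall plan is the right one: split by how many of $A_U,A_V,B_U,B_V$ are nonempty, and play the antipodal invariance from \Cref{lem:even-annulus} against the short cross interval. However, the step that is supposed to make this work is false. Twice you claim that $s+t\le q-2$ and $s\ge2$ give $t\le q/2$; they only give $t\le q-4$.

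Take $q=12$, $D(3,3)=\{\pm3\}$ (so $s=2$) and $I=\{0,1,\ldots,7\}$ (so $t=8$). Every hypothesis of the lemma holds. Then:
\begin{itemize}
\item $A=\{u_0,u_6\}$, $B=\{u_3,u_9,v_0\}$ is a three-part biclique whose layer pair has $s+2$ vertices;
\item $A=\{u_0,u_6,v_3,v_9\}$, $B=\{u_3,u_9,v_0,v_6\}$ is a four-part biclique in which \emph{both} layer pairs have $s+2$ vertices.
\end{itemize}
So configurations you claim to rule out do occur; the lemma holds there only because $8\le s+t+1=11$. Your remaining subcase (all four parts nonempty, exactly one layer pair of size $s+2$) is also left open. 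There, the claim that equality forces all four inequalities to be tight is wrong: the bounds $(s+1)+(s+2)+2(t+1)$ leave one unit of slack.

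The missing idea is to use the hypotheses only where they actually bite.

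In the three-part case, do not try to exclude the value $s+2$. If the layer pair attains it, the part it shares with the cross pair is one of its sides. That side is antipodally invariant, hence has at least two elements, so the total is at most $(s+2)+(t+1)-2=s+t+1$.

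In the four-part case, first observe that $|A|+|B|$ equals both the sum of the two layer pairs and the sum of the two cross pairs. Hence $|A|+|B|\le2(t+1)$ and $|A|+|B|\le2(s+2)$, which already give $s+t+1$ unless $t\in\{s+1,s+2\}$ (recall $s\ne t$). Only in these two cases does $s+t\le q-2$ yield $t\le q/2$. Then $I$ contains no two residues differing by $q/2$, so no vertex is cross-adjacent to an antipodal pair. Since all four parts are nonempty, neither layer pair can attain $s+2$, and $|A|+|B|\le2s+2\le s+t+1$.

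This is the paper's argument. It needs no appeal to Kneser's theorem beyond \Cref{lem:even-annulus}, and it settles your open subcase at the same time.
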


\begin{proof}
Use the notation from the proof of \Cref{lem:two-layer}.  If exactly two parts are nonempty, they form either a layer biclique or a cross-layer biclique.  The bounds $s+2\le s+t+1$ and $t+1\le s+t+1$ prove the claim.  If exactly three parts are nonempty, add the unique relevant layer bound and cross-layer bound.  If the layer pair uses at most $s+1$ vertices, their repeated nonempty part saves at least one.  If it uses $s+2$, \Cref{lem:even-annulus} says that both sides of that layer pair are antipodally invariant, so the repeated part has size at least two.  In both cases the total is at most $s+t+1$.

Assume all four parts are nonempty.  The two cross-layer bounds give
$|A|+|B|\le2(t+1)$, and the two layer bounds give
$|A|+|B|\le2(s+2)$.
If $t\le s-1$ or $t\ge s+3$, one of these inequalities is already at most
$s+t+1$.  The remaining possibilities are $t=s+1$ and $t=s+2$.  Since
$s+t\le q-2$ and $q$ is even, in either case $t\le q/2$.  A cyclic interval
of at most $q/2$ residues cannot contain two residues differing by $q/2$.
Thus no cross-layer vertex can be adjacent to an antipodal pair.  Since all
four parts are nonempty, neither layer pair can attain $s+2$; both are at
most $s+1$.  Hence $|A|+|B|\le2s+2\le s+t+1$.
\end{proof}

The only equality $s=t$ retained in the construction is the five-dimensional split $s=t=2$.

\begin{lemma}\label{lem:five-block}
In each layer, use one difference pair $\{\pm a\}$, and across the layers
use two consecutive cross-layer matchings.  If the layer cycles are not
four-cycles and $2a\not\equiv\pm1\pmod q$, then $\bic(G)\le5$.  The same
conclusion holds when $q$ is even and $2a=q/2$, so that the layer graph is a
union of four-cycles.
\end{lemma}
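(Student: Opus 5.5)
We follow the case structure of \Cref{lem:two-layer}. Let $G$ have the cycle $F_a^U$ (resp.\ $F_a^V$) inside each layer and the cross-layer matchings $M_b,M_{b+1}$. Let $A,B$ be the two sides of a complete bipartite subgraph, and split them into $A_U,A_V,B_U,B_V$ as before. Here $s=t=2$, so \Cref{lem:two-layer} does not apply directly. We first record the local bounds.

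\emph{Layer bound.} Inside a layer, a biclique with sides $X,Y$ satisfies $Y-X\subseteq\{\pm a\}$. The layer graph is a union of cycles of length $q/\gcd(a,q)\ge5$, or of four-cycles in the exceptional case. In a cycle of length at least $5$, the largest biclique is a path $P_3$ with $3$ vertices, since the cycle contains no $C_4=K_{2,2}$. In the four-cycle case the layer bound is $4$, attained only by a whole four-cycle $\{x,x+a,x+2a,x+3a\}$ with $2a=q/2$.

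\emph{Cross bound.} Across the layers, $Y-X\subseteq\{b,b+1\}$. By \Cref{lem:interval} this gives $|X|+|Y|\le3$, so the cross-layer part is a union of paths and contains no $K_{2,2}$, because $q\ge3$.

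\emph{Three nonempty parts.} Say the parts are $A_U,B_U,B_V$. Adding the layer bound and the cross bound and subtracting the repeated part gives at most $3+3-1=5$ in the generic case. In the four-cycle case we argue as in \Cref{lem:central-two-layer}. A layer biclique with $4$ vertices has both sides antipodal pairs $\{x,x+q/2\}$. The repeated part then has size $2$, which gives $4+3-2=5$. Alternatively, a cross vertex adjacent to an antipodal pair would need $q/2\in\{0,\pm1\}$, which is impossible for $q\ge6$.

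\emph{Four nonempty parts.} Adding the four local bounds gives $|A|+|B|\le 3+3+3+3=12$ divided appropriately, that is, $|A|+|B|\le 6$. Equality forces every local inequality to be tight. In particular, each of $A_U\cup B_V$ and $A_V\cup B_U$ is a cross path with $3$ vertices, and each layer part is a $P_3$. So there is a vertex set of size $6$ with, for instance, $|A_U|=1$, $|B_U|=2$, $|A_V|=2$, $|B_V|=1$, or the symmetric pattern. Write $A_U=\{x\}$ and $B_U=\{x\pm a\}$. The cross tightness for $A_V,B_U$ requires both elements of $B_U$ to be adjacent to both elements of $A_V$. This is a cross $K_{2,2}$, which is impossible. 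The other tight configurations fail by the same argument, and the four-cycle case is excluded exactly as in \Cref{lem:central-two-layer}. Hence $|A|+|B|\le5$.

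\emph{Main obstacle.} The hardest step is ruling out the $6$-vertex configuration. The hypotheses $2a\not\equiv\pm1$ and ``not four-cycles'' must enter there. If a tight pattern instead had $|A_U|=|B_V|$ with layer paths whose differences $2a$ match a cross difference $\pm1$, then a $6$-cycle structure $u_x,u_{x+a},u_{x-a}$ against $v$'s could occur. I would therefore enumerate the tight patterns explicitly. In each, the coordinates of the cross neighbours are forced to differ by $2a$ while also differing by $\pm1$ (the interval width), and the hypothesis $2a\not\equiv\pm1\pmod q$ gives the contradiction.
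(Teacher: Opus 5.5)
Your overall structure matches the paper's: you split the sides into $A_U,A_V,B_U,B_V$, use local bounds of $3$, count $3+3-1$ when three parts are nonempty, and save via antipodal pairs ($4+3-2$) in the four-cycle case. However, the decisive four-part step is wrong as written.

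If all four parts are nonempty and $|A|+|B|=6$, then all four local sums equal $3$. Comparing $|A_U|+|B_U|$ with $|A_U|+|B_V|$ and with $|A_V|+|B_U|$ gives $|B_U|=|B_V|$ and $|A_U|=|A_V|$. So, up to swapping $A$ and $B$, the only tight pattern is $|A_U|=|A_V|=1$, $|B_U|=|B_V|=2$. The pattern you write, $|A_U|=1,|B_U|=2,|A_V|=2,|B_V|=1$, has cross sums $2$ and $4$, so it already violates the cross bound. Your cross-$K_{2,2}$ contradiction therefore only excludes an infeasible configuration. The genuine tight pattern contains no cross $K_{2,2}$ (both cross pairs are copies of $K_{1,2}$), so ``the other tight configurations fail by the same argument'' is false. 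Notice also that your written argument never uses $2a\not\equiv\pm1\pmod q$, yet that hypothesis is necessary. For odd $q=2h+1$, $a=h$ and the matchings $M_0,M_1$, the sets $A=\{u_0,v_{h+1}\}$ and $B=\{u_h,u_{h+1},v_0,v_1\}$ span a $K_{2,4}$, so $\bic(G)\ge6$.

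The missing step is the one you only announce in your last paragraph, and it should replace the $K_{2,2}$ argument. With $A_U=\{u_x\}$ and $A_V=\{v_y\}$, the pair $B_U$ must be both the layer neighbourhood $\{u_{x+a},u_{x-a}\}$ of $u_x$ and the cross neighbourhood $\{u_{y-b},u_{y-b-1}\}$ of $v_y$. The first has indices differing by $2a$ and the second by $1$, hence $2a\equiv\pm1\pmod q$, a contradiction. This is exactly the paper's argument.

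Likewise, in the four-cycle case, appealing to \Cref{lem:central-two-layer} does not finish. It does exclude a four-vertex layer pair when all four parts are nonempty, since no cross vertex sees an antipodal pair. But with $s=t=2$ its count only gives $|A|+|B|\le 2s+2=6$. You still need the same tight-pattern argument: now $\{x\pm a\}$ is an antipodal pair because $2a=q/2$, while cross neighbours are consecutive, and $q/2\not\equiv\pm1\pmod q$.

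Two harmless slips: the layer cycles may be triangles rather than of length at least $5$, and $M_b\cup M_{b+1}$ is a single $2q$-cycle rather than a union of paths. Neither affects the bound $3$.
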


\begin{proof}
First assume that the layer cycles are not four-cycles.  Each layer graph and
the graph formed by the cross-layer matchings is two-regular and contains no
$K_{2,2}$ (the complete bipartite graph with two vertices on each side), so
every layer or cross-layer biclique uses at most three
vertices.  A six-vertex biclique would have all four layer parts nonempty and
would force all four bounds to be equalities.  A two-vertex neighbor set in
a layer has index difference $2a$, while a two-vertex neighbor set across two
consecutive matchings has index difference one.  This contradicts the
hypothesis $2a\not\equiv\pm1\pmod q$.

Now assume $q$ is even and $2a=q/2$.  A layer biclique may have four vertices.  With at most three nonempty layer parts, a four-vertex layer biclique has a repeated part of size two, so the total is at most $4+3-2=5$.  Suppose all four parts are nonempty and six vertices are used.  Both cross-layer bounds must then be tight, so each cross-layer pair has sizes one and two.  If a two-vertex cross-layer neighbor set is paired with a singleton in its own layer, it is also a two-vertex layer-neighbor set.  If instead both parts in that layer have size two, they form the two sides of a four-cycle and are again antipodal pairs.  In either case the same two indices must differ by one, because they are the neighbors supplied by two consecutive cross-layer matchings, and by $q/2$, because they are antipodal layer neighbors.  This is impossible.
\end{proof}

\section{Proof of Proposition 5.1}\label{app:even-exceptional}

This appendix proves both parts of \Cref{prop:even-library} by explicitly constructing the required 
$2q$ local vectors.

\subsection{Part (1)}

Assume that $q$ is even.  For $0\le i<q/2$, choose nonzero vectors $x_i,y_i\in\C^2$ with $x_i\perp y_i$, in such a way that no line among
the one-dimensional subspaces $\C x_i$ and $\C y_i$ spanned by these
vectors is repeated.  Define
\[
 c_{u_i}=c_{v_{i+q/2}}=x_i,
 \qquad
 c_{u_{i+q/2}}=c_{v_i}=y_i.
\]
The prescribed graph is a disjoint union of four-cycles.  It contains the antipodal matching in each layer and the cross-layer matching $u_jv_j$.  Every one-dimensional subspace occurs exactly twice, so no three vectors lie in a common one-dimensional subspace.  Hence every three span $\C^2$, proving Part (1) of \Cref{prop:even-library}.

\subsection{Part (2)}

Assume that $q$ is even and $4\le d\le q-1$.  Put $r=q-d$ and let
$\zeta=e^{2\pi i/q}$, a \emph{primitive $q$th root of unity}, meaning that
$1,\zeta,\ldots,\zeta^{q-1}$ are all distinct, and define
\begin{equation}\label{eq:even-P}
 P(z)=\prod_{h=d}^{q-1}(z-\zeta^h).
\end{equation}
The polynomial has degree $r$.  If $\alpha=\zeta^d$, the Gaussian-binomial expansion is
\begin{equation}\label{eq:q-binomial}
 \prod_{t=0}^{r-1}(z-\alpha\zeta^t)
 =\sum_{k=0}^{r}(-1)^k\alpha^k\zeta^{k(k-1)/2}
 \begin{bmatrix}r\\k\end{bmatrix}_{\!\zeta}z^{r-k},
\end{equation}
where the bracket denotes the Gaussian binomial coefficient
\[
 \begin{bmatrix}r\\k\end{bmatrix}_{\!\zeta}
 =\prod_{j=1}^{k}\frac{1-\zeta^{r-k+j}}{1-\zeta^j}.
\]
Every exponent in the numerator and denominator lies between $1$ and $r<q$.  Thus all factors, and hence all coefficients of $P$, are nonzero.

For $0\le j<d$, set $p_j=P(\zeta^j)$ and $A_j=|p_j|^2$.  All $p_j$ are
nonzero.  Put $M=d-3$, and for a small positive parameter $\varepsilon$
define
\[
\begin{aligned}
 \lambda_0&=1+x, &\lambda_1&=1,\\
 \lambda_j&=\varepsilon^{j-1} &&(2\le j\le d-3),\\
 \lambda_{d-2}&=\varepsilon^M(A_{d-2}+y),
 &\lambda_{d-1}&=\varepsilon^M A_{d-1}.
\end{aligned}
\]
An empty middle range is omitted.  We choose the real variables $x,y$ so that
\begin{equation}\label{eq:even-balance-general}
 \sum_{j=0}^{d-1}(-1)^j\lambda_j=0,
 \qquad
 \sum_{j=0}^{d-1}(-1)^j\frac{A_j}{\lambda_j}=0.
\end{equation}
After substitution, multiply the second equation by $\varepsilon^M$ and
cancel the explicit powers of $\varepsilon$.  The resulting pair of real
equations extends smoothly to $\varepsilon=0$ near $(x,y)=(0,0)$.  At that
point the first equation is $1-1=0$, and the scaled second equation is the
cancellation of its last two terms.  The Jacobian matrix, namely the matrix
of first derivatives with respect to $(x,y)$, is diagonal and invertible.
The implicit function theorem says that, near such a solution with
invertible Jacobian, the equations can be solved smoothly for $(x,y)$ as
functions of the remaining parameter.  It therefore gives solutions
$x(\varepsilon),y(\varepsilon)\to0$.  All weights are positive for
sufficiently small $\varepsilon$.

For $i\in\Z_q$, define
\begin{equation}\label{eq:even-fourier-vectors}
\begin{aligned}
 c_{u_i}&=\bigl(\sqrt{\lambda_j}\,\zeta^{ji}\bigr)_{j=0}^{d-1},\\
 c_{v_i}&=\left(\frac{p_j}{\sqrt{\lambda_j}}\,\zeta^{ji}\right)_{j=0}^{d-1}.
\end{aligned}
\end{equation}
These two families are \emph{weighted Fourier orbits}: increasing the index
$i$ multiplies coordinate $j$ by $\zeta^j$.  The corresponding \emph{layer
kernel} is the polynomial whose value at $z=\zeta^{\ell-i}$ equals the
inner product of the vectors indexed by $i$ and $\ell$.  Since
$\zeta^{jq/2}=(-1)^j$, the two equations in
\eqref{eq:even-balance-general} give $c_{u_i}\perp c_{u_{i+q/2}}$ and
$c_{v_i}\perp c_{v_{i+q/2}}$.  These are the only layer orthogonalities
when $\varepsilon$ is sufficiently small.  Indeed, for a $q$th root of
unity $z$ let
\[
 K_U^{(\varepsilon)}(z)=\sum_{j=0}^{d-1}\lambda_jz^j,
 \qquad
 K_V^{(\varepsilon)}(z)=\sum_{j=0}^{d-1}\frac{A_j}{\lambda_j}z^j.
\]
On the finite set of $q$th roots,
\[
 K_U^{(\varepsilon)}(z)\longrightarrow1+z,
 \qquad
 \varepsilon^M K_V^{(\varepsilon)}(z)
 \longrightarrow z^{d-2}(1+z).
\]
Both limiting kernels vanish only at $z=-1$.  The balance equations keep
that zero exact, and every other root gives a nonzero limiting value.  Since
there are only finitely many roots, all those values remain nonzero when
$\varepsilon$ is sufficiently small.

For a cross-layer pair,
\begin{equation}\label{eq:even-cross-transform}
\begin{aligned}
 \inner{c_{u_i}}{c_{v_\ell}}
 &=\sum_{j=0}^{d-1}P(\zeta^j)\zeta^{j(\ell-i)}\\
 &=\sum_{j=0}^{q-1}P(\zeta^j)\zeta^{j(\ell-i)}.
\end{aligned}
\end{equation}
The second equality uses the roots in \eqref{eq:even-P}.  To determine
directly which shifts give nonzero values, write $P(z)=\sum_k a_kz^k$ and
sum over the $q$th roots of unity;
the identity $\sum_{j=0}^{q-1}\zeta^{jm}=0$ unless $m\equiv0\pmod q$
selects one coefficient $a_k$.  This is the finite Fourier-inversion step.
Since every coefficient of $P$ is nonzero,
\eqref{eq:even-cross-transform} is nonzero exactly when
$\ell-i\in\{0,-1,\ldots,-r\}$.
Thus the prescribed exceptional orthogonality graph has one antipodal
matching in each layer and cross-layer degree $q-(r+1)=d-1$; its unused
cross-layer matchings form the stated interval of length $r+1=q-d+1$.

It remains to verify the spanning property.  Let $F_{j,i}=\zeta^{ji}$ for
$0\le j<d$ and $i\in\Z_q$.  A \emph{Vandermonde matrix} has columns
$(1,z,z^2,\ldots)^T$ at distinct nodes $z$; its determinant is
$\prod_{i<j}(z_j-z_i)$ and is therefore nonzero.  Every square submatrix of
$F$ whose rows are consecutive is a nonzero diagonal multiple of such a
matrix.  The determinant of a square submatrix is called a \emph{minor}.

Choose $s$ columns from the $U$ layer, indexed by $I$, and $d-s$ columns
from the $V$ layer, indexed by $J$.  For a row set $K$, let $F_{K,I}$ denote
the submatrix of $F$ on rows $K$ and columns $I$, and let $K^c$ denote the
complementary row set.  After multiplying row $j$ by
$\sqrt{\lambda_j}/p_j$ and expanding by the $U$ columns, the determinant is
\begin{equation}\label{eq:mixed-det-general}
 \sum_{\substack{K\subseteq\{0,\ldots,d-1\}\\|K|=s}}
 \epsilon_K\det F_{K,I}\det F_{K^c,J}
 \prod_{k\in K}\frac{\lambda_k}{p_k}.
\end{equation}
Here $\epsilon_K\in\{1,-1\}$ is the sign arising from the determinant
expansion.  The \emph{order} of a weight means the exponent of its leading
power of $\varepsilon$; the orders are
$0,0,1,2,\ldots,d-4,d-3,d-3$.  For every $2\le s\le d-2$, the unique term
of minimum order in \eqref{eq:mixed-det-general} has
$K=\{0,1,\ldots,s-1\}$.
Both Fourier minors in that term have consecutive rows and are nonzero.  Hence every such mixed $d$-tuple is independent for sufficiently small $\varepsilon$.  Pure $d$-tuples in either layer are Vandermonde families.  Every $d+1$ vectors therefore contain an independent $d$-tuple, proving Part (2) of \Cref{prop:even-library}.

\section{Proof of Proposition 5.2}\label{app:round-robin}

This appendix proves \Cref{prop:even-factor}.  A round-robin decomposition of
a complete graph is a partition of all its edges into perfect matchings.  The
main graph estimate concerns consecutive perfect matchings in the following
standard decomposition.

Let $q\ge6$ be even, put $m=q-1$, and label the vertices of the complete
graph $K_q$ (in which every two distinct vertices are adjacent) by
$\{\infty\}\cup\Z_m$, where $\infty$ is one additional formal symbol.  For
$t\in\Z_m$, define
\[
 F_t=
 \{\{\infty,t\}\}
 \cup
 \bigl\{\{t+x,t-x\}:1\le x\le(m-1)/2\bigr\}.
\]
The perfect matchings $F_t$ partition the edge set of $K_q$ and hence form a
round-robin decomposition.

We use the following interval form of the equality case in Kemperman's structure theorem.  It is included explicitly because only this short corollary is needed below.

\begin{lemma}\label{lem:critical-interval}
Let $m$ be odd and let $I\subseteq\Z_m$ be a cyclic interval with at least
two residues outside it.  Suppose nonempty sets $P,Q\subseteq\Z_m$ satisfy
$P+Q=I$ and $|P+Q|=|P|+|Q|-1$, and suppose that $I$ is aperiodic.  Then $P$
and $Q$ are cyclic arithmetic progressions with a common difference.  Here a
\emph{cyclic arithmetic progression} with difference $d$ is a set
$\{a,a+d,\ldots,a+(k-1)d\}\subseteq\Z_m$.  Since $P+Q$ is the interval
$I$, their common difference is $1$ or $-1$.  Singletons are included as
progressions of length one.
\end{lemma}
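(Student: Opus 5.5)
We prove \Cref{lem:critical-interval} by Kneser's theorem \eqref{eq:kneser} combined with Vosper's theorem, after passing to a subgroup where the critical pair is non-degenerate.

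First, $P+Q=I$ is aperiodic by hypothesis. Hence the stabilizer in \eqref{eq:kneser} is trivial, and Kneser only gives the Cauchy--Davenport-type bound $|P+Q|\ge|P|+|Q|-1$. Equality therefore places $(P,Q)$ among the critical pairs. If $|P|=1$ or $|Q|=1$, then the other set is a translate of $I$. Hence it is a cyclic interval, that is, a progression with difference $1$, and the singleton is a progression of length one. So the conclusion holds trivially in that case.

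Assume now $|P|,|Q|\ge2$. The key step is that $I$ has at least two residues outside it, so $|P+Q|=|I|\le m-2$. The obstruction is that $\Z_m$ with $m$ odd need not be of prime order, so Vosper does not apply verbatim. We handle this in two steps.

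\textbf{Step 1: reduction to a subgroup.} We first show that $P$ and $Q$ generate, after translation, all of $\Z_m$, or else reduce to a subgroup. Translate so that $0\in P\cap Q$ and let $L$ be the subgroup generated by $P\cup Q$. If $L\ne\Z_m$, then $P+Q$ lies in one coset of $L$. But a cyclic interval inside a coset of a proper subgroup $L$ of index $g\ge3$ has at most one element. This is because consecutive residues lie in distinct $L$-cosets. That contradicts $|P+Q|\ge|P|+|Q|-1\ge3$. So $L=\Z_m$.

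\textbf{Step 2: excluding the Kemperman quasi-periodic alternative.} Next we invoke Kemperman's structure theorem for critical pairs with $|P+Q|\le m-2$ and $\min(|P|,|Q|)\ge2$. Its conclusion is that either (a) $P$ and $Q$ are arithmetic progressions with a common difference, or (b) the pair is quasi-periodic, i.e.\ $P+Q$ is a union of cosets of a nontrivial subgroup $H$ together with a single partial coset, with analogous decompositions of $P$ and $Q$.

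The main obstacle is excluding (b) using only that $I$ is an aperiodic interval. To do this, we show that an interval decomposing as ``full $H$-cosets plus part of one $H$-coset'' must itself be $H$-periodic or contain fewer than one full coset. An interval of length $\ell$ contains a full $H$-coset (of size $h$, spaced $g=m/h$ apart) only if $\ell\ge (h-1)g+1$. In that case its complement has size $<g$, which forces every coset to be hit. Counting then shows that at most one coset is partial only if $I$ misses at most $g-1$ points of a single coset. With $|I^c|\ge2$ and $g\ge3$, this configuration is impossible, because the $\ge2$ consecutive missing points lie in distinct cosets, giving two partial cosets. So (b) cannot occur, and (a) holds.

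Finally, if $P$ and $Q$ are progressions with common difference $\delta$ and $|P|,|Q|\ge2$, then $P+Q$ is a progression of difference $\delta$ and length $|P|+|Q|-1\le m-2$. A progression of difference $\delta$ of this length equals a cyclic interval only if $\delta=\pm1$: compare the sets of differences, or note that $x\in I$, $x+\delta\in I$ must hold for all but one endpoint. This gives the stated common difference $1$ or $-1$.
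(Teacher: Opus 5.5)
Your proposal follows the paper's route. You apply Kemperman's structure theorem, rule out the quasi-periodic alternative because an interval with at least two missing residues meets at least two incomplete $H$-cosets, and then get $d=\pm1$ from the arithmetic-progression case. The one loose end is that Step~2 treats only the case where $I$ contains a full $H$-coset. The case where it contains none, so that $I$ lies inside one coset of a proper subgroup, must be closed explicitly. This case also absorbs a type~(IV) elementary pair over a proper subgroup; you should say that the other elementary configurations of types~(III) and~(IV) are excluded by aperiodicity and $|I|\le m-2$. Your own Step~1 observation, that consecutive residues lie in distinct cosets of a proper subgroup, closes this case at once, just as the paper does with the first two elements of $I$.
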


\begin{proof}
If $|I|=1$, then $P$ and $Q$ are singletons and there is nothing to prove.
Assume $|I|\ge2$.  The aperiodicity assumption permits us to apply
Kemperman's structure theorem.

In the quasi-periodic alternative supplied by
Lev~\cite[Theorem~C]{Lev2006}, the sumset $P+Q=I$, with respect to some
nontrivial proper subgroup $H$, consists of complete $H$-cosets together with
at most one incomplete $H$-coset. We rule this out for the interval $I$.
Write $|H|=h$ and
$m=hg$.  If $I$ contains no complete $H$-coset, then its first two
consecutive elements lie in two different incomplete cosets.  If it contains
a complete $H$-coset, its length is at least $m-g+1$, so its complementary
interval has length at most $g-1$.  Distinct points of that complement lie in
distinct $H$-cosets, and every such coset also contains points of $I$.  Since
the complement has at least two points, there are again at least two
incomplete cosets.  Both conclusions contradict the quasi-periodic
alternative.

The theorem then places $(P,Q)$ among four terminal configurations, called
\emph{elementary pairs}.  In types~(III) and~(IV), the sumset is a subgroup
coset or a subgroup coset with one point removed.  If the subgroup is proper,
such a set cannot be a cyclic interval of length at least two; if it is the
whole group, the sumset is the whole group or misses only one point.  Both
possibilities are incompatible with $I$.  Type~(I) is the singleton case,
and type~(II) says that $P$ and $Q$ are arithmetic progressions with a common
difference $d$.  Their sum is then an arithmetic progression with difference
$d$.  Since $I$ is a proper interval of difference one and has at least two
missing residues, $d=\pm1$ in $\Z_m$.
\end{proof}

\begin{lemma}\label{lem:round-robin-block}
Let $S$ be a consecutive set of $a$ perfect matchings among
$F_1,F_2,\ldots,F_{q-2}$, where $1\le a\le q-3$.  Then
$\bic(\bigcup_{t\in S}F_t)\le a+1$.
\end{lemma}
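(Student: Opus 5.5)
The plan is to translate the union $G=\bigcup_{t\in S}F_t$ into an additive condition on $\Z_m$, where $m=q-1$ is odd, and then apply \Cref{lem:interval} together with \Cref{lem:critical-interval}.

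\emph{Translation.} For $x\ne y$ in $\Z_m$, the edge $\{x,y\}$ lies in $F_t$ exactly when $x+y=2t$. Since $2$ is invertible modulo $m$, this says $(x+y)/2=t$. Vertex $\infty$ is adjacent in $G$ exactly to the vertices $t\in S$. Write $S=\{s,s+1,\ldots,s+a-1\}$ with $1\le s$ and $s+a-1\le q-2=m-1$. Then $S$ is a cyclic interval of length $a\le m-2$ that misses $0$. It has at least two residues outside it, and it is aperiodic because $m$ is odd and $S$ is a proper interval.

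\emph{Bicliques avoiding $\infty$.} Let $A,B\subseteq\Z_m$ be the two sides, and set $P=A/2$ and $Q=B/2$. The biclique condition is $P+Q\subseteq S$. Applying \Cref{lem:interval} with $X=-P$ and $Y=Q$ gives $|A|+|B|=|P|+|Q|\le a+1$.

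\emph{Bicliques containing $\infty$.} Suppose $\infty\in A$ and put $A'=A\setminus\{\infty\}$. Then $B\subseteq S$, and if $A'\neq\emptyset$ we also have $A'+B\subseteq 2S$. If $A'=\emptyset$, then $|A|+|B|\le 1+a$. Otherwise it suffices to show $|A'|+|B|\le a$, so we must exclude the equality $|P|+|Q|=a+1$, where $P=A'/2$ and $Q=B/2$.

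First, assume equality holds and the stabilizer $H$ of $P+Q$ is nontrivial. Then $|H|\ge3$, and the complementary interval has size $R\ge2$. The chain $q-h(R-1)\le q-R+1$ in the proof of \Cref{lem:interval} is then strict, so this case cannot occur.

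So the stabilizer is trivial, and Kneser's bound \eqref{eq:kneser} gives $|P+Q|\ge|P|+|Q|-1=a$. Hence $P+Q=S$ is a critical pair. \Cref{lem:critical-interval} then shows that $P$ and $Q$ are intervals, with common difference $\pm1$.

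\emph{Main obstacle.} The remaining step is to rule out this configuration. It satisfies the following conditions:
\begin{itemize}
\item $P$ and $Q$ are intervals with $P+Q=S$ and $|P|+|Q|=a+1$;
\item $Q\subseteq S/2$, equivalently $2Q\subseteq S$, since $B\subseteq S$;
\item $P\cap Q=\emptyset$, since $A'\cap B=\emptyset$;
\item $S\subseteq\{1,\ldots,m-1\}$ avoids $0$.
\end{itemize}
I would first try lifting to integers. Then $Q=[c,c+k-1]$ and $P=[s-c,\,s-c+a-k]$. The condition $2Q\subseteq S$ forces the arithmetic progression $2c,2c+2,\ldots,2c+2k-2$, of diameter $2k-2$, to sit inside $S$, which has diameter $a-1$. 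Combining this with $P+Q=S$ and $P\cap Q=\emptyset$ should force either $2Q$ to wrap through the missing residue $0$, or the two intervals to overlap. Pinning down this case analysis, and in particular handling wrap-around modulo $m$, is where I expect the real work to lie.

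If it resists, I would use the excluded matchings as a fallback. The index $0$, being outside $S$, means that sums $x+y=0$ give no edge; a vertex $t$ is never adjacent to itself; and the extreme vertex of $Q$ would have to be matched in $P$ to a sum equal to an endpoint of $S$. I would try to derive a direct contradiction from these facts.
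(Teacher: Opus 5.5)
Up to the critical pair, your argument is correct and follows the paper's route step by step: the translation $x+y=2t$, the interval lemma for bicliques avoiding $\infty$, the strict Kneser chain when the stabilizer is nontrivial (giving $|P|+|Q|\le a$), and the passage via Kneser and the critical-pair lemma to intervals $P,Q$ with $P+Q=S$. However, the proof stops exactly at the step that carries the content of the $\infty$ case. Without a contradiction from the configuration you list, you only have $|A'|+|B|\le a+1$, that is, $|A|+|B|\le a+2$, which is weaker than the lemma. Saying the conditions ``should force'' an overlap, and listing a fallback, is not an argument, so as written there is a genuine gap.

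The missing step is short. Keep your notation $Q=[c,c+k-1]$ and $P=[s-c,\,s-c+a-k]$, and let $y_0\in\{s,\ldots,s+a-1\}$ be the integer representative of $2c$.
\begin{itemize}
\item The complement of $S$ in $\Z_m$ is a cyclic interval of $m-a\ge2$ consecutive residues. This is where $a\le q-3$ is used; the fact that $0\notin S$ plays no role.
\item A step-two progression that starts in $S$ and stays in $S$ therefore can never jump across this complement. Hence $2Q\subseteq S$ gives the integer inequality $y_0+2(k-1)\le s+a-1$.
\item Now test the first element $c$ of $Q$. We have $c\equiv 2c-c\equiv (s-c)+(y_0-s)\pmod m$, and $0\le y_0-s\le a-2k+1\le a-k$ because $k\ge1$.
\item So $c\in P$, giving $P\cap Q\ne\varnothing$. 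This contradicts $A'\cap B=\varnothing$, so $|A'|+|B|\le a$ and $|A|+|B|\le a+1$.
\end{itemize}
The paper does the same thing after translating $S$ to $\{0,\ldots,a-1\}$, exhibiting $\beta=\alpha+c$ as the common element of $P$ and $Q$.
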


\begin{proof}
Let $A,B$ be the sides of a complete bipartite subgraph.

Assume first that $\infty\notin A\cup B$.  Put $X=A$ and $Y=B$.  The edge
$xy$ belongs to $F_t$ exactly when $x+y=2t\pmod m$.  Since $m$ is odd,
multiplication by two is a bijection of $\Z_m$.  We write
$2^{-1}X=\{z\in\Z_m:2z\in X\}$ for the inverse image under this bijection.
Then $2^{-1}X+2^{-1}Y\subseteq S$, and \Cref{lem:interval} gives
$|A|+|B|\le a+1$.

Now suppose $\infty\in A$.  Write $A=\{\infty\}\cup X$ and $B=Y$.  The
neighbors of $\infty$ in the selected perfect matchings are exactly the
residues in $S$, so $Y\subseteq S$.  As before, $P+Q\subseteq S$, where
$P=2^{-1}X$ and $Q=2^{-1}Y$.  Kneser's theorem gives
$|X|+|Y|\le a+1$.  We show that equality is impossible.  Suppose
$|X|+|Y|=a+1$.
Let $T=P+Q$ and let its stabilizer have size $h$.  Put $R=m-a\ge2$.  If $h>1$, then the complementary interval of $S$ either meets every stabilizer coset, which would force $T=\varnothing$, or it meets $R$ distinct cosets.  In the latter case Kneser's theorem gives
\[
 |P|+|Q|\le |T|+h
 \le m-h(R-1)
 \le m-R=a,
\]
because $m$ is odd and hence $h\ge3$.  This contradicts the assumed equality.
Thus $T$ is aperiodic, and Kneser's theorem gives
$|T|\ge |P|+|Q|-1=a$.
Since $T\subseteq S$ and $|S|=a$, we have $T=S$.  Apply
\Cref{lem:critical-interval}.  First negate $P,Q$, and $S$ simultaneously
if their common difference is $-1$, meaning that every element $x$ is
replaced by $-x$.  Let $s_0$ be the first residue of the resulting interval
$S$, and translate both $P$ and $Q$ by $-s_0/2$, meaning that this value is
added to every element; because $m$ is odd, $s_0/2$ is the unique residue
whose double is $s_0$.  The sumset is then translated by $-s_0$.  These
operations send $S$ to $\{0,1,\ldots,a-1\}$ and preserve both
$2Q\subseteq S$ and $P\cap Q=\varnothing$.  We may therefore write
$S=\{0,1,\ldots,a-1\}$ and
\[
 P=\{\alpha,\alpha+1,\ldots,\alpha+r-1\},
 \qquad
 Q=\{\beta,\beta+1,\ldots,\beta+\ell-1\},
\]
where $r+\ell=a+1$ and $\alpha+\beta=0\pmod m$.
Let $c\in\{0,\ldots,m-1\}$ satisfy $c\equiv2\beta\pmod m$.  Since
$Y=2Q=\{c,c+2,\ldots,c+2(\ell-1)\}\subseteq S$,
this step-two progression cannot wrap around the cyclic group: immediately
before a wrap it would contain $m-2$ or $m-1$, while both residues lie
outside $S$.  Hence $c+2\ell-2\le a-1$, and therefore
$r=a+1-\ell\ge c+\ell\ge c+1$.
The element $\alpha+c=-\beta+2\beta=\beta$ belongs to both $P$ and $Q$.
Multiplication by two is a bijection of $\Z_m$, so this contradicts
$X\cap Y=\varnothing$.  Thus $|X|+|Y|\le a$, and
$|A|+|B|\le a+1$.
\end{proof}

The next elementary sumset lemma chooses how many paired perfect matchings and
cross-layer matchings each ordinary subsystem receives.

For a positive integer $r$, define
\[
 B_r=
 \begin{cases}
 \{0,1,\ldots,r\},&r\text{ is odd},\\[1mm]
 \{0,1,\ldots,r\}\setminus\{r/2\},&r\text{ is even}.
 \end{cases}
\]

\begin{lemma}\label{lem:even-counts}
Let $r_1,\ldots,r_m$ be positive integers, at least one of which is odd.  Then
\[
 B_{r_1}+\cdots+B_{r_m}
 =
 \{0,1,\ldots,r_1+\cdots+r_m\}.
\]
\end{lemma}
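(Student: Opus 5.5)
We argue by induction on $m$; the inclusion $\subseteq$ is immediate, since each $B_{r_i}\subseteq\{0,\ldots,r_i\}$. The natural plan is to reorder the summands so that an odd $r_1$ comes first. Then $B_{r_1}=\{0,1,\ldots,r_1\}$ is a full interval, and we add the remaining sets one at a time. Throughout we maintain the invariant that the partial sum $S_k=B_{r_1}+\cdots+B_{r_k}$ equals the full interval $\{0,1,\ldots,R_k\}$, where $R_k=r_1+\cdots+r_k$.

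For the inductive step, suppose $S_k=\{0,\ldots,R\}$ with $R\ge r_1\ge1$, and let $r=r_{k+1}$. If $r$ is odd, then $B_r$ is an interval and the sum of two intervals starting at $0$ is $\{0,\ldots,R+r\}$. If $r$ is even, $B_r=\{0,\ldots,r\}\setminus\{r/2\}$, so
\[
\{0,\ldots,R\}+B_r\supseteq\bigl(\{0,\ldots,R\}+\{0,\ldots,r/2-1\}\bigr)\cup\bigl(\{0,\ldots,R\}+\{r/2+1,\ldots,r\}\bigr).
\]
The first piece is $\{0,\ldots,R+r/2-1\}$ and the second is $\{r/2+1,\ldots,R+r\}$. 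These two intervals cover $\{0,\ldots,R+r\}$ as soon as $r/2+1\le R+r/2$, that is, $R\ge1$. This holds because $R\ge r_1\ge1$.

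The one point needing care is the case $r=2$. Then $B_2=\{0,2\}$, and the sets $\{0,\ldots,r/2-1\}$ and $\{r/2+1,\ldots,r\}$ are the singletons $\{0\}$ and $\{2\}$. The covering argument still works: $\{0,\ldots,R\}\cup\{2,\ldots,R+2\}$ is the whole interval $\{0,\ldots,R+2\}$ because $R\ge1$.

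The only real obstacle is that an even summand has a hole at $r/2$. The hole is filled exactly because the accumulated interval already has length at least $2$, which is guaranteed by starting with an odd $r_1$. This also explains the hypothesis that some $r_i$ is odd: without it, for example $B_2+B_2=\{0,2,4\}$ misses $1$ and $3$. After $m-1$ steps the induction gives the stated equality.
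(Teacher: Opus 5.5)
Your proof is correct and follows essentially the same route as the paper. Both start from the full interval $B_{r_j}$ for an odd $r_j$ and add the remaining sets one at a time, using that $\{0,\ldots,M\}+B_r=\{0,\ldots,M+r\}$ whenever $M\ge1$; your two-interval covering is just an explicit form of the paper's remark that the missing midpoint is recovered by adding one to the preceding point.
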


\begin{proof}
If $r_j$ is odd, then $B_{r_j}$ is a nontrivial full interval.  Every other
$B_r$ has gaps of size at most two.  Here $[0,M]$ denotes the integer set
$\{0,1,\ldots,M\}$.  If $M\ge1$, then $[0,M]+B_r=[0,M+r]$:
the only possible missing point of $B_r$ is filled by adding one to the preceding point.  Add the sets one at a time.
\end{proof}

\begin{proof}[Proof of \Cref{prop:even-factor}]
Put $r_i=d_i-1$.  Choose $s_i\in B_{r_i}$ with
$\sum_i s_i=q-2$,
which is possible by \Cref{lem:even-counts}.  Set $t_i=r_i-s_i$.  Then
$\sum_i t_i=q-d_*+1$ and $s_i\ne t_i$.
The last inequality is exactly the reason for deleting the midpoint when $r_i$ is even.

We may also assume $s_i\le q-3$ for every $i$.
Indeed, if some $s_i=q-2$, then $r_i=q-2$ is even.  Since at least one other $r_j$ is odd, replace $(s_i,s_j)$ by $(q-3,1)$.  No midpoint is created because $q\ge6$.

Partition the $q-2$ paired perfect matchings into consecutive blocks of
sizes $s_i$, and partition the cross-layer matchings into consecutive blocks
of sizes $t_i$.  Let $G_i$ be the union of the two blocks.

A block of size zero is empty.  Otherwise,
\Cref{lem:round-robin-block} gives the layer biclique bound $s_i+1$, and
\Cref{lem:interval} gives the cross-layer biclique bound $t_i+1$.  Since
$s_i\ne t_i$, \Cref{lem:two-layer} yields
$\bic(G_i)\le s_i+t_i+1=d_i$.
The blocks partition all remaining edges, completing the proof.
\end{proof}

\section{Proof of Proposition 6.1}\label{app:odd-exceptional}

We prove the four parts of \Cref{prop:odd-library} in increasing order of the
exceptional subsystem's dimension.  Throughout this appendix, the vertices
$u_i,v_i$ retain the meaning from \Cref{sec:odd}, and their corresponding
exceptional local vectors are denoted by $c_{u_i},c_{v_i}$.

\subsection{Dimension three: the double-cone construction}

Let $q\ge7$ be odd and put $x=\pi/q$, $R=\cos x$, $S=\cos3x$, and
$\delta=\arccos\sqrt{RS}$.
Here $R>S>0$.  We first record
\begin{equation}\label{eq:delta-window}
 \sqrt5\,x<\delta<3x.
\end{equation}
The upper bound follows from $\sqrt{RS}>S=\cos3x$.  For the lower bound, the
function $g(t)=\log\cos\sqrt t$ is strictly concave on the relevant
interval, meaning that its graph lies strictly above every chord, because,
with $u=\sqrt t$, one has
$g''(t)=-(u\sec^2u-\tan u)/(4u^3)<0$.
Applying strict concavity at the midpoint of $x^2$ and $9x^2$ gives
\[
 \log\cos(\sqrt5x)
 >\frac{\log\cos x+\log\cos3x}{2}
 =\log\cos\delta.
\]
Since $\log\cos t$ is decreasing, the lower bound in \eqref{eq:delta-window} follows.  In particular,
\begin{equation}\label{eq:delta-noninteger}
 \delta/x\notin\Z.
\end{equation}

For $i\in\Z_q$, define
\[
\begin{aligned}
 c_{u_i}&=\bigl(\sqrt R,\cos(2ix),\sin(2ix)\bigr),\\
 c_{v_i}&=\bigl(-\sqrt S,\cos(2ix+\delta),\sin(2ix+\delta)\bigr).
\end{aligned}
\]
The two families lie on two circular cones, which explains the name of the
construction.
Let $h=(q-1)/2$.  Direct calculation gives
\[
 c_{u_i}\perp c_{u_j}
 \quad\Longleftrightarrow\quad
 j-i\equiv\pm h\pmod q,
\]
and
\[
 c_{v_i}\perp c_{v_j}
 \quad\Longleftrightarrow\quad
 j-i\equiv\pm(h-1)\pmod q.
\]
For a cross-layer pair,
$\inner{c_{u_i}}{c_{v_j}}=-\sqrt{RS}+\cos(2(j-i)x+\delta)$.
One solution is $i=j$.  Any other solution would imply that $\delta/x$ is an
integer modulo $q$, contrary to \eqref{eq:delta-noninteger}.  Thus the
cross-layer graph is exactly the matching $u_iv_i$.

It remains to prove that every four vectors span $\R^3$.  Any three vectors
in one layer are independent: after normalizing their common first
coordinate, dependence would make three distinct points on a circle
collinear, which is impossible because a line meets a circle in at most two
points.

Consider two vectors from each layer.  Write $\theta_i=2ix$.  A
\emph{normal} to a plane is a nonzero vector perpendicular to every vector
in that plane.  A normal to the plane generated by $c_{u_i},c_{u_j}$ is
\[
 n_U=\bigl(\cos\alpha,-\sqrt R\cos\mu,-\sqrt R\sin\mu\bigr),
 \quad
 \alpha=\frac{\theta_i-\theta_j}{2},
 \quad
 \mu=\frac{\theta_i+\theta_j}{2}.
\]
A normal to the plane generated by $c_{v_k},c_{v_\ell}$ is
\[
 n_V=\bigl(\cos\beta,\sqrt S\cos\nu,\sqrt S\sin\nu\bigr),
 \quad
 \beta=\frac{\theta_k-\theta_\ell}{2},
 \quad
 \nu=\frac{\theta_k+\theta_\ell}{2}+\delta.
\]
If the four vectors failed to span, these normals would be parallel.  Their
last two coordinates would then force $\nu-\mu\in\pi\Z$, or
$\delta+x(k+\ell-i-j)\in\pi\Z$.
This again makes $\delta/x$ an integer, a contradiction.  Thus every four
vectors span $\R^3$ over $\R$, and hence span $\C^3$ over $\C$, proving
Part~(1) of \Cref{prop:odd-library}.

\subsection{Dimension four}

The construction uses two weighted Fourier orbits, as defined in
Appendix~\ref{app:even-exceptional}, whose layer kernels have different
roots.  A construction depending on one parameter places the two required
cross-layer orthogonality matchings while keeping every five local vectors
spanning.

Let $q\ge5$ be odd and put $\theta=\pi/q$, $\omega=e^{2\pi i/q}$,
$\xi=e^{i\theta}$, $\rho=2\cos\theta$, and $\sigma=2\cos3\theta$.
Choose a positive real number $a$ sufficiently close to one, but not equal to one.  Define
\begin{equation}\label{eq:four-abR}
 b=\frac{a+\rho}{1+a\rho},
 \qquad
 R^2=\frac{1+a\rho}{a(a+\rho)}.
\end{equation}
Then $R^2ab=1$.
Set
\begin{equation}\label{eq:four-M}
 M(a)=
 \frac{R^2(\rho+a)(\sigma+b)-1-\rho^2R^2}{\rho}.
\end{equation}
Choose a nonreal solution $z$ of
\begin{equation}\label{eq:four-z}
 |z|^2=R^2,
 \qquad
 z+\overline z=M(a).
\end{equation}

Such a solution exists whenever $a$ is sufficiently close to, but different
from, one.  At $a=1$ we have $R=1$ and
$M(1)/2=(\rho^3+\rho^2-4\rho-2)/2$.
For $2\cos(\pi/5)\le\rho<2$, this number lies strictly between $-1$ and $1$.  The inequality remains strict under a small perturbation of $a$, so \eqref{eq:four-z} has a nonreal conjugate pair.  Since $a\ne1$, one also has $R\ne1$.

Put $\eta=\xi z$ and define
\[
 p(T)=(T-1)(T-\omega)(T-\eta)
     =\sum_{k=0}^{3}p_kT^k.
\]
A direct expansion gives
\[
 p_0=-\xi^3z,\qquad
 p_1=\xi^2(1+\rho z),\qquad
 p_2=-\xi(\rho+z),\qquad
 p_3=1.
\]

Define positive weights $\lambda_0=1$, $\lambda_1=\rho+a$,
$\lambda_2=1+\rho a$, and $\lambda_3=a$, together with $\mu_0=1$,
$\mu_1=\sigma+b$, $\mu_2=1+\sigma b$, and $\mu_3=b$.
For $a$ close to one all these numbers are positive.  Equations
\eqref{eq:four-abR}--\eqref{eq:four-z} imply
\begin{equation}\label{eq:four-moduli}
 |p_k|^2=R^2\lambda_k\mu_k
 \qquad(0\le k\le3).
\end{equation}
For $k=0,3$ this is immediate.  The equality for $k=1$ is
\eqref{eq:four-M}; the equality for $k=2$ follows from
$|\rho+z|^2-|1+\rho z|^2=(\rho^2-1)(1-R^2)$ and
\eqref{eq:four-abR}.

For $i\in\Z_q$, define
\begin{equation}\label{eq:four-vectors}
 c_{u_i}=
 \bigl(\sqrt{\lambda_k}\,\omega^{ki}\bigr)_{k=0}^{3},
 \qquad
 c_{v_i}=
 \left(\frac{p_k}{\sqrt{\lambda_k}}\,
       \omega^{ki}\right)_{k=0}^{3}.
\end{equation}

The first layer kernel factors as
\[
 \sum_{k=0}^{3}\lambda_kT^k
 =
 (1+aT)(1+\rho T+T^2).
\]
The roots of the quadratic factor are
\[
 \omega^{(q-1)/2},
 \qquad
 \omega^{-(q-1)/2},
\]
while $-1/a$ is not a $q$th root of unity.  Hence
\[
 c_{u_i}\perp c_{u_j}
 \quad\Longleftrightarrow\quad
 j-i\equiv\pm\frac{q-1}{2}\pmod q.
\]
By \eqref{eq:four-moduli},
\[
 \sum_{k=0}^{3}\mu_kT^k
 =
 (1+bT)(1+\sigma T+T^2),
\]
so
\[
 c_{v_i}\perp c_{v_j}
 \quad\Longleftrightarrow\quad
 j-i\equiv\pm\frac{q-3}{2}\pmod q.
\]
Finally, $\inner{c_{u_i}}{c_{v_j}}=p(\omega^{j-i})$.
The roots $1$ and $\omega$ give the two cross-layer matchings with shifts
$0$ and $1$.  The third root $\eta$ is not a root of unity because
$|\eta|=R\ne1$.  Thus the prescribed orthogonality graph is four-regular.

It remains to choose the parameter so that every five local vectors span.
Any four vectors within one layer are independent by the Vandermonde
determinant.  Only selections of type $3U+2V$ and $2U+3V$, indicating the
number selected from each layer, need attention.

For the polynomial argument, temporarily regard $z$ and $\overline z$ as
independent variables constrained by polynomial equations.  Eliminating
$\overline z$ from \eqref{eq:four-z} gives
$z^2-M(a)z+R^2(a)=0$.
The \emph{discriminant} of a quadratic $Az^2+Bz+C$ is $B^2-4AC$; it
determines whether the quadratic factors over a given field.  Here the
discriminant can be written as
\begin{equation}\label{eq:four-discriminant}
 \Delta(a)=
 \frac{\rho(a^2+\rho a+1)Q_\rho(a)}
      {a^2(a+\rho)^2},
\end{equation}
where
\[
 Q_\rho(a)=
 \rho(\rho^2-3)^2(a^2+1)
 +a(\rho^6-6\rho^4+5\rho^2+8).
\]
Moreover,
\[
 Q_\rho(a)\equiv-4a(\rho^2-2)
 \pmod{a^2+\rho a+1}.
\]
The congruence means that the difference of the two polynomials is divisible
by $a^2+\rho a+1$.  This polynomial is \emph{squarefree}, meaning that it has
no repeated root, because $\rho^2\ne4$.  It is also coprime to $Q_\rho$,
meaning that they have no common root: at either nonzero root of
$a^2+\rho a+1$, the displayed remainder is nonzero because $q$ is odd and
hence $\rho^2\ne2$.  Consequently these roots occur to an odd power in the
factorization of the discriminant \eqref{eq:four-discriminant}, so the
discriminant is not a square in the field $\C(a)$ of rational functions in
$a$, namely quotients of polynomials in $a$.  The quadratic therefore cannot
be factored over $\C(a)$; equivalently,
its solution set is an \emph{irreducible algebraic curve}, meaning that it
cannot be written as the union of two smaller solution sets defined by
polynomial equations.

Call the real values of $a$ near one together with the nonreal solutions
$z$ of \eqref{eq:four-z} the \emph{admissible parameter choices}.  Suppose
that a fixed five-element selection failed to span $\C^4$ for every
admissible choice in an open interval.  Dividing row $k$ of
all the vectors by the nonzero factor $\sqrt{\lambda_k}$ does not affect
whether any relevant $4\times4$ minor vanishes.  After this row normalization,
the entries are $\omega^{ki}$ in a $U$-column and
$p_k\omega^{ki}/\lambda_k$ in a $V$-column.  Thus every relevant minor is a
rational function of $a$ and $z$ on the irreducible curve.  The admissible
choices trace a
nonconstant real-analytic arc, meaning that they vary locally as convergent
power series in one real parameter.  A nonzero rational function on an
irreducible curve cannot vanish along such an arc, so all these minors would
have to vanish identically on the curve.

We test this conclusion on the \emph{bounded branch} as $a\to0$, namely the
solution $z(a)$ that remains bounded.  From \eqref{eq:four-abR} and
\eqref{eq:four-M},
\[
 R^2=\frac1{\rho a}+O(1),
 \qquad
 M(a)=\frac{\sigma}{\rho a}+O(1).
\]
The two roots of \eqref{eq:four-z} have product $R^2$ and sum $M(a)$;
consequently one root diverges and the other satisfies
$z=1/\sigma+O(a)$.
Work on this bounded branch.  Divide row $k$ of \eqref{eq:four-vectors} by
$\sqrt{\lambda_k}$ and multiply every $V$-column by $a$.  For a
root-of-unity node $w$, the normalized $U$-layer column is
$U(w)=(1,w,w^2,w^3)^T$,
and a direct Laurent expansion, that is, a power-series expansion allowing
finitely many negative powers of $a$, gives
\begin{equation}\label{eq:four-limit-v}
 aV(w)
 =
 w^3e_3+
 a(\alpha_0,\alpha_1w,\alpha_2w^2,0)^T
 +O(a^2),
\end{equation}
where
\[
 \alpha_0=-\frac{\xi^3}{\sigma},
 \qquad
 \alpha_1=\frac{\xi^2(\rho+\sigma)}{\rho\sigma},
 \qquad
 \alpha_2=-\frac{\xi(\rho\sigma+1)}{\sigma}.
\]

For a $3U+2V$ selection, three distinct $U$-columns and either limiting
$V$-column already have a nonzero determinant.  Expanding along the last
coordinate of the limiting $V$-column leaves the three-point Vandermonde
determinant.  Hence no such selection fails to span for every parameter
choice.

Consider a $2U+3V$ selection.  Let the two $U$-nodes be $r_1,r_2$ and the
three $V$-nodes be $s_1,s_2,s_3$.  If these five columns always spanned a
space of dimension at most three, choose a nonzero \emph{left-null vector},
meaning a row vector
whose product with every selected column is zero, with entries given by
Laurent series in $a$.  Normalize its first nonzero term as
$\ell(a)=\ell_0+a\ell_1+O(a^2)$.
The leading $V$-columns in \eqref{eq:four-limit-v} force the fourth coordinate of $\ell_0$ to vanish.  The two $U$-constraints then say that the polynomial represented by the first three coordinates of $\ell_0$ is
\[
 L_0(T)=\kappa(T-r_1)(T-r_2),
 \qquad \kappa\ne0.
\]
Let $\gamma$ be the fourth coordinate of $\ell_1$.  The coefficient of $a$
in the three $V$-constraints says that each $s_j$ is a root of
\[
 F(T)=
 \gamma T^3
 +\kappa\alpha_2T^2
 -\kappa\alpha_1(r_1+r_2)T
 +\kappa\alpha_0r_1r_2.
\]
The constant term is nonzero, so $F$ is not the zero polynomial.  Since it has the three distinct roots $s_1,s_2,s_3$, it equals
\[
 \gamma(T-s_1)(T-s_2)(T-s_3),
\]
and in particular $\gamma\ne0$.  Comparing the quadratic and constant coefficients, without dividing by any parameter-dependent quantity, gives
\[
 \alpha_2s_1s_2s_3
 =\alpha_0r_1r_2(s_1+s_2+s_3).
\]
Substituting the values of $\alpha_0$ and $\alpha_2$ yields the cross-multiplied identity
\begin{equation}\label{eq:four-cross-multiplied}
 (\rho\sigma+1)s_1s_2s_3
 =\xi^2r_1r_2(s_1+s_2+s_3).
\end{equation}

Assume first that $q\ge7$.  Then $\rho\sigma+1=\sin(5\theta)/\sin\theta>0$.  Taking absolute values in \eqref{eq:four-cross-multiplied} gives
\begin{equation}\label{eq:four-root-sum}
 |s_1+s_2+s_3|
 =\rho\sigma+1
 =\frac{\sin5\theta}{\sin\theta}.
\end{equation}
For $q\ge9$, the sum of three distinct $q$th roots has modulus at most
\[
 1+2\cos\frac{2\pi}{q}
 =\frac{\sin3\theta}{\sin\theta}.
\]
Indeed, rotate the sum to the positive real axis.  Its modulus is the sum of the real projections of the three selected roots, so the three largest projections are obtained from three consecutive grid points.  If the middle of these points makes angle $\delta$ with the positive axis, their projection sum is
\[
 (1+2\cos(2\pi/q))\cos\delta
 \le 1+2\cos(2\pi/q).
\]
Since
\[
 \sin5\theta-\sin3\theta
 =2\cos4\theta\sin\theta>0,
\]
this contradicts \eqref{eq:four-root-sum}.

For $q=7$, put $c_j=\cos(2j\pi/7)$.  Up to rotation and reflection, the three exponents have one of the four forms
\[
 \{0,1,2\},\quad \{0,1,3\},\quad
 \{0,1,4\},\quad \{0,2,4\}.
\]
The corresponding squared moduli are
\[
 3+4c_1+2c_2,
 \quad 2,
 \quad 3+2c_1+4c_3,
 \quad 3+4c_2+2c_3.
\]
The target in \eqref{eq:four-root-sum} is
\[
 (\rho\sigma+1)^2=2+2c_1.
\]
Using $1+2(c_1+c_2+c_3)=0$, together with $c_1>0>c_2>c_3$ and $c_3<-1/2$, the four differences from the target are respectively
\[
 -2c_3,
 \quad -2c_1,
 \quad 1+4c_3,
 \quad 2(c_2-2c_1),
\]
all nonzero.  Thus \eqref{eq:four-root-sum} is impossible for $q=7$ as well.

Finally, when $q=5$ one has $\rho\sigma+1=0$.  Equation~\eqref{eq:four-cross-multiplied} then gives
\[
 s_1+s_2+s_3=0,
\]
because $\xi^2r_1r_2\ne0$.  Three distinct fifth roots cannot have zero sum: otherwise the remaining two fifth roots would also sum to zero, which would require their ratio to be $-1$.  This is impossible in a group of odd order.

Thus no fixed five-element selection fails to span for every parameter
choice.  Each selection excludes only finitely many points of the admissible
parameter arc, and there are finitely many selections.  One admissible
parameter choice avoids all
of them.  For this choice every five vectors span, proving Part~(2) of
\Cref{prop:odd-library}.

\subsection{Dimension five}

We prove Part~(3) of \Cref{prop:odd-library}.  Let $q\ge7$ be odd, put
$\omega=e^{2\pi i/q}$, $h=(q-1)/2$, and $\theta=\pi/q$, and define
\begin{equation}\label{eq:five-a-b}
 a=2(\cos\theta+\cos3\theta),
 \qquad
 b=2+4\cos\theta\cos3\theta.
\end{equation}
Both numbers are positive.  The polynomial
\[
 K(z)=(z^2+2\cos\theta\,z+1)
      (z^2+2\cos3\theta\,z+1)
 =z^4+az^3+bz^2+az+1
\]
has roots $\omega^{\pm h}$ and $\omega^{\pm(h-1)}$.

Let $\lambda_0=\lambda_4=1$, $\lambda_1=\lambda_3=a$, and
$\lambda_2=b$, and define
$c_{u_i}=(\sqrt{\lambda_k}\,\omega^{ki})_{k=0}^{4}$.  Then
$\inner{c_{u_i}}{c_{u_j}}=K(\omega^{j-i})$,
so the layer orthogonality differences are precisely $\pm h,\pm(h-1)$.  Any
five $U$-layer local vectors are independent by the Vandermonde determinant.

We next construct a second orbit.  From \eqref{eq:five-a-b},
$4b-a^2=4(2-(\cos\theta-\cos3\theta)^2)>0$.
Choose $\phi\in(0,\pi/2)$ such that
\begin{equation}\label{eq:phi-choice}
 \cos\phi=\frac{a}{2\sqrt b}.
\end{equation}
For $t$ on the unit circle, meaning $|t|=1$, define the monomial unitary
$R_t$ by
\[
 R_te_k=r_ke_{k+2\pmod5},
\]
where
\[
 r_0=e^{i\phi},
 \quad r_1=-1,
 \quad r_2=e^{-i\phi},
 \quad r_3=t,
 \quad r_4=-t.
\]
A \emph{monomial unitary} maps every standard basis vector to a
unit-modulus multiple of another standard basis vector; equivalently, its
matrix has exactly one nonzero entry of modulus one in each row and column.
Put $c_{v_i}=R_tc_{u_i}$.  Each layer has the same orthogonality graph, and
any five $V$-layer local vectors are independent.

Equation~\eqref{eq:phi-choice} gives
\begin{equation}\label{eq:five-matching}
 \inner{c_{u_i}}{c_{v_i}}
 =\omega^{-2i}\bigl(\sqrt b e^{i\phi}-a+\sqrt b e^{-i\phi}\bigr)=0.
\end{equation}
For $\Delta=j-i\ne0$,
\begin{equation}\label{eq:five-cross-generic}
\begin{aligned}
 \inner{c_{u_i}}{c_{v_j}}
 &=\omega^{-2i}
   \bigl(\sqrt b e^{i\phi}-a\omega^\Delta
        +\sqrt b e^{-i\phi}\omega^{2\Delta}\bigr)\\
 &\quad+t\sqrt a\,\omega^{3i+3\Delta}(1-\omega^\Delta).
\end{aligned}
\end{equation}
The coefficient of $t$ is nonzero, so every cross-layer pair outside the
matching excludes at most one value of $t$.  Outside a finite set of
parameter values, the only cross-layer orthogonality is the matching in
\eqref{eq:five-matching}.

We now impose the spanning property.  For a unit complex number $z$, write
\[
 c_U(z)=
 \begin{pmatrix}
 1\\ \sqrt a\,z\\ \sqrt b\,z^2\\ \sqrt a\,z^3\\ z^4
 \end{pmatrix},
 \qquad
 c_{V,t}(z)=
 \begin{pmatrix}
 t\sqrt a\,z^3\\ -tz^4\\ e^{i\phi}\\ -\sqrt a\,z\\ e^{-i\phi}\sqrt b\,z^2
 \end{pmatrix}.
\]
Fix six distinct columns chosen from the two orbits, with $s$ from the first.  If $s\in\{0,1,5,6\}$, five columns already come from one Vandermonde orbit.  Suppose $s\in\{2,3,4\}$.

Failure to span is an algebraic condition in $t$, meaning that it is given
by polynomial equations in $t$.  If it held for infinitely many $t$, all
relevant $5\times5$ minors would vanish identically, that is, for every
value of $t$.  For
$s=2$ or $3$, set $t=0$.  At least three selected $V$-layer columns then
have support, meaning nonzero coordinates, only in the last three positions.
A left-null vector would define a polynomial of degree at most two vanishing
at three distinct roots of unity, so its last three coordinates would be
zero.  The remaining at least two $U$-layer columns would then force its first
two coordinates to be zero.

For $s=4$, divide the two selected $V$-layer columns by $t$ and let
$t\to\infty$.  Their limit has support only in the first two coordinates and
gives a degree-one polynomial at two distinct nonzero nodes, forcing the first
two coordinates of a left-null vector to vanish.  The four $U$-layer columns
then give a degree-two polynomial at four distinct nodes, forcing the last
three coordinates to vanish.  Both cases are contradictions.

Thus every six-element selection has only finitely many bad values of $t$.
Choose one unit-modulus $t$ outside their union and outside the finite set
excluded by \eqref{eq:five-cross-generic}.  Then every six local vectors span.  The
layer graph uses the four differences $\pm h,\pm(h-1)$, leaving
$1,\ldots,h-2=(q-5)/2$, and the only used cross-layer matching is
$\{u_iv_i:i\in\Z_q\}$.
This proves Part~(3) of \Cref{prop:odd-library}.

\subsection{Dimensions at least six}

We use the weighted Fourier-orbit notation and determinant formula from the
proof of \Cref{prop:even-library} in
Appendix~\ref{app:even-exceptional}.  Let $q$ be odd and
$6\le d\le q-1$, and put $h=(q-1)/2$, $\zeta=e^{2\pi i h/q}$, and
$\rho=2\cos(\pi/q)$.  Since $h$ and $q$ are coprime, $\zeta$ is a primitive
$q$th root of unity, and
\begin{equation}\label{eq:three-term-general}
 1+\rho\zeta+\zeta^2=0.
\end{equation}
Let $r=q-d$ and define
$P(z)=\prod_{h'=d}^{q-1}(z-\zeta^{h'})$.
Equation~\eqref{eq:q-binomial}, with the present primitive root, again shows that every coefficient of $P$ is nonzero.  Introduce a complex parameter $\eta$ and set
\begin{equation}\label{eq:odd-p-eta}
 p_\eta(z)=P(z)(z-\eta).
\end{equation}
Outside a finite set of values of $\eta$, all $r+2$ coefficients of
$p_\eta$ are nonzero and $p_\eta(\zeta^j)\ne0$ for $0\le j<d$.
Fix such an $\eta$ temporarily and write $p_j=p_\eta(\zeta^j)$ and $A_j=|p_j|^2$.

Put $M=d-5$.  Use $\alpha_0=1$, $\alpha_1=\rho$, $\alpha_2=1$ and
$\beta_{d-3}=1$, $\beta_{d-2}=\rho$, $\beta_{d-1}=1$.
Both triples satisfy shifted forms of \eqref{eq:three-term-general}.  Set
$g_j=A_j/\beta_j$ for $d-3\le j\le d-1$, and define
\[
\begin{aligned}
 \lambda_0&=1+x_0,
 &\lambda_1&=\rho+x_1,
 &\lambda_2&=1,\\
 \lambda_j&=\varepsilon^{j-2} &&(3\le j\le d-4),\\
 \lambda_{d-3}&=\varepsilon^M(g_{d-3}+y_0),
 &\lambda_{d-2}&=\varepsilon^M(g_{d-2}+y_1),
 &\lambda_{d-1}&=\varepsilon^M g_{d-1}.
\end{aligned}
\]
We choose four real variables so that
\begin{equation}\label{eq:odd-balance-general}
 \sum_{j=0}^{d-1}\lambda_j\zeta^j=0,
 \qquad
 \sum_{j=0}^{d-1}\frac{A_j}{\lambda_j}\zeta^j=0.
\end{equation}
After multiplying the second complex equation by $\varepsilon^M$ and cancelling explicit powers, the four real equations extend smoothly to $\varepsilon=0$.  Their Jacobian in $(x_0,x_1,y_0,y_1)$ is block diagonal.  The first block has columns $1,\zeta$; the second has nonzero multiples of $\zeta^{d-3},\zeta^{d-2}$.  Each pair is linearly independent over $\R$, so the implicit function theorem yields positive weights for small $\varepsilon$.

Define $c_{u_i},c_{v_i}$ by \eqref{eq:even-fourier-vectors}, with the
current $\zeta,p_j,\lambda_j$.  Equation~\eqref{eq:odd-balance-general}
gives $c_{u_i}\perp c_{u_{i+1}}$ and
$c_{v_i}\perp c_{v_{i+1}}$.
These prescribed cycle edges are the only layer orthogonalities for sufficiently small $\varepsilon$.  For a $q$th root $z$, the two layer kernels satisfy
\[
 \sum_{j=0}^{d-1}\lambda_jz^j
 \longrightarrow1+\rho z+z^2,
\]
and
\[
 \varepsilon^M\sum_{j=0}^{d-1}\frac{A_j}{\lambda_j}z^j
 \longrightarrow z^{d-3}(1+\rho z+z^2).
\]
Among the $q$th roots of unity, the quadratic factor vanishes exactly at $z=\zeta$ and $z=\zeta^{-1}$.  The balance equations keep these two zeros exact, and the remaining finitely many values stay nonzero after $\varepsilon$ is chosen sufficiently small.

The finite Fourier-inversion identity explained in
Appendix~\ref{app:even-exceptional}, applied to \eqref{eq:odd-p-eta}, shows
that a cross-layer inner product can be nonzero only when
$\ell-i\in\{0,-1,\ldots,-(r+1)\}$.
All coefficients of $p_\eta$ are nonzero, so these are precisely the
nonorthogonal cross-layer shifts.  The prescribed orthogonality graph
consequently has layer degree two and cross-layer degree
$q-(r+2)=d-2$.

The mixed determinant is again \eqref{eq:mixed-det-general}.  In the sense
defined in Appendix~\ref{app:even-exceptional}, the orders of the weights are
$0,0,0,1,2,\ldots,d-6,d-5,d-5,d-5$.
For $3\le s\le d-3$, the unique minimum-order row set is $\{0,\ldots,s-1\}$, and both associated Fourier minors are nonzero.

Two remaining layer counts require separate arguments.  For $s=1$, the
leading coefficient has the
form
\[
 \sum_{k=0}^{2}\gamma_k
 \frac{\alpha_k}{P(\zeta^k)(\zeta^k-\eta)}.
\]
Here $\gamma_0\ne0$, because deleting row $0$ leaves the consecutive rows
$1,\ldots,d-1$.  As $\eta\to1$, the $k=0$ term becomes unbounded while the
other two remain finite.  Hence the displayed expression is not identically
zero.

For $s=d-1$, write $p_k=P(\zeta^k)(\zeta^k-\eta)$.  The minimum-order
terms in \eqref{eq:mixed-det-general} are obtained by omitting one row
$k\in\{d-3,d-2,d-1\}$ and contain the factor
$\prod_{j\ne k}\lambda_j/p_j$.  After extracting the common nonzero factor
$\prod_j\lambda_j/p_j$, the remaining factor is $p_k/\lambda_k$.  Since
\[
 \lambda_k=\varepsilon^M
 \left(\frac{|p_k|^2}{\beta_k}+o(1)\right),
 \qquad d-3\le k\le d-1,
\]
we have
\[
 \frac{p_k}{\lambda_k}
 =\varepsilon^{-M}\frac{\beta_k}{\overline{p_k}}
  +o(\varepsilon^{-M}).
\]
Thus, after a common nonzero factor is removed, the leading coefficient has
the form
\[
 \sum_{k=d-3}^{d-1}\delta_k
 \frac{\beta_k}{\overline{P(\zeta^k)(\zeta^k-\eta)}}.
\]
The coefficient $\delta_{d-1}$ is nonzero because deleting the last row
leaves the consecutive rows $0,\ldots,d-2$.  As
$\overline\eta\to\overline{\zeta^{d-1}}$, the term $k=d-1$ becomes
unbounded while the other terms remain finite.  Hence this expression is not
identically zero.

For each selected column set, the $s=1$ coefficient is rational in $\eta$,
whereas the corrected $s=d-1$ coefficient is rational in $\overline\eta$
(anti-rational in $\eta$).  The pole arguments show that neither is
identically zero; hence each has only finitely many zeros.
There are finitely many selected column sets.  Choose one $\eta$ outside the
finite set of parameter values at which any of these expressions, or any
coefficient of $p_\eta$, vanishes, and also outside the roots $\zeta^j$.
Then choose $\varepsilon$
sufficiently small.  We obtain independence for mixed $d$-tuples with
$s\in\{1,3,4,\ldots,d-3,d-1\}$.
This list is enough: from any $d+1$ vectors, delete one vector so that the
remaining $d$-tuple has one of these layer counts or is pure.  Thus every
$d+1$ vectors span, proving Part~(4) of \Cref{prop:odd-library}.

\section{Proof of Proposition 6.2}\label{app:odd-allocation}

This appendix proves \Cref{prop:odd-allocation}.  The three cases below are
determined by the exceptional orthogonality graphs in
\Cref{prop:odd-library}.  In the \emph{single-non-qubit case}, there is no
ordinary non-qubit subsystem, so all remaining edges are assigned to
ordinary qubits.  Suppose instead that an ordinary non-qubit subsystem
exists.  If $d_*=5$, Part~(3) of \Cref{prop:odd-library} uses the same two
layer differences in both layers, and if $d_*\ge6$, Part~(4) uses the same
single layer difference in both layers.  In either situation, the same layer
differences remain available in both layers; this is the \emph{aligned
case}.  If $d_*\in\{3,4\}$, Parts~(1) and~(2) instead use $F_h^U$ in the
$U$ layer and $F_{h-1}^V$ in the $V$ layer.  The complement therefore
contains the mismatched pair $F_{h-1}^U$ and $F_h^V$.  In the
\emph{endpoint-absorption case}, we assign these two layer graphs to one
ordinary non-qubit subsystem, after which the remaining layer graphs are
aligned.
These cases are exhaustive because every non-qubit dimension is at least
three.

\subsection{A single non-qubit subsystem}

\begin{proof}[Single non-qubit case]
Let $d_*$ be the exceptional dimension.  All ordinary subsystems are
qubits, so it is enough to partition $\overline{G_*}$ into perfect matchings.

If $d_*=3$, use Part~(1) of \Cref{prop:odd-library}.  The exceptional
orthogonality graph
is $3$-regular, so its complement is $(2q-4)$-regular; the theorem of
Chetwynd and Hilton partitions it into perfect
matchings~\cite{ChetwyndHilton1985}.  If $d_*=4$, use Part~(2); the same
theorem partitions the $(2q-5)$-regular complement into perfect matchings.

If $d_*=5$, use Part~(3), and if $d_*\ge6$, use Part~(4).  In either case,
the complement is a Cayley graph of the dihedral group of order $2q$.
Recall that, for a group $\Gamma$ and a set $T\subseteq\Gamma$ closed under
inverses, the \emph{Cayley graph} has vertex set $\Gamma$ and joins $g$ to
$gt$ for every $g\in\Gamma$ and $t\in T$.  The \emph{dihedral group}
$D_{2q}$ consists of the $q$ rotations and $q$ reflections of a regular
$q$-gon; it is generated by a rotation $r$ and a reflection $s$, with
$r^q=s^2=1$ and $srs=r^{-1}$.  Here a set \emph{generates} a group if every
group element is a product of elements of the set and their inverses.  In our
graph, the unused cross-layer matchings include two consecutive shifts
$b,b+1$.
The corresponding generators satisfy
$r^{-1}=(r^bs)(r^{b+1}s)$ and hence generate $D_{2q}$.  A Cayley graph is
connected when its generators generate the underlying group, so the
complement is connected.  Stong's theorem therefore partitions its edges
into perfect matchings~\cite{Stong1985}.  In all cases there are exactly
$2q-d_*-1$ matchings, one for each ordinary qubit.
\end{proof}

\subsection{The aligned case}

For an integer $r\ge2$, define the set of allowed numbers of layer-difference
pairs by
\begin{equation}\label{eq:allowed-pairs}
 \mathcal A(r)=
 \begin{cases}
 \{0,1,\ldots,\lfloor r/2\rfloor\}\setminus\{r/4\},
 &4\mid r\text{ and }r\ge8,\\[1mm]
 \{0,1,\ldots,\lfloor r/2\rfloor\},&\text{otherwise}.
 \end{cases}
\end{equation}
The deleted value is exactly the solution of $2x=r-2x$, except that the
dimension-five case $r=4$ is retained because it has the separate biclique
bound in \Cref{lem:five-block}.

\begin{lemma}\label{lem:allowed-sums}
For integers $r,s\ge2$,
\[
 \mathcal A(r)+\mathcal A(s)
 =\{0,1,\ldots,\lfloor r/2\rfloor+\lfloor s/2\rfloor\}.
\]
Consequently, if $k\ge2$ and $r_1,\ldots,r_k\ge2$, then
\[
 \mathcal A(r_1)+\cdots+\mathcal A(r_k)
 =\{0,1,\ldots,\lfloor r_1/2\rfloor+\cdots+\lfloor r_k/2\rfloor\}.
\]
\end{lemma}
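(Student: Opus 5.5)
The plan is to show that the sumset of two allowed-count sets is a full interval, and then extend to $k$ sets by induction. Each $\mathcal A(r)$ is either the full interval $[0,\lfloor r/2\rfloor]$ or that interval with one interior point $r/4$ deleted. The deleted case occurs only when $4\mid r$ and $r\ge8$, so $r/4\ge2$ and $\lfloor r/2\rfloor=r/2=2(r/4)\ge4$. The deleted point is therefore never $0$, $1$ or the top, and the gap has length exactly one. The sumset is contained in $[0,\lfloor r/2\rfloor+\lfloor s/2\rfloor]$ trivially, so only the reverse inclusion needs proof.

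For two sets I will split into cases. If both are full intervals, the claim is immediate. If $\mathcal A(s)$ is full, then $\lfloor s/2\rfloor\ge1$ because $s\ge2$, so $\mathcal A(s)\supseteq\{0,1\}$. Adding $\{0,1\}$ to a set whose only gap has length one fills that gap: the missing point $r/4$ equals $(r/4-1)+1$. Hence $\mathcal A(r)+\{0,1\}=[0,\lfloor r/2\rfloor+1]$, and adding the rest of the interval $\mathcal A(s)$ gives the full interval up to $\lfloor r/2\rfloor+\lfloor s/2\rfloor$.

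The main case to check is when both sets have a gap. Here $\{0,1\}\subseteq\mathcal A(s)$ still holds, since $s/4\ge2$. So $\mathcal A(r)+\{0,1\}$ is the full interval $[0,r/2+1]$. Adding the elements $j\in\mathcal A(s)$ translates this interval, and consecutive translates overlap. The translate by $s/4$ is missing, but the translates by $s/4-1$ and $s/4+1$ still overlap, because the interval has length $r/2+2\ge3$. The union therefore covers $[0,r/2+s/2]$. This argument is really the same one-line observation in both cases: a set containing $\{0,1\}$ added to a set with unit gaps fills those gaps. I expect this to be the only point needing care, mainly to confirm that $\{0,1\}\subseteq\mathcal A(r)$ in every case, including $r=2,3$ where $\lfloor r/2\rfloor=1$, and $r=4$ where nothing is deleted.

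For the consequence I will induct on $k$, with the two-set statement as the base case $k=2$. Once $\mathcal A(r_1)+\dots+\mathcal A(r_{k-1})$ is a full interval $[0,N]$ with $N\ge2$, adding $\mathcal A(r_k)$ still gives a full interval. Each translate $[0,N]+j$ for $j\in\mathcal A(r_k)$ is an interval of length $N+1\ge3$. Consecutive allowed values of $j$ differ by at most $2$, so the translates overlap or abut. Their union is $[0,N+\lfloor r_k/2\rfloor]$, which completes the induction.
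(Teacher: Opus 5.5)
Your reduction to $\mathcal A(r)+\{0,1\}=[0,\lfloor r/2\rfloor+1]$ is sound and agrees with the paper. So are your treatment of the case where one set is a full interval and your induction for $k$ sets.

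The case where both sets have a gap, however, rests on a false inclusion. You translate the interval $[0,r/2+1]=\mathcal A(r)+\{0,1\}$ by every $j\in\mathcal A(s)$. But $[0,r/2+1]+j=\mathcal A(r)+\{j,j+1\}$ lies in $\mathcal A(r)+\mathcal A(s)$ only when $j+1\in\mathcal A(s)$ as well. What you describe is really $\mathcal A(r)+\{0,1\}+\mathcal A(s)$, which is a superset of the sumset.

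You can see this at the top end: the shift $j=s/2$ would put $r/2+s/2+1$ in the sumset, above its maximum. It also misplaces the gap. The shift $s/4-1$ is unavailable, since it needs $s/4\in\mathcal A(s)$. So the admissible shifts are $\{0,\dots,s/2-1\}\setminus\{s/4-1,s/4\}$, which jump from $s/4-2$ to $s/4+1$ — a jump of $3$, not $2$. Likewise, your ``one-line observation'' that adding $\{0,1\}$ fills unit gaps only produces the lower segment $[0,\lfloor r/2\rfloor+1]$. The upper part always needs a separate argument.

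The repair is immediate. The translated interval has $r/2+2\ge6$ points, so translates whose shifts differ by at most $3$ still overlap. The last admissible shift $s/2-1$ reaches exactly $r/2+s/2$. With this correction your proof is complete.

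The paper covers the upper part by symmetry instead of translates. Each $\mathcal A(r)$ is symmetric about $\lfloor r/2\rfloor/2$, because the deleted point $r/4$ is exactly the midpoint. Taking $u=\lfloor r/2\rfloor\ge v=\lfloor s/2\rfloor$, the sumset is therefore symmetric about $(u+v)/2$. It contains both $[0,u+1]$ and its reflection $[v-1,u+v]$, and these overlap. This handles full and gapped sets uniformly and avoids the shift bookkeeping where your slip occurred.
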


\begin{proof}
Write $u=\lfloor r/2\rfloor$ and $v=\lfloor s/2\rfloor$, and assume
$u\ge v$.  Each allowed set is either a full interval or a full interval
with its unique midpoint removed.  In particular, $\mathcal A(s)$ contains
$0$ and $1$.  Hence
$\mathcal A(r)+\{0,1\}$ contains every integer from $0$ to $u+1$: if the midpoint of $\mathcal A(r)$ is missing, it is obtained from the preceding integer plus one.  Both allowed sets are symmetric about their midpoints, so their sum is symmetric about $(u+v)/2$.  The reflection of $[0,u+1]$ is $[v-1,u+v]$, and the two intervals overlap.  This proves the first claim.  Adding further allowed sets preserves a full interval because every such set has gaps of size at most two.
\end{proof}

Assume that the remaining graph has the following form on two $q$-vertex
layers, where $q$ is odd:

\begin{itemize}
\item the same consecutive list of $L$ layer differences is available in
both layers, so the available layer graphs are aligned;
\item $C$ consecutive cross-layer matchings are available.
\end{itemize}

Let $e_1,\ldots,e_\ell$ be the dimensions of the ordinary non-qubit
subsystems, let $m$ be the number of ordinary qubits, and put
$S=\sum_{i=1}^{\ell}(e_i-1)$.
Each aligned layer difference contributes degree two, while each cross-layer
matching contributes degree one.  Thus the required degree identity is
\begin{equation}\label{eq:aligned-total}
 2L+C=S+m.
\end{equation}
Let $\varepsilon$ be the number of even dimensions among
$e_1,\ldots,e_\ell$.
Define
\[
 C_0=
 \begin{cases}
 0,&m=0,\\
 \min\{m,C-\varepsilon\},&m>0,
 \end{cases}
\]
and
\begin{equation}\label{eq:aligned-X}
 X=\frac{S-C+C_0}{2}.
\end{equation}

\begin{lemma}\label{lem:aligned-allocation}
Suppose $C-\varepsilon\ge1$.
If $\ell=1$, suppose also that $X\in\mathcal A(e_1-1)$.  Then the remaining
graph can be partitioned into spanning graphs $G_i$ for the ordinary
non-qubit subsystems and $m$ perfect matchings for the ordinary qubits, such
that each $G_i$ is $(e_i-1)$-regular and $\bic(G_i)\le e_i$.
\end{lemma}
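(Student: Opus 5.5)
The plan is to split the available edges in two stages: first peel off perfect matchings for the $m$ ordinary qubits, then distribute what remains among the non-qubit subsystems as consecutive blocks. Each non-qubit block will consist of some aligned layer-difference pairs together with some consecutive cross-layer matchings. The biclique bound for a block will come from combining \Cref{lem:odd-annulus} on the layer side, \Cref{lem:interval} on the cross-layer side, and \Cref{lem:two-layer} to glue them, exactly as in the proof of \Cref{prop:even-factor}.

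\textbf{Qubits.} The $m$ qubits receive $C_0$ of the cross-layer matchings, taken from one end of the cyclic interval so that the remaining $C-C_0$ matchings stay consecutive. If $m>C_0$, the remaining $m-C_0$ qubits must be served differently. In that case I would set aside $m-C_0$ matchings' worth of degree by reserving some aligned layer pairs together with at least one cross-layer matching. The reserved graph is $H(A,B)$ with $A$ a set of layer differences and $B\neq\varnothing$. I expect to keep two consecutive cross-layer shifts inside $B$ (this is why $C-\varepsilon\ge1$ matters), so the Cayley graph on $D_{2q}$ is connected and Stong's theorem factorizes it into perfect matchings. Each such matching is $1$-regular with biclique number $2$.

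\textbf{Non-qubit subsystems.} The remaining degree is $S$, made up of $X$ layer pairs (degree $2X$) and $S-2X=C-C_0$ cross-layer matchings, by \eqref{eq:aligned-total} and \eqref{eq:aligned-X}. Put $r_i=e_i-1$. Choose $x_i\in\mathcal A(r_i)$ with $\sum x_i=X$, which \Cref{lem:allowed-sums} permits when $\ell\ge2$ and the hypothesis permits when $\ell=1$. Set $t_i=r_i-2x_i$. Then $G_i$ is a consecutive block of $x_i$ layer differences in both layers, whose difference set is an annulus $D(a,b)$ of size $2x_i$, together with a consecutive block of $t_i$ cross-layer matchings. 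This makes $G_i$ regular of degree $r_i$. \Cref{lem:odd-annulus} gives a layer bound of $2x_i+1$ and \Cref{lem:interval} gives a cross-layer bound of $t_i+1$. When $2x_i\ne t_i$, \Cref{lem:two-layer} yields $\bic(G_i)\le r_i+1=e_i$. The case $2x_i=t_i$ is exactly the excluded value $r_i/4$, except when $r_i=4$. In that case $x_i=1$ and $t_i=2$, and \Cref{lem:five-block} applies provided the single difference $a$ satisfies $2a\not\equiv\pm1$ and does not give $4$-cycles. I would arrange this by ordering the blocks so that a dimension-five subsystem never receives difference $h$ or $1$ when those are problematic. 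Blocks with $x_i=0$ or $t_i=0$ reduce to a single bound directly.

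\textbf{Main obstacle.} The parity bookkeeping is where I expect the difficulty. I need $t_i\ge0$ for every $i$, and $\sum t_i=C-C_0$. For an even $e_i$, $r_i$ is odd, so $t_i\ge1$; this is why $C_0\le C-\varepsilon$ leaves enough cross-layer matchings. So the sum constraints must be compatible with $x_i\le\lfloor r_i/2\rfloor$. I would verify that $X$ is an integer lying in $[0,\sum\lfloor r_i/2\rfloor]$ using $S-C+C_0\equiv\varepsilon$-type parity and $C-C_0\ge\varepsilon$. Positioning the dimension-five blocks to satisfy \Cref{lem:five-block}, and keeping the Stong factorization connected when $m>C$, are the delicate points.
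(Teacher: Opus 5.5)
Your overall route is the paper's. You pick $x_i\in\mathcal A(r_i)$ with $\sum_i x_i=X$, give subsystem $i$ a consecutive annulus of $x_i$ aligned differences and a consecutive block of $t_i=r_i-2x_i$ cross-layer matchings, and bound it by \Cref{lem:odd-annulus}, \Cref{lem:interval} and \Cref{lem:two-layer} (or \Cref{lem:five-block} when $r_i=4$, $x_i=1$). You then factor the leftover dihedral Cayley graph $R$, of degree $2(L-X)+C_0=m$, by Stong's theorem.

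\textbf{The gap is the connectivity of $R$.} You assert that $C-\varepsilon\ge1$ lets you keep two consecutive reflection shifts in $B$, but it only yields $C_0\ge1$ when $m>0$. Two facts pin $C_0$ down: layer pairs contribute even degree, so the qubits' cross-layer count is $\equiv m\pmod 2$; and each even $e_i$ forces $t_i\ge1$, so the qubits get at most $C-\varepsilon$ cross-layer matchings. Hence when $m\ge3$ is odd and $C-\varepsilon=1$, exactly one reflection generator $r^bs$ remains. Connectivity must then come from the rotations. The set $\{r^{\pm a}:a\in A\}\cup\{r^bs\}$ generates $D_{2q}$ only if $\gcd(A\cup\{q\})=1$. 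If the differences left to the qubits share a factor with $q$ (say $A=\{3\}$ with $3\mid q$), $R$ is disconnected and your argument does not apply.

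What the hypothesis really buys is that some reflection survives at all. Without one, $R$ would consist of layer edges only, on two vertex classes of odd size $q$, and would have no perfect matching. The paper secures connectivity by allocating the non-qubit annuli from the high end of the difference list. The qubits then keep an initial segment, which contains $1$ or $2$, both coprime to odd $q$. Alternatively, you could apply Stong's theorem to each component, which is a connected Cayley graph on a dihedral subgroup. You should also place the qubits' layer pairs at one end of the list so the non-qubit annuli stay consecutive, and record $X\le L$, which follows from $C_0\le m$.

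\textbf{The five-dimensional ordering is only asserted.} For odd $q$ no layer cycle is a four-cycle, and difference $1$ is harmless because $2\not\equiv\pm1\pmod q$. Among $1\le a\le h$, the only violating difference is $a=h$, since $2h=q-1$. Reordering alone cannot avoid $h$ when every nonempty layer block is a one-pair block of a five-dimensional subsystem and these blocks exhaust the list. The paper then changes the counts, replacing two values $(1,1)$ by $(0,2)$. This is legitimate because $\mathcal A(4)=\{0,1,2\}$. The two-pair block has no cross-layer edges, so its biclique number is at most $5$ by \Cref{lem:odd-annulus}. The zero-pair block is four consecutive cross-layer matchings, with biclique number at most $5$ by \Cref{lem:interval}.
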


\begin{proof}
For each ordinary non-qubit dimension $e_i$, write $r_i=e_i-1$ and choose
$x_i\in\mathcal A(r_i)$; here $x_i$ is the number of layer-difference pairs
assigned to that subsystem.  By \eqref{eq:allowed-pairs}, a value is
deleted exactly when it would give the harmful equality
$2x_i=r_i-2x_i$ in a dimension at least nine; the dimension-five equality
is retained and handled by \Cref{lem:five-block}.

Because $S\equiv\varepsilon\pmod2$ and \eqref{eq:aligned-total} holds,
$C-\varepsilon\equiv m\pmod2$.  Thus $C_0\equiv m\pmod2$.  Moreover,
$C_0\ge C-S=m-2L$,
because both $m$ and $C-\varepsilon$ are at least $C-S$.  Thus $X$ is an
integer and $0\le X\le L$.
Since $2\sum_i\lfloor r_i/2\rfloor=S-\varepsilon$ and
$C_0\le C-\varepsilon$, we have
$X\le\sum_i\lfloor r_i/2\rfloor$.

If $\ell\ge2$, \Cref{lem:allowed-sums} allows us to choose
$x_i\in\mathcal A(r_i)$ with $\sum_i x_i=X$.
If $\ell=0$, then $X=0$ and there is nothing to choose.  If $\ell=1$,
take $x_1=X$, which belongs to $\mathcal A(r_1)$ by hypothesis.

Set $s_i=2x_i$ and $t_i=e_i-1-2x_i$.  Divide the layer differences into
consecutive cyclic annuli of $x_i$ pairs and the cross-layer matchings into
consecutive intervals of $t_i$ matchings.  A zero block is empty.  For a
nonempty layer block, \Cref{lem:odd-annulus} gives the biclique bound
$s_i+1$; for a cross-layer block, \Cref{lem:interval} gives $t_i+1$.  If
$s_i\ne t_i$, \Cref{lem:two-layer} yields
$\bic(G_i)\le s_i+t_i+1=e_i$.  The only retained equality is
$e_i=5$ and $s_i=t_i=2$.
Choose the order of the annuli so that its single difference $a$ satisfies
$2a\not\equiv\pm1\pmod q$.
Then \Cref{lem:five-block} gives $\bic(G_i)\le5$.  This ordering is always
possible.  When $d_*\ge6$, the only layer difference violating this
condition is the last one.  If that difference would be assigned as a
one-pair block to a five-dimensional subsystem, place another nonempty block
there.  If every nonempty block has this form and all layer differences are
used, replace two values $(1,1)$ by $(0,2)$.  When $d_*=5$, the violating
difference is not available.

Assign the annuli for the ordinary non-qubit subsystems from the high end of
the available consecutive list of layer differences.  The remaining layer
graphs therefore form an initial
interval.  Leave a consecutive interval of $C_0$ cross-layer matchings.  The
remaining graph $R$ has degree $2(L-X)+C_0=m$
by \eqref{eq:aligned-total} and \eqref{eq:aligned-X}.  It is a Cayley graph
of the dihedral group of order $2q$.  If $m=0$, it is empty, and if $m=1$,
it is already a perfect matching.  Assume $m\ge2$.  If $C_0\ge2$, two
consecutive reflection generators $r^bs,r^{b+1}s$ produce $r^{-1}$ and hence
generate $D_{2q}$, so $R$ is connected.  If $C_0=1$, layer edges remain,
and the remaining interval contains a difference $a\in\{1,2\}$.  Since $q$
is odd, $\gcd(a,q)=1$, so $r^a$ and the remaining reflection generate
$D_{2q}$.  Thus $R$ is again connected.

By Stong's theorem, the edges of $R$ can be partitioned into perfect
matchings~\cite{Stong1985}.  These perfect matchings are used for the
ordinary qubits.
\end{proof}

We now verify that the hypotheses of \Cref{lem:aligned-allocation} hold in
the aligned case.

\begin{proof}[Aligned case]
Assume that at least two subsystems are non-qubit and that their smallest
dimension $d_*$ is at least five.

Suppose first that $d_*=5$.  Part~(3) of \Cref{prop:odd-library} leaves
$L=(q-5)/2$ common layer-difference pairs and $C=q-1$ cross-layer
matchings.  Every even ordinary non-qubit dimension is at least six, so
$5\varepsilon\le S\le2q-6$.  Hence $\varepsilon\le q-2=C-1$.

Now suppose $d_*\ge6$.  Part~(4) of \Cref{prop:odd-library} leaves
$L=(q-3)/2$ and $C=q-d_*+2$.  Since $d_*$ is the smallest non-qubit
dimension, every even ordinary non-qubit dimension $e_i$ contributes at
least $d_*-1$ to $S$.  Thus
$\varepsilon(d_*-1)\le S\le2q-d_*-1$.
But
\[
 (C-1)(d_*-1)-(2q-d_*-1)
 =
 (q-d_*)(d_*-3)\ge0.
\]
Therefore $\varepsilon\le C-1$.

It remains to verify $X\in\mathcal A(D-1)$ when only one ordinary
non-qubit dimension $D$ remains.  The only possible failure occurs when
$X=L$ is the deleted midpoint, which would give $D-1=4L$.
For $d_*=5$, this gives $D=2q-9$.  The assumption $D\le q-1$ then gives
$q\le8$; since $q$ is odd and at least seven, $q=7$ and $D=5$.  This is not
a deleted value because $D-1=4$ is retained.  For $d_*\ge6$, the
equality gives $D=2q-5$, and $D\le q-1$ would force $q\le4$, impossible in
this branch.  Apply \Cref{lem:aligned-allocation}.
\end{proof}

\subsection{The endpoint-absorption case}

Assume that at least two subsystems are non-qubit and that their smallest
dimension $d_*$ belongs to $\{3,4\}$.  Let $q=2h+1\ge7$.  The exceptional
orthogonality graphs in Parts~(1) and~(2) of
\Cref{prop:odd-library} remove the layer graphs $F_h^U$ and $F_{h-1}^V$.
Their complement therefore contains the mismatched pair
\begin{equation}\label{eq:endpoint-E}
 E=F_{h-1}^U\mathbin{\dot\cup}F_h^V.
\end{equation}
Here $\dot\cup$ denotes an edge-disjoint union: the two graphs have the same
vertex set but disjoint edge sets.

\begin{lemma}\label{lem:endpoint-absorber}
Let $3\le d\le q-1$.  The graph in \eqref{eq:endpoint-E} can be extended to
a $(d-1)$-regular graph $A_d$ with $\bic(A_d)\le d$.
If $d\ne5$, take
\[
 A_d=
 E\mathbin{\dot\cup}
 \bigcup_{t\in I}M_t,
\]
where $I$ is a consecutive interval of $d-3$ cross-layer matchings.  If
$d=5$, take
\[
 A_5=
 E\mathbin{\dot\cup}F_{h-2}^U
 \mathbin{\dot\cup}F_{h-2}^V.
\]
\end{lemma}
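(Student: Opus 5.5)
We verify regularity first and then bound the biclique number by the case analysis of \Cref{lem:two-layer}. Each layer graph $F_a^U,F_a^V$ is $2$-regular when $1\le a\le h$, and each $M_t$ is $1$-regular. For $d\ne5$, every vertex therefore has degree $2+(d-3)=d-1$ in $A_d$. For $d=5$ the degree is $2+2=4$. All pieces must be edge-disjoint and lie in $\overline{G_*}$. For $d=5$ we need $h-2\ge1$ and $h-2\ne h-1,h$, which holds since $q\ge7$. For $d\ne5$ we need $d-3\le q-1$ or $q-2$ unused consecutive shifts avoiding $M_0$ (and $M_1$ when $d_*=4$). The interval $I$ is then taken inside the unused shifts, and in the allocation it is the first part of the available interval.

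The biclique bound for $d\ne5$ follows the pattern of \Cref{lem:two-layer}. Take sides $A,B$ and parts $A_U,A_V,B_U,B_V$.

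\emph{Layer bicliques.} In the $U$ layer the graph is the circulant $F_{h-1}^U$ with difference set $D(h-1,h-1)$, so \Cref{lem:odd-annulus} gives at most $3$ vertices. The same bound holds in $V$ with difference $h$, since $F_h^V$ is a $q$-cycle with no $K_{2,2}$ for $q\ge7$.

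\emph{Cross-layer bicliques.} These use an interval of $t=d-3$ matchings, so \Cref{lem:interval} gives at most $t+1=d-2$ vertices. For $d=3$ there are no cross edges.

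\emph{Combining.} With exactly two or three nonempty parts, we add the bounds, obtaining at most $3+(d-2)-1=d$. With four nonempty parts, the sum is at most $2\cdot 3+2(d-2)$ divided suitably; following \Cref{lem:two-layer} with $s=2$ and $t=d-3$, we get $s+t+2=d+1$. Equality would force all four inequalities tight, and that requires $s=t$, i.e.\ $d=5$, which is excluded. The two layers carry different circulants, but \Cref{lem:two-layer} needs only the same numerical layer bound $s+1=3$, so it applies verbatim and gives $\bic\le s+t+1=d$.

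\emph{The case $d=5$.} No cross edges appear. In $U$ the graph is the circulant with differences $\pm(h-2),\pm(h-1)$, which is the annulus $D(h-2,h-1)$ with $s=4$. In $V$ the differences $\pm(h-2),\pm h$ do not form an annulus, so this is the hardest step. Since $A_5$ has no cross-layer edges, any biclique lies entirely in one layer, giving $\bic(A_5)=\max$ of the two layer values. For $U$, \Cref{lem:odd-annulus} gives $|X|+|Y|\le5$. For $V$, multiply indices by $2$ (a bijection since $q$ is odd). Because $2h\equiv-1$ and $2(h-2)\equiv-5$, the difference set becomes $\{\pm1,\pm5\}$; ideally we would rescale it to an annulus instead. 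Then $Y-X\subseteq\{\pm1,\pm5\}$, and we apply Kneser \eqref{eq:kneser}. If the stabilizer is trivial, $|X|+|Y|\le|Y-X|+1\le5$. If it is nontrivial, the set is a union of cosets of size at least $3$ inside a $4$-element set that is not periodic, which is impossible.
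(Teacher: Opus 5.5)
Your $d\ne5$ argument is the paper's. Both layer graphs have biclique number at most $3$, and the cross-layer interval gives $t+1$ with $t=d-3$. \Cref{lem:two-layer} with $s=2\ne t$ then yields $\bic(A_d)\le d$. Your bound for the $U$ layer when $d=5$ is also correct, as is the observation that, with no cross edges, every biclique lies in one layer.

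The gap is in your last sentence, which carries the whole $V$-layer bound for $d=5$. You rule out a nontrivial stabilizer because $Y-X$ would be a union of cosets of size at least $3$ ``inside a $4$-element set that is not periodic.'' Aperiodicity of $\{\pm1,\pm5\}$ is irrelevant. In a group of odd order every $4$-element set is aperiodic, since a stabilizer's order must divide $4$, yet such a set can still contain a coset: $\{0,q/3,2q/3,1\}$ when $3\mid q$ is one example. This case cannot be waved away. If $Y-X=c+H$ with $|H|=3$, then $X=H$, $Y=c+H$ gives $|X|+|Y|=6$, breaking $\bic(A_5)\le5$.

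What you must show is that $\{\pm1,\pm5\}$, equivalently $S_0=\{\pm(h-2),\pm h\}$, contains no coset of the order-$3$ subgroup. All pairwise differences within such a coset equal $\pm q/3$. Up to negation, the triples are $\{1,-1,5\}$ with differences $2,4,6$ and $\{1,5,-5\}$ with differences $4,6,10$. Equating two differences up to sign forces $q\mid2$, $q\mid6$ or $q\mid10$, which is impossible for odd $q\ge7$. The paper makes the same check in the original coordinates, using the cyclic gaps $2,1,2,q-5$ of $S_0$.

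Alternatively, finish the rescaling you gestured at, but multiply by $2^{-1}$ rather than $2$. Since $2h\equiv-1$, you get $2^{-1}S_0=\{\pm\mu,\pm(\mu+1)\}$ with $\mu\equiv4^{-1}$. This set has four distinct nonzero elements, so it is a genuine annulus $D(b,b+1)$ with $1\le b<b+1\le h$. Multiplication by a unit preserves $|X|$, $|Y|$ and the containment of $Y-X$ in the scaled set. So \Cref{lem:odd-annulus} gives $|X|+|Y|\le5$ directly, with no separate stabilizer analysis.
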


\begin{proof}
Each of $F_{h-1}^U$ and $F_h^V$ is a union of odd cycles, so its biclique
number is at most three.  If $d\ne5$, the cross-layer interval has degree
$t=d-3$ and biclique bound $t+1$.  For $d=3$ there are no cross-layer edges.
For $d=4$ or $d\ge6$, the layer degree two differs from $t$, so
\Cref{lem:two-layer} gives $\bic(A_d)\le2+(d-3)+1=d$.

Let $d=5$.  In the $U$ layer the difference set
$\{\pm(h-2),\pm(h-1)\}$ is a four-element cyclic annulus, so
\Cref{lem:odd-annulus} gives biclique number at most five.  In the $V$ layer
the difference set is $S_0=\{\pm(h-2),\pm h\}$.
Suppose nonempty $X,Y\subseteq\Z_q$ satisfy $Y-X\subseteq S_0$.  Kneser's
theorem gives $|X|+|Y|\le|Y-X|+|H|$,
where $H$ is the stabilizer of $Y-X$.  If $H$ is trivial, the right side is
at most five.  If $H$ is nontrivial, then $|H|\ge3$.  Since $S_0$ has four
elements, the only possible case is $|H|=3$ and $Y-X$ is a three-element
coset.  After listing residues around the cycle $\Z_q$, their \emph{cyclic
gaps} are the forward distances between consecutive residues, including the
distance from the last residue back to the first.  For $S_0$, these gaps are
$2,1,2,q-5$.  No three of its residues have three equal cyclic gaps, so
$S_0$ contains no coset of a subgroup of order three.  Hence the stabilizer
is trivial.  There are no cross-layer edges in $A_5$, so
$\bic(A_5)\le5$.
\end{proof}

\begin{proof}[Endpoint-absorption case]
Choose an ordinary non-qubit subsystem of minimum dimension $d$ and assign it
the graph $A_d$ from \Cref{lem:endpoint-absorber}.  The remaining layer
graphs are now aligned.

Let $d_*\in\{3,4\}$ be the exceptional dimension.  If $d\ne5$, the
remaining parameters are $L=(q-5)/2$ and $C=q-d+2$ for $d_*=3$, or
$C=q-d+1$ for $d_*=4$.  If $d=5$, they are $L=(q-7)/2$ and $C=q-1$ for
$d_*=3$, or $C=q-2$ for $d_*=4$.

Let $S$ be the sum of $e_i-1$ over the remaining ordinary non-qubit
subsystems, let $m$ be the number of ordinary qubits, and let $\varepsilon$
count the even dimensions among those ordinary non-qubit subsystems.  In
every case, $2L+C=S+m$.
We verify $C-\varepsilon\ge1$.

Assume first that $d\ne5$.  Every remaining even non-qubit dimension $e_i$
contributes at least $d-1$ to $S$.  When $d_*=3$,
\[
 (C-1)(d-1)-(S+m)
 =
 (d-3)(q-d)+2\ge0.
\]
When $d_*=4$,
\[
 (C-1)(d-1)-(S+m)
 =
 (d-3)(q-d)+4-d.
\]
This is nonnegative: the branch $d_*=4$ has $d\ge4$; the expression is $q-4$ for $d=4$, at least one for even $d\ge6$, and at least $d-2$ for odd $d\ge7$.

If $d=5$, every remaining even non-qubit dimension $e_i$ contributes at
least five to $S$.  Since $S+m$ is $2q-8$ or $2q-9$, we again obtain
$\varepsilon\le C-1$.

It remains to verify $X\in\mathcal A(D-1)$ when only one ordinary
non-qubit dimension $D$ remains.  If $d\ne5$, then $L=(q-5)/2$ and a
deleted midpoint would give $D=2q-9$.  The bound $D\le q-1$ leaves only
$q=7$ and $D=5$, for which $D-1=4$ is retained.  If $d=5$, then
$L=(q-7)/2$ and a deleted midpoint gives $D=2q-13$.  Together with
$5\le D\le q-1$, this leaves only $q=11$, $d=5$, and $D=9$.
In that case interchange the roles of the dimensions five and nine.  For the
remaining dimension five, $D-1=4$, and
$\mathcal A(4)=\{0,1,2\}$ has no deleted point.  Now apply
\Cref{lem:aligned-allocation} to complete the split.
\end{proof}

\end{document}